\tikzstyle{line} = [draw, color=black, -latex]
\newcommand{\leaveOut}[1]{}
\begin{document}

\title{The Polynomial Complexity of Vector Addition Systems with States}

\author{
Florian Zuleger\\
zuleger@forsyte.tuwien.ac.at
}
\institute{TU Wien}
\maketitle

\newcommand{\ctree}{\ensuremath{T}}
\newcommand{\vExp}{\ensuremath{\mathtt{vExp}}}
\newcommand{\tExp}{\ensuremath{\mathtt{tExp}}}
\newcommand{\nobound}{\ensuremath{\infty}}
\newcommand{\node}{\ensuremath{\eta}}
\newcommand{\nats}{\ensuremath{\mathbb{N}}}
\newcommand{\possibleBoundSet}{\ensuremath{\mathit{RelevantLayers}}}
\newcommand{\coeff}{\ensuremath{w}}
\newcommand{\instances}{\ensuremath{\mathtt{instance}}}
\newcommand{\tbound}{\ensuremath{\mathtt{tbound}}}
\newcommand{\vbound}{\ensuremath{\mathtt{vbound}}}
\newcommand{\configuration}{\ensuremath{\kappa}}
\newcommand{\indeX}{\ensuremath{k}}
\newcommand{\layer}{\ensuremath{l}}
\newcommand{\layerF}{\ensuremath{\mathtt{layer}}}
\newcommand{\ppath}{\ensuremath{\sigma}}
\newcommand{\ppathLayer}{\ensuremath{\tau}}
\newcommand{\ppathComplete}{\ensuremath{\rho}}
\newcommand{\wit}{\ensuremath{\alpha}}
\newcommand{\witLayer}{\ensuremath{\beta}}
\newcommand{\witComplete}{\ensuremath{\gamma}}
\newcommand{\cycleF}{\ensuremath{\mathtt{cycle}}}
\newcommand{\NoC}{\ensuremath{d}}
\newcommand{\boundDom}{\ensuremath{\mathtt{varDom}}}
\newcommand{\boundDim}{\ensuremath{\mathtt{varDim}}}
\newcommand{\pot}{\ensuremath{\mathtt{pot}}}
\newcommand{\inv}{\ensuremath{\mathtt{inv}}}
\newcommand{\Span}{\ensuremath{\mathtt{span}}}
\newcommand{\varsF}{\ensuremath{\mathtt{Vars}}}
\newcommand{\partitionF}{\ensuremath{\mathtt{node}}}
\newcommand{\unboundedTransitions}{\ensuremath{U}}
\newcommand{\boundedTransitions}{\ensuremath{B}}
\newcommand{\decreasingTransitions}{\ensuremath{R}}
\newcommand{\initBound}{\ensuremath{\mathtt{init}}}
\newcommand{\increase}{\ensuremath{\mathtt{inc}}}
\newcommand{\varsum}{\ensuremath{\mathtt{varsum}}}
\newcommand{\Root}{\ensuremath{\iota}}
\newcommand{\maxentry}{\ensuremath{a}}
\newcommand{\extended}{\ensuremath{\mathit{ext}}}
\newcommand{\extF}{\ensuremath{\mathtt{ext}}}
\newcommand{\length}{\ensuremath{\mathit{length}}}

\newcommand{\field}{\ensuremath{\mathbb{S}}}

\newcommand{\vass}{\ensuremath{\mathcal{V}}}
\newcommand{\vassAlt}{\ensuremath{\mathcal{W}}}
\newcommand{\vassF}{\ensuremath{\mathtt{VASS}}}
\newcommand{\vars}{\ensuremath{\mathit{Var}}}
\newcommand{\var}{\ensuremath{x}}
\newcommand{\state}{\ensuremath{s}}
\newcommand{\states}{\mathit{St}}
\newcommand{\transitions}{\mathit{Trns}}
\newcommand{\edges}{\ensuremath{E}}
\newcommand{\update}{\ensuremath{d}}
\newcommand{\updates}{\ensuremath{D}}
\newcommand{\paath}{\ensuremath{\pi}}
\newcommand{\trace}{\ensuremath{\zeta}}
\newcommand{\val}{\ensuremath{\nu}}
\newcommand{\configs}{\ensuremath{\mathit{Cfg}}}
\newcommand{\config}{\ensuremath{\sigma}}
\newcommand{\dimension}{\ensuremath{n}}
\newcommand{\dimOp}{\ensuremath{\mathit{dim}}}
\newcommand{\Val}{\ensuremath{\mathit{Val}}}
\newcommand{\lex}{\ensuremath{\mathit{lex}}}
\newcommand{\cycle}{\ensuremath{C}}
\newcommand{\cycleAlt}{\ensuremath{L}}
\newcommand{\multicycle}{\ensuremath{M}}
\newcommand{\zeroVec}{\ensuremath{\mathbf{0}}}
\newcommand{\oneVec}{\ensuremath{\mathbf{1}}}
\newcommand{\parameter}{\ensuremath{N}}
\newcommand{\valueSum}{\ensuremath{\mathit{val}}}
\newcommand{\counters}{\ensuremath{\mu}}
\newcommand{\countersAlt}{\ensuremath{\chi}}
\newcommand{\transition}{\ensuremath{t}}
\newcommand{\flowMatrix}{\ensuremath{F}}
\newcommand{\rankCoeff}{\ensuremath{r}}
\newcommand{\offsets}{\ensuremath{z}}
\newcommand{\character}{\ensuremath{\mathtt{char}}}
\newcommand{\identity}{\ensuremath{\mathbf{Id}}}
\newcommand{\scc}{\ensuremath{S}}
\newcommand{\sccs}{\ensuremath{\mathit{SCCs}}}
\newcommand{\decreasing}{\ensuremath{T}}
\newcommand{\coneOp}{\ensuremath{\mathit{cone}}}
\newcommand{\cone}{\ensuremath{C}}
\newcommand{\difference}{\ensuremath{e}}
\newcommand{\result}{\ensuremath{\mathit{result}}}
\newcommand{\conSysP}{\ensuremath{P}}
\newcommand{\conSysQ}{\ensuremath{Q}}
\newcommand{\conSysA}{\ensuremath{A}}
\newcommand{\conSysB}{\ensuremath{B}}
\newcommand{\conSysC}{\ensuremath{C}}
\newcommand{\conSysD}{\ensuremath{D}}
\newcommand{\conSysI}{\ensuremath{\mathit{I}}}
\newcommand{\conSysII}{\ensuremath{\mathit{II}}}
\newcommand{\timeSAT}{\ensuremath{Z}}
\newcommand{\poly}{\ensuremath{\mathit{poly}}}
\newcommand{\bits}{\ensuremath{\mathit{L}}}
\newcommand{\timeLP}{\ensuremath{\timeSAT'}}
\newcommand{\complexity}{\ensuremath{\mathit{comp}}}
\newcommand{\uInv}{\ensuremath{\mathit{uBound}}}
\newcommand{\compBound}{\ensuremath{\mathit{cBound}}}
\newcommand{\expression}{\ensuremath{\mathit{exp}}}
\newcommand{\cycleVal}{\ensuremath{v}}
\newcommand{\nonPosTrans}{\ensuremath{K}}
\newcommand{\constant}{\ensuremath{p}}
\newcommand{\constantAlt}{\ensuremath{q}}
\newcommand{\parameterizedLP}{\ensuremath{C}}
\newcommand{\rationalLP}{\ensuremath{C}}
\newcommand{\optimal}{\ensuremath{c}}
\newcommand{\denominator}{\ensuremath{m}}
\newcommand{\emptysequence}{\ensuremath{\epsilon}}
\newcommand{\exampleCS}{\ensuremath{\mathit{run}}}
\newcommand{\exampleProg}{\ensuremath{\mathit{exp}}}

\newcommand{\rank}{\ensuremath{\mathit{rank}}}
\newcommand{\qrank}{\ensuremath{\mathit{qrank}}}
\newcommand{\affineRank}{\ensuremath{\mathit{rank}}}
\newcommand{\sumRank}{\ensuremath{\mathit{sum}}}
\newcommand{\sccRank}{\ensuremath{\mathit{combine}}}
\newcommand{\rankProc}{\ensuremath{\mathit{ranking}}}
\newcommand{\domain}{\ensuremath{W}}
\newcommand{\domainElement}{\ensuremath{a}}

\newcommand{\norm}[1]{\left\lVert #1 \right\rVert}

\begin{abstract}
Vector addition systems are an important model in theoretical computer science and have been used in a variety of areas.
In this paper, we consider vector addition systems with states over a parameterized initial configuration.
For these systems, we are interested in the standard notion of computational time complexity, i.e., we want to understand the length of the longest trace for a fixed vector addition system with states depending on the size of the initial configuration.
We show that the asymptotic complexity of a given vector addition system with states is either $\Theta(N^k)$ for some computable integer $k$, where $N$ is the size of the initial configuration, or at least exponential.
We further show that $k$ can be computed in polynomial time in the size of the considered vector addition system.
Finally, we show that $1 \le k \le 2^\dimension$, where $\dimension$ is the dimension of the considered vector addition system.
\end{abstract}

\section{Introduction}
\label{sec:intro}

Vector addition systems (VASs)~\cite{journals/jcss/KarpM69}, which are equivalent to Petri nets, are a popular model for the analysis of parallel processes~\cite{journals/eik/EsparzaN94}.
Vector addition systems with states (VASSs)~\cite{journals/tcs/HopcroftP79} are an extension of VASs with a finite control and are a popular model for the analysis of concurrent systems, because the finite control can for example be used to model shared global memory~\cite{journals/toplas/0001KW14}.
In this paper, we consider VASSs over a parameterized initial configuration.
For these systems, we are interested in the standard notion of computational time complexity, i.e., we want to understand the length of the longest execution for a fixed VASS depending on the size of the initial configuration.
VASSs over a parameterized initial configuration naturally arise in two areas:
1) \emph{The parameterized verification problem.}
For concurrent systems the number of system processes is often not known in advance, and thus the system is designed such that a template process can be instantiated an arbitrary number of times.
The problem of analyzing the concurrent system for all possible system sizes is a common theme in the literature~\cite{journals/jacm/GermanS92,journals/tcs/FinkelGRB06,conf/lata/AbdullaDB09,conf/fmcad/JohnKSVW13,journals/sigact/BloemJKKRVW16,conf/lpar/AminofRZ15,conf/icalp/AminofRZS15}.
2) \emph{Automated complexity analysis of programs.}
VASSs (and generalizations) have been used as backend in program analysis tools for automated complexity analysis~\cite{conf/cav/SinnZV14,conf/fmcad/SinnZV15,journals/jar/SinnZV17}.
The VASS considered by these tools are naturally parameterized over the initial configuration,
modelling the dependency of the program complexity on the program input.
The cited papers have proposed practical techniques but did not give complete algorithms.

Two recent papers have considered the computational time complexity of VASSs over a parameterized initial configuration.
\cite{conf/icalp/Leroux18} presents a PTIME procedure for deciding whether a VASS is polynomial or at least exponential, but does not give a precise analysis in case of polynomial complexity.
\cite{conf/lics/BrazdilCK0VZ18} establishes the precise asymptotic complexity for the special case of VASSs whose configurations are linearly bounded in the size of the initial configuration.
In this paper, we generalize both results and fully characterize the asymptotic behaviour of VASSs with polynomial complexity:
We show that the asymptotic complexity of a given VASS is either $\Theta(N^k)$ for some computable integer $k$, where $N$ is the size of the initial configuration, or at least exponential.
We further show that $k$ can be computed in PTIME in the size of the considered VASS.
Finally, we show that $1 \le k \le 2^\dimension$, where $\dimension$ is the dimension of the considered VASS.

\vspace{-0.3cm}
\subsection{Overview and Illustration of Results}

We discuss our approach on the VASS $\vass_\exampleCS$, stated in Figure~\ref{fig:intro}, which will serve as running example.
The VASS has dimension $3$ (i.e., the vectors annotating the transitions have dimension $3$) and four states $\state_1,\state_2,\state_3,\state_4$.
In this paper we will always represent vectors using a set of variables $\vars$, whose cardinality equals the dimension of the VASS.
For $\vass_\exampleCS$ we choose $\vars = \{x,y,z\}$ and use $x,y,z$ as indices for the first, second and third component of 3-dimensional vectors.
The configurations of a VASS are pairs of states and  valuations of the variables to non-negative integers.
A step of a VASS moves along a transition from the current state to a successor state, and adds the vector labelling the transition to the current valuation;
a step can only be taken if the resulting valuation is non-negative.
For the computational time complexity analysis of VASSs, we consider traces (sequences of steps) whose initial configurations consist of a valuation whose maximal value is bounded by $N$ (the parameter used for bounding the size of the initial configuration).
The computational time complexity is then the length of the longest trace whose initial configuration is bounded by $N$.
For ease of exposition, we will in this paper only consider VASSs whose control-flow graph is \emph{connected}. (For the general case, we remark that one needs to decompose a VASS into its strongly-connected components (SCCs), which can then be analyzed in isolation, following the DAG-order of the SCC decomposition; for this, one slightly needs to generalize the analysis in this paper to initial configurations with values $\Theta(N^{k_\var})$ for every variable $\var \in \vars$, where $k_\var \in \mathbb{Z}$.)
For ease of exposition, we further consider traces over arbitrary initial states (instead of some fixed initial state); this is justified because for a fixed initial state one can always restrict the control-flow graph
to the reachable states, and then the two options result in the same notion of computational complexity (up to a constant offset, which is not relevant for our asymptotic analysis).

In order to analyze the computational time complexity of a considered VASS,
our approach computes \emph{variable bounds} and \emph{transition bounds}.
A variable bound is the maximal value of a variable reachable by any trace whose initial configuration is bounded by $N$.
A transition bound is the maximal number of times a transition appears in any trace whose initial configuration is bounded by $N$.
For $\vass_\exampleCS$, our approach establishes the linear variable bound $\Theta(N)$ for $x$ and $y$, and the quadratic bound $\Theta(N^2)$ for $z$.
We note that because the variable bound of $z$ is quadratic and not linear, $\vass_\exampleCS$ cannot be analyzed by the procedure of~\cite{conf/lics/BrazdilCK0VZ18}.
Our approach establishes the bound $\Theta(N)$ for the transitions $\state_1\rightarrow \state_3$ and $\state_4 \rightarrow \state_2$, the bound $\Theta(N^2)$ for transitions $\state_1\rightarrow \state_2$, $\state_2\rightarrow \state_1$, $\state_3\rightarrow \state_4$, $\state_4\rightarrow \state_3$, and the bound $\Theta(N^3)$ for all self-loops.
The computational complexity of $\vass_\exampleCS$ is then the maximum of all transition bounds, i.e., $\Theta(N^3)$.
In general, our main algorithm (Algorithm~\ref{alg:algorithm} presented in Section~\ref{sec:algorithm}) either establishes that the VASS under analysis has at least exponential complexity or computes asymptotically precise variable and transition bounds $\Theta(N^k)$, with $k$ computable in PTIME and $1 \le k \le 2^\dimension$, where $\dimension$ is the dimension of the considered VASS.
We note that our upper bound $2^\dimension$ also improves the analysis of~\cite{conf/icalp/Leroux18}, which reports an exponential dependence on the number of transitions (and not only on the dimension).

We further state a family $\vass_\dimension$ of VASSs, which illustrate that $k$ can indeed be exponential in the dimension (the example can be skipped on first reading).
$\vass_\dimension$ uses variables $\var_{i,j}$ and consists of states $\state_{i,j}$, for $1 \le i \le \dimension$ and $j=1,2$.
We note that $\vass_\dimension$ has dimension $2\dimension$.
$\vass_\dimension$ consists of the transitions
\vspace{-0.2cm}
\begin{itemize}
  \item $\state_{i,1} \xrightarrow{\update} \state_{i,2}$, for $1 \le i \le \dimension$, with $\update(\var_{i,1}) = -1$ and $\update(\var) = 0$ for all $\var \neq \var_{i,1}$,
  \item $\state_{i,2}\xrightarrow{\update} \state_{i,1}$, for $1 \le i \le \dimension$, with $\update(\var) = 0$ for all $\var$,
  \item $\state_{i,1}\xrightarrow{\update} \state_{i,1}$, for $1 \le i \le \dimension$, with $\update(\var_{i,1}) = -1$, $\update(\var_{i,2}) = 1$,
      $\update(\var_{i+1,1}) =\update(\var_{i+1,2}) =1$ in case $i < \dimension$, and $\update(\var) = 0$ for all other $\var$,
  \item $\state_{i,2}\xrightarrow{\update} \state_{i,2}$, for $1 \le i \le \dimension$, with $\update(\var_{i,1}) = 1$, $\update(\var_{i,2}) = -1$, and $\update(\var) = 0$ for all other $\var$,
  \item $\state_{i,1}\xrightarrow{\update} \state_{i+1,1}$, for $1 \le i < \dimension$, with $\update(\var_{i,1}) = -1$ and $\update(\var) = 0$ for all $\var \neq \var_{i,1}$,
  \item $\state_{i+1,2}\xrightarrow{\update} \state_{i,2}$, for $1 \le i < \dimension$, with $\update(\var) = 0$ for all $\var$.
\end{itemize}
\vspace{-0.2cm}
$\vass_\exampleProg$ in Figure~\ref{fig:intro} depicts $\vass_\dimension$ for $\dimension=3$, where the vector components are stated in the order $\var_{1,1},\var_{1,2},\var_{2,1},\var_{2,2},\var_{3,1},\var_{3,2}$.
It is not hard to verify for all $1 \le i \le \dimension$ that $\Theta(N^{2^{i-1}})$ is the precise asymptotic variable bound for $\var_{i,1}$ and $\var_{i,2}$, that
$\state_{i,1} \rightarrow \state_{i,2}$, $\state_{i,2} \rightarrow \state_{i,1}$ and $\state_{i,1} \rightarrow \state_{i+1,1}$, $\state_{i+1,2} \rightarrow \state_{i,2}$ in case $i < \dimension$, and that $\Theta(N^{2^i})$ is the precise asymptotic transition bound for $\state_{i,1} \rightarrow \state_{i,1}$, $\state_{i,2} \rightarrow \state_{i,2}$ (Algorithm~\ref{alg:algorithm} can be used to find these bounds).

\begin{figure}[t]
\begin{tabular}{l|l}
\begin{minipage}{5cm}
\begin{tikzpicture}[scale=0.4, node distance = 1cm, auto]
    \node (t1) [xshift=2cm] {$\state_1$};
    \node (t2) [below of=t1,node distance=3cm] {$\state_2$};
    \node (t3) [right of=t1,node distance=2cm] {$\state_3$};
    \node (t4) [below of=t3,node distance=3cm] {$\state_4$};
    \path(t1) edge [line,bend left] node [left,xshift=0.1cm]
    {$\begin{pmatrix}
    0 \\
    0 \\
    -1
    \end{pmatrix}$}(t2)
    (t2) edge [loop below,  every loop/.append style={looseness= 20}] node [left]
    {$\begin{pmatrix}
    1 \\
    -1\\
    1
    \end{pmatrix}$}(t2)
    (t2) edge [line,bend left] node [left]
    {$\begin{pmatrix}
    0 \\
    0 \\
    -1
    \end{pmatrix}$} (t1)
    (t1) edge [loop above,  every loop/.append style={looseness= 20}] node [left]
    {$\begin{pmatrix}
    -1\\
    1 \\
    -1
    \end{pmatrix}$} (t1)
    (t3) edge [line,bend left] node [left,xshift=0.1cm]
    {$\begin{pmatrix}
    0 \\
    0 \\
    -1
    \end{pmatrix}$}(t4)
    (t4) edge [loop below,  every loop/.append style={looseness= 20}] node [right]
    {$\begin{pmatrix}
    1 \\
    -1 \\
    -1
    \end{pmatrix}$}
    (t4) edge [line,bend left] node [left,xshift=0.1cm]
    {$\begin{pmatrix}
    0 \\
    0 \\
    -1
    \end{pmatrix}$} (t3)
    (t3) edge [loop above,  every loop/.append style={looseness= 20}] node [right]
    {$\begin{pmatrix}
    -1 \\
    1 \\
    1
    \end{pmatrix}$}(t4)
    (t1) edge [line,bend left] node [above]
    {$\begin{pmatrix}
    -1 \\
    0 \\
    0
    \end{pmatrix}$} (t3)
    (t4) edge [line,bend left] node [below]
    {$\begin{pmatrix}
    0 \\
    0 \\
    0
    \end{pmatrix}$} (t2)
    ;
   \end{tikzpicture}
   \end{minipage}
&
\begin{minipage}{6cm}
\begin{tikzpicture}[scale=0.4, node distance = 1cm, auto]
    \node (t1) [xshift=2cm] {$\state_{1,1}$};
    \node (t2) [below of=t1,node distance=3.5cm] {$\state_{1,2}$};
    \node (t3) [right of=t1,node distance=2cm] {$\state_{2,1}$};
    \node (t4) [below of=t3,node distance=3.5cm] {$\state_{2,2}$};
    \node (t5) [right of=t3,node distance=2cm] {$\state_{3,1}$};
    \node (t6) [below of=t5,node distance=3.5cm] {$\state_{3,2}$};
    \path(t1) edge [line,bend left] node [left,xshift=0.05cm]
    {$\begin{pmatrix}
    -1\\
    0 \\
    0 \\
    0 \\
    0 \\
    0
    \end{pmatrix}$}(t2)
    (t2) edge [loop below,  every loop/.append style={looseness= 20}] node [below]
    {$\begin{pmatrix}
    1 \\
    -1\\
    0 \\
    0 \\
    0 \\
    0
    \end{pmatrix}$}(t2)
    (t2) edge [line,bend left] node [left]
    {$\begin{pmatrix}
    0 \\
    0 \\
    0 \\
    0 \\
    0 \\
    0
    \end{pmatrix}$} (t1)
    (t1) edge [loop above,  every loop/.append style={looseness= 20}] node [above]
    {$\begin{pmatrix}
    -1\\
    1 \\
    1 \\
    1 \\
    0 \\
    0
    \end{pmatrix}$} (t1)
    (t3) edge [line,bend left] node [left,xshift=0.05cm]
    {$\begin{pmatrix}
    0\\
    0 \\
    -1 \\
    0 \\
    0 \\
    0
    \end{pmatrix}$}(t4)
    (t4) edge [loop below,  every loop/.append style={looseness= 20}] node [below]
    {$\begin{pmatrix}
    0\\
    0 \\
    1 \\
    -1 \\
    0 \\
    0
    \end{pmatrix}$}
    (t4) edge [line,bend left] node [left]
    {$\begin{pmatrix}
    0 \\
    0 \\
    0 \\
    0 \\
    0 \\
    0
    \end{pmatrix}$} (t3)
    (t3) edge [loop above,  every loop/.append style={looseness= 20}] node [above]
    {$\begin{pmatrix}
    0 \\
    0 \\
    -1\\
    1 \\
    1 \\
    1
    \end{pmatrix}$}(t4)
    (t1) edge [line,bend left] node [above]
    {$\begin{pmatrix}
    -1 \\
    0 \\
    0 \\
    0 \\
    0 \\
    0
    \end{pmatrix}$} (t3)
    (t4) edge [line,bend left] node [below]
    {$\begin{pmatrix}
    0 \\
    0 \\
    0 \\
    0 \\
    0 \\
    0
    \end{pmatrix}$} (t2)
    (t5) edge [line,bend left] node [left,xshift=0.05cm]
    {$\begin{pmatrix}
    0\\
    0 \\
    0 \\
    0 \\
    -1 \\
    0
    \end{pmatrix}$}(t6)
    (t6) edge [loop below,  every loop/.append style={looseness= 20}] node [below]
    {$\begin{pmatrix}
    0\\
    0 \\
    0 \\
    0 \\
    1 \\
    -1
    \end{pmatrix}$}
    (t6) edge [line,bend left] node [left]
    {$\begin{pmatrix}
    0 \\
    0 \\
    0 \\
    0 \\
    0 \\
    0
    \end{pmatrix}$} (t5)
    (t5) edge [loop above,  every loop/.append style={looseness= 20}] node [above]
    {$\begin{pmatrix}
    0 \\
    0 \\
    0 \\
    0 \\
    -1 \\
    1
    \end{pmatrix}$}(t6)
    (t3) edge [line,bend left] node [above]
    {$\begin{pmatrix}
    0 \\
    0 \\
    -1 \\
    0 \\
    0 \\
    0
    \end{pmatrix}$} (t5)
    (t6) edge [line,bend left] node [below]
    {$\begin{pmatrix}
    0 \\
    0 \\
    0 \\
    0 \\
    0 \\
    0
    \end{pmatrix}$} (t4)
    ;
   \end{tikzpicture}
   \end{minipage}

   \end{tabular}
   \caption{VASS $\vass_\exampleCS$ (left) and VASS $\vass_\exampleProg$ (right)}
   \vspace{-0.5cm}
   \label{fig:intro}
\end{figure}

\vspace{-0.3cm}
\subsection{Related Work}
A celebrated result on VASs is the EXPSPACE-completeness~\cite{Lipton76,journals/tcs/Rackoff78} of the boundedness problem.
Deciding termination for a VAS with a \emph{fixed} initial configuration can be reduced to the boundedness problem, and is therefore also EXPSPACE-complete;
this also applies to VASSs, whose termination problem can be reduced to the VAS termination problem.
In contrast, deciding the termination of VASSs for \emph{all} initial configurations is in PTIME.
It is not hard to see that non-termination over all initial configurations is equivalent to the existence of non-negative cycles (e.g., using
Dickson's Lemma~\cite{Dickson1913}).
Kosaraju and Sullivan have given a PTIME procedure for the detection of zero-cycles~\cite{conf/stoc/KosarajuS88}, which can be easily be adapted to non-negative cycles.
The existence of zero-cycles is decided by the repeated use of a constraint system in order to remove transitions that can definitely not be part of a zero-cycle.
The algorithm of Kosaraju and Sullivan forms the basis for both cited papers~\cite{conf/icalp/Leroux18,conf/lics/BrazdilCK0VZ18}, as well as the present paper.

A line of work~\cite{conf/cav/SinnZV14,conf/fmcad/SinnZV15,journals/jar/SinnZV17}
has used VASSs (and their generalizations) as backends for the automated complexity analysis of C programs.
These algorithms have been designed for practical applicability, but are not complete and no theoretical analysis of their precision has been given.
We point out, however, that these papers have inspired the Bound Proof Principle in Section~\ref{sec:upper-bound}.

\section{Preliminaries}
\label{sec:preliminaries}

\paragraph{Basic Notation.}
For a set $X$ we denote by $|X|$ the number of elements of $X$.
Let $\field$ be either $\mathbb{N}$ or $\mathbb{Z}$.
We write $\field^I$ for the set of vectors over $\field$ indexed by some set $I$.
We write $\field^{I\times J}$ for the set of matrices over $\field$ indexed by $I$ and $J$.
We write $\oneVec$ for the vector which has entry $1$ in every component.
Given $a \in \field^I$, we write $a(i) \in \field$ for the entry at line $i \in I$ of $a$, and $\norm{a} = \max_{i \in I} |a(i)|$ for the maximum absolute value of $a$.
Given $a \in \field^I$ and $J \subseteq I$, we denote by $a|_J \in \field^J$ the restriction of $a$ to $J$, i.e., we set $a|_J(i) = a(i)$ for all $i \in J$.
Given $A \in \field^{I\times J}$, we write $A(j)$ for the vector in column $j \in J$ of $A$ and $A(i,j) \in \field$ for the entry in column $i \in I$ and row $j \in J$ of $A$.
Given $A \in \field^{I \times J}$ and $K \subseteq J$, we denote by $A|_K \in \field^{I \times K}$ the restriction of $A$ to $K$, i.e., we set $A|_K(i,j) = A(i,j)$ for all $(i,j) \in I \times K$.
We write $\identity$ for the square matrix which has entries $1$ on the diagonal and $0$ otherwise.
Given $a,b \in \field^I$ we write $a + b \in \field^I$ for component-wise addition, $c \cdot a \in \field^I$ for multiplying every component of $a$ by some $c \in \field$ and $a \ge b$ for component-wise comparison.
Given $A \in \field^{I\times J}$, $B \in \field^{J \times K}$ and $x \in \field^J$, we write $AB \in \field^{I\times K}$ for the standard matrix multiplication,
$Ax \in \field^I$ for the standard matrix-vector multiplication, $A^T \in \field^{J \times I}$ for the transposed matrix of $A$ and $x^T \in \field^{1 \times J}$ for the transposed vector of $x$.

\vspace{-0.2cm}
\paragraph{Vector Addition System with States (VASS).}
Let $\vars$ be a finite set of variables.
A vector addition system with states (VASS) $\vass = (\states(\vass),\transitions(\vass))$ consists of a finite set of \emph{states} $\states(\vass)$ and a finite set of \emph{transitions} $\transitions(\vass)$, where $\transitions(\vass) \subseteq \states(\vass) \times \mathbb{Z}^\vars \times \states(\vass)$;
we call $\dimension = |\vars|$ the \emph{dimension} of $\vass$. We write $\state_1 \xrightarrow{\update} \state_2$ to denote a transition $(\state_1,\update,\state_2) \in \transitions(\vass)$;
we call the vector $\update$ the \emph{update} of transition $\state_1 \xrightarrow{\update} \state_2$.
A \emph{path} $\paath$ of $\vass$ is a finite sequence $\state_0 \xrightarrow{\update_1} \state_1 \xrightarrow{\update_2} \cdots \state_k$ with $\state_i \xrightarrow{\update_{i+1}} \state_{i+1} \in \transitions(\vass)$ for all $0 \le i <k$.
We define the \emph{length} of $\paath$ by $\length(\paath) = k$ and the \emph{value} of $\paath$ by $\valueSum(\paath) = \sum_{i \in [1,k]} \update_i$.
Let $\instances(\paath,\transition)$ be the number of times $\paath$ contains the transition $\transition$, i.e., the number of indices $i$ such that $\transition = \state_i \xrightarrow{\update_i} \state_{i+1}$.
We remark that $\length(\paath) = \sum_{\transition \in \transitions(\vass)} \instances(\paath,\transition)$ for every path $\paath$ of $\vass$.
Given a finite path $\paath_1$ and a path $\paath_2$ such that the last state of $\paath_1$ equals the first state of $\paath_2$,
we write $\paath = \paath_1 \paath_2$ for the path obtained by joining the last state of $\paath_1$ with the first state of $\paath_2$;
we call $\paath$ the \emph{concatenation} of $\paath_1$ and $\paath_2$, and $\paath_1 \paath_2$ a \emph{decomposition} of $\paath$.
We say $\paath'$ is a \emph{sub-path} of $\paath$, if there is a decomposition $\paath = \paath_1 \paath' \paath_2$ for some $\paath_1, \paath_2$.
A \emph{cycle} is a path that has the same start- and end-state.
A \emph{multi-cycle} is a finite set of cycles.
The value $\valueSum(\multicycle)$ of a multi-cycle $\multicycle$ is the sum of the values of its cycles.
$\vass$ is \emph{connected}, if for all $\state,\state' \in \states(\vass)$ there is a path from $\state$ to $\state'$.
VASS $\vass'$ is a \emph{sub-VASS} of $\vass$, if $\states(\vass') \subseteq \states(\vass)$ and $\transitions(\vass') \subseteq \transitions(\vass)$.
Sub-VASSs $\vass_1$ and $\vass_2$ are \emph{disjoint}, if $\states(\vass_1) \cap \states(\vass_2) = \emptyset$.
A \emph{strongly-connected component (SCC)} of a VASS $\vass$ is a maximal sub-VASS $\scc$ of $\vass$ such that $\scc$ is connected and $\transitions(\scc) \neq \emptyset$.

Let $\vass$ be a VASS.
The set of \emph{valuations} $\Val(\vass) = \mathbb{N}^\vars$ consists of $\vars$-vectors over the natural numbers (we assume $\mathbb{N}$ includes $0$).
The set of \emph{configurations} $\configs(\vass) = \states(\vass) \times \Val(\vass)$ consists of pairs of states and valuations.
A \emph{step} is a triple $((\state_1,\val_1), \update, (\state_2,\val_2)) \in \configs(\vass) \times \mathbb{Z}^{\dimOp(\vass)} \times \configs(\vass)$ such that $\val_2 = \val_1 + \update$ and $\state_1 \xrightarrow{\update} \state_2 \in \transitions(\vass)$.
We write $(\state_1,\val_1) \xrightarrow{\update} (\state_2,\val_2)$ to denote a step $((\state_1,\val_1), \update, (\state_2,\val_2))$ of $\vass$.
A \emph{trace} of $\vass$ is a finite sequence $\trace = (\state_0,\val_0) \xrightarrow{\update_1} (\state_1,\val_1) \xrightarrow{\update_2} \cdots (\state_k,\val_k)$ of steps.
We lift the notions of length and instances from paths to traces in the obvious way:
we consider the path $\paath = \state_0 \xrightarrow{\update_1} \state_1 \xrightarrow{\update_2} \cdots \state_k$ that consists of the transitions used by $\trace$, and set $\length(\trace) := \length(\paath)$ and
$\instances(\trace,\transition) = \instances(\paath,\transition)$, for all $\transition \in \transitions(\vass)$.
We denote by $\initBound(\trace) = \norm{\val_0}$ the maximum absolute value of the starting valuation $\val_0$ of $\trace$.
We say that $\trace$ \emph{reaches} a valuation $\val$, if $\val = \val_k$.
The \emph{complexity} of $\vass$ is the function $\complexity_\vass(N) = \sup_{\text{trace } \trace \text{ of } \vass, \initBound(\trace) \le N} \length(\trace)$, which returns for every $N \ge 0$ the supremum over the lengths of the traces $\trace$ with $\initBound(\trace) \le N$.
The \emph{variable bound} of a variable $\var \in \vars$ is the function $\vbound_\var(N) = \sup_{\text{trace } \trace \text{ of } \vass, \initBound(\trace) \le N, \trace \text{ reaches valuation } \val }  \val(\var)$, which returns for every $N \ge 0$ the supremum over the
the values of $\var$ reachable by traces $\trace$ with $\initBound(\trace) \le N$.
The \emph{transition bound} of a transition $\transition \in \transitions(\vass)$ is the function $\tbound_\transition(N) = \sup_{\text{trace } \trace \text{ of } \vass, \initBound(\trace) \le N} \instances(\trace,\transition)$, which returns for every $N \ge 0$ the supremum over the number of instances of $\transition$ in traces $\trace$ with $\initBound(\trace) \le N$.

\vspace{-0.2cm}
\paragraph{Rooted Tree.}
A \emph{rooted tree} is a connected undirected acyclic graph in which one node has been designated as the root.
We will usually denote the root by $\Root$.
We note that for every node $\node$ in a rooted tree there is a unique path of $\node$ to the root.
The \emph{parent} of a node $\node \neq \Root$ is the node connected to $\node$ on the path to the root.
Node $\node$ is a \emph{child} of a node $\node'$, if $\node'$ is the parent of $\node$.
$\node'$ is a \emph{descendent} of $\node$, if $\node$ lies on the path from $\node'$ to the root;
$\node'$ is a \emph{strict} descendent, if furthermore $\node \neq \node'$.
$\node$ is an \emph{ancestor} of $\node'$, if $\node'$ a descendent of $\node$; $\node$ is a \emph{strict} ancestor, if furthermore $\node \neq \node'$.
The \emph{distance} of a node $\node$ to the root, is the number of nodes $\neq\node$ on the path from $\node$ to the root.
We denote by $\layerF(\layer)$ the set of all nodes with the same distance $\layer$ to the root;
we remark that $\layerF(0) = \{\Root\}$.

\vspace{0.2cm}
All proofs are stated in the appendix.

\vspace{-0.2cm} 
\section{A Dichotomy Result}
\label{sec:farkas}

We will make use of the following matrices associated to a VASS throughout the paper:
Let $\vass$ be a VASS.
We define the \emph{update matrix} $\updates \in \mathbb{Z}^{\vars \times \transitions(\vass)}$ by setting $\updates(\transition) = \update$ for all transitions $\transition = (\state,\update,\state') \in \transitions(\vass)$.
We define the \emph{flow matrix} $\flowMatrix \in \mathbb{Z}^{\states(\vass) \times \transitions(\vass)}$ by setting $\flowMatrix(\state,\transition) = -1$, $\flowMatrix(\state',\transition) = 1$ for transitions $\transition = (\state,\update,\state')$ with $\state' \neq \state$, and $\flowMatrix(\state,\transition) = \flowMatrix(\state',\transition)= 0$ for  transitions $\transition = (\state,\update,\state')$ with $\state' = \state$;
in both cases we further set $\flowMatrix(\state'',\transition) = 0$ for all states $\state''$ with $\state'' \neq \state$ and $\state'' \neq \state'$.
We note that every column $\transition$ of $\flowMatrix$ either contains exactly one $-1$ and $1$ entry (in case the source and target of transition~$\transition$ are different) or only $0$ entries (in case the source and target of transition~$\transition$ are the same).

\vspace{-0.05cm}
\begin{example}
\label{ex:update-flow-matrices}
We state the update and flow matrix for $\vass_\exampleCS$ from Section~\ref{sec:intro}:\\
$\updates = \begin{pmatrix}
    -1 & 1 & -1 & 1 & 0 & 0 & 0 & 0 & -1 & 0\\
    1 & -1 & 1 & -1 & 0 & 0 & 0 & 0 & 0 & 0\\
    -1  & 1 & 1 & -1 & -1 & -1 & -1 & -1 & 0 & 0
\end{pmatrix}$,
$\flowMatrix = \begin{pmatrix}
    0 & 0 & 0 & 0 & 1 & -1 & 0 & 0 & -1 & 0\\
    0 & 0 & 0 & 0 & -1 & 1 & 0 & 0 & 0 & 1\\
    0 & 0 & 0 & 0 & 0 & 0 & 1 & -1 & 1 & 0\\
    0 & 0 & 0 & 0 & 0 & 0 & -1 & 1 & 0 & -1
\end{pmatrix}$,
with column order
$\state_1\rightarrow \state_1$, $\state_2\rightarrow \state_2$, $\state_3\rightarrow \state_3$, $\state_4\rightarrow \state_4$,
$\state_2\rightarrow \state_1$, $\state_1\rightarrow \state_2$, $\state_4\rightarrow \state_3$, $\state_3\rightarrow \state_4$, $\state_1\rightarrow \state_3$, $\state_4\rightarrow \state_2$
(from left to right) and row order $x,y,z$ for $\updates$ resp. $\state_1,\state_2,\state_3,\state_4$ for $\flowMatrix$ (from top to bottom).
\end{example}
\vspace{-0.05cm}

We now consider the constraint systems~($\conSysP$) and~($\conSysQ$), stated below, which have maximization objectives.
The constraint systems will be used by our main algorithm in Section~\ref{sec:algorithm}.
We observe that both constraint systems are always satisfiable (set all coefficients to zero) and that the solutions of both constraint systems are closed under addition.
Hence, the number of inequalities for which the maximization objective is satisfied is unique for optimal solutions of both constraint systems.
The maximization objectives can be implemented by suitable linear objective functions.
Hence, both constraint systems can be solved
in PTIME over the integers, because we can use linear programming over the rationales and then scale rational solutions to the integers by multiplying with the least common multiple of the denominators.

\vspace{0.7cm}
\begin{tabular}{|c|c|}
\hline
 {\begin{minipage}[c]{0.45\linewidth}
\vspace{0.2cm}
constraint system ($\conSysP$):

\vspace{0.2cm}
there exists $\counters \in \mathbb{Z}^{\transitions(\vass)}$ with
\begin{align}
  \updates \counters & \ge 0 \nonumber\\
  \counters & \ge 0 \nonumber\\
  \flowMatrix \counters & = 0 \nonumber
\end{align}

Maximization Objective:\\
Maximize the number of inequalities with $(\updates \counters)(\var) > 0$ and $\counters(\transition) > 0$
\vspace{0.2cm}
\end{minipage}}
  &
{\begin{minipage}[c]{0.5\linewidth}
\vspace{0.2cm}
constraint system ($\conSysQ$):

\vspace{0.2cm}
there exist
$\rankCoeff \in \mathbb{Z}^\vars,\offsets \in \mathbb{Z}^{\states(\vass)}$ with
\begin{align}
\rankCoeff & \ge  0 \nonumber\\
\offsets & \ge  0 \nonumber\\
\updates^T \rankCoeff + \flowMatrix^T \offsets & \le 0 \nonumber
\end{align}
Maximization Objective:\\
Maximize the number of inequalities with $\rankCoeff(\var) > 0$ and $(\updates^T \rankCoeff + \flowMatrix^T \offsets)(\transition) < 0$
\vspace{0.2cm}
\end{minipage}}\\
\hline
\end{tabular}
\vspace{0.7cm}

The solutions of~($\conSysP$) and~($\conSysQ$) are characterized by the following two lemmata:

\begin{lemma}[Cited from~\cite{conf/stoc/KosarajuS88}]
\label{lem:solution-is-multcycle}
$\counters \in \mathbb{Z}^{\transitions(\vass)}$ is a solution to constraint system~($\conSysP$) iff there exists a multi-cycle $\multicycle$ with $\valueSum(\multicycle) \ge 0$ and $\counters(\transition)$ instances of transition~$\transition$ for every $\transition \in \transitions(\vass)$.
\end{lemma}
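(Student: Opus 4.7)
The plan is to prove both directions of the equivalence independently, modelling $\counters$ as an assignment of multiplicities to the edges of the control-flow graph and reading $\flowMatrix \counters = 0$ as Kirchhoff-style flow conservation at every state.

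For the ``if'' direction, suppose a multi-cycle $\multicycle = \{\cycle_1, \dots, \cycle_m\}$ realizes $\counters(\transition)$ as the total number of instances of $\transition$ across the cycles. Let $\counters_i \in \mathbb{N}^{\transitions(\vass)}$ denote the characteristic multiplicity vector of $\cycle_i$, so that $\counters = \sum_i \counters_i$. Each $\counters_i$ satisfies $\flowMatrix \counters_i = 0$, because traversing a cycle enters every state exactly as often as it leaves it (self-loop columns of $\flowMatrix$ are $0$, so they trivially contribute nothing). Summing yields $\flowMatrix \counters = 0$. Non-negativity is automatic from the definition of multiplicities, and $\updates \counters = \sum_i \updates \counters_i = \sum_i \valueSum(\cycle_i) = \valueSum(\multicycle) \ge 0$ gives the first inequality of~($\conSysP$).

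For the ``only if'' direction, I would proceed by induction on the total mass $\sum_{\transition} \counters(\transition)$. The base case $\counters = 0$ is realized by the empty multi-cycle. For the inductive step, assuming $\counters \neq 0$, I would extract one cycle from $\counters$. If $\counters(\transition) > 0$ for some self-loop $\transition$, that self-loop is already a cycle; otherwise pick any transition $\transition_0 = \state_0 \xrightarrow{} \state_1$ with $\counters(\transition_0) > 0$ and build a walk greedily: at each intermediate state $\state_j$ reached, flow conservation $(\flowMatrix \counters)(\state_j) = 0$ together with positive incoming mass (accumulated along the walk so far) forces the existence of an outgoing transition with positive remaining $\counters$-mass, which extends the walk. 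Since $\states(\vass)$ is finite, some state must repeat, yielding a cycle $\cycle$. Let $\counters_\cycle$ be its characteristic vector; then $\counters' := \counters - \counters_\cycle \ge 0$, still satisfies $\flowMatrix \counters' = 0$, and has strictly smaller total mass. By induction, $\counters'$ decomposes into a multi-cycle $\multicycle'$ with $\valueSum(\multicycle') = \updates \counters'$; taking $\multicycle := \multicycle' \cup \{\cycle\}$ realizes $\counters$, and $\valueSum(\multicycle) = \updates \counters \ge 0$.

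The main obstacle is making the greedy walk argument precise in the presence of self-loops and ensuring that the available outgoing mass is really guaranteed by $\flowMatrix \counters = 0$ throughout the construction (one has to carefully track which units of incoming mass have already been consumed by the walk so far, and argue that at least one outgoing non-self-loop edge with remaining mass is still available, using the fact that the walk's own partial consumption of mass at state $\state_j$ is strictly smaller than the balanced in-/out-mass recorded by $\flowMatrix$). Everything else is bookkeeping: the inductive measure decreases strictly, and the cycle-by-cycle assembly directly produces the required multi-cycle realization of $\counters$.
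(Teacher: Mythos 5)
Your ``if'' direction coincides with the paper's: sum the characteristic vectors of the individual cycles, note that each is in the kernel of $\flowMatrix$, and read off $\updates\counters = \valueSum(\multicycle)$.

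Your ``only if'' direction, however, takes a genuinely different route. The paper forms the multi-graph with $\counters(\transition)$ parallel copies of each transition, observes that $\flowMatrix\counters = 0$ means this multi-graph is balanced (in-degree equals out-degree at every vertex), invokes the Eulerian-circuit theorem on each strongly connected component, and reads each Eulerian circuit back as a cycle of $\vass$. You instead do a strong induction on $\sum_\transition \counters(\transition)$, peeling off one cycle at a time by a greedy walk that always extends because balance at a freshly entered state $\state_j$ leaves at least one out-edge with positive residual mass. Both arguments are correct; the Eulerian route is shorter once the classical theorem is granted, while your extraction argument is more self-contained and makes explicit the bookkeeping that the Eulerian theorem hides. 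One point worth making explicit if you write this up: the induction hypothesis must \emph{not} include $\updates\counters' \ge 0$ (the residual vector $\counters' = \counters - \counters_\cycle$ may violate it); you implicitly handle this correctly by strengthening the inductive claim to ``$\counters'$ decomposes into a multi-cycle of value exactly $\updates\counters'$'' and only using $\updates\counters \ge 0$ at the very end, but that shift in the statement being proved should be stated. Your handling of self-loops (peel them off first; $\flowMatrix$ has zero columns there, so they play no role in the balance argument) is also correct and necessary given the paper's convention for $\flowMatrix$.
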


\begin{lemma}[Cited from~\cite{conf/lics/BrazdilCK0VZ18}\footnote{There is no explicit lemma with this statement in~\cite{conf/lics/BrazdilCK0VZ18}, however the lemma is implicit in the exposition of Section 4 in~\cite{conf/lics/BrazdilCK0VZ18}.
We further note that~\cite{conf/lics/BrazdilCK0VZ18} does not include the constraint $\offsets \ge 0$.
However, this difference is minor and was added in order to ensure that ranking functions always return non-negative values, which is more standard than the choice of~\cite{conf/lics/BrazdilCK0VZ18}.
A proof of the lemma can be found in the appendix.}]
\label{lem:affine-ranking}
Let $\rankCoeff,\offsets$ be a solution to constraint system ($\conSysQ$).
Let $\affineRank(\rankCoeff,\offsets): \configs(\vass) \rightarrow \mathbb{N}$ be the function defined by
$\affineRank(\rankCoeff,\offsets)(\state,\val) = \rankCoeff^T \val + \offsets(\state)$.
Then, $\affineRank(\rankCoeff,\offsets)$ is a \emph{quasi-ranking function} for $\vass$, i.e., we have
\begin{enumerate}
  \item for all $(\state,\val) \in \configs(\vass)$ that $\affineRank(\rankCoeff,\offsets)(\state,\val) \ge 0;$
  \item for all transitions $\transition = \state_1 \xrightarrow{\update} \state_2 \in \transitions(\vass)$ and valuations $\val_1,\val_2 \in \Val(\vass)$ with $\val_2 = \val_1 + \update$ that
      $\affineRank(\rankCoeff,\offsets)(\state_1,\val_1) \ge \affineRank(\rankCoeff,\offsets)(\state_2,\val_2)$;
      moreover, the inequality is strict for every $\transition$ with $(\updates^T \rankCoeff + \flowMatrix^T \offsets)(\transition) < 0$.
\end{enumerate}
\end{lemma}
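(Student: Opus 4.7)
The plan is to verify the two quasi-ranking properties by directly unpacking what the constraints of~($\conSysQ$) say in terms of the columns of $\updates$ and $\flowMatrix$.

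Property~(1) is essentially immediate from the sign conditions. Valuations live in $\Val(\vass) = \mathbb{N}^\vars$, so $\val \ge 0$; combined with $\rankCoeff \ge 0$, this gives $\rankCoeff^T \val \ge 0$, and $\offsets(\state) \ge 0$ by the constraint $\offsets \ge 0$. Hence $\affineRank(\rankCoeff,\offsets)(\state,\val) \ge 0$.

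Property~(2) reduces to a single algebraic identity: for every transition $\transition = \state_1 \xrightarrow{\update} \state_2$ and every pair of valuations $\val_1,\val_2$ with $\val_2 = \val_1 + \update$, I claim
\[
  \affineRank(\rankCoeff,\offsets)(\state_1,\val_1) - \affineRank(\rankCoeff,\offsets)(\state_2,\val_2) \;=\; -\bigl(\updates^T \rankCoeff + \flowMatrix^T \offsets\bigr)(\transition).
\]
To establish this, I expand the left-hand side as $\rankCoeff^T(\val_1 - \val_2) + \offsets(\state_1) - \offsets(\state_2) = -\rankCoeff^T \update + \offsets(\state_1) - \offsets(\state_2)$. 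On the right-hand side, the definition of $\updates$ gives $\updates(\transition) = \update$, hence $(\updates^T \rankCoeff)(\transition) = \rankCoeff^T \update$. The definition of $\flowMatrix$ places $-1$ in row $\state_1$ and $+1$ in row $\state_2$ of column $\transition$ (when $\state_1 \neq \state_2$), so $(\flowMatrix^T \offsets)(\transition) = \offsets(\state_2) - \offsets(\state_1)$; in the self-loop case $\state_1 = \state_2$, both sides of that equation are $0$, so the identity still holds uniformly.

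With the identity in hand, the third inequality of~($\conSysQ$) immediately yields $\affineRank(\rankCoeff,\offsets)(\state_1,\val_1) \ge \affineRank(\rankCoeff,\offsets)(\state_2,\val_2)$, and the inequality becomes strict precisely on those transitions $\transition$ for which $(\updates^T \rankCoeff + \flowMatrix^T \offsets)(\transition) < 0$, as required by the lemma. There is no genuine obstacle here — the only care needed is the case split $\state_1 = \state_2$ versus $\state_1 \neq \state_2$ when reading off the column of $\flowMatrix$, which is handled uniformly by the observation that self-loop columns of $\flowMatrix$ are zero.
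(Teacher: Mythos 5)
Your proof is correct and follows essentially the same route as the paper's: property (1) from the sign constraints together with $\val \in \mathbb{N}^\vars$, and property (2) from the identity $\affineRank(\rankCoeff,\offsets)(\state_1,\val_1) - \affineRank(\rankCoeff,\offsets)(\state_2,\val_2) = -(\updates^T\rankCoeff + \flowMatrix^T\offsets)(\transition)$, which the paper derives in the same expand-and-regroup way (only leaving the self-loop case implicit, which you spell out).
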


We now state a dichotomy between optimal solutions to constraint systems~($\conSysP$) and~($\conSysQ$),
which is obtained by an application of Farkas' Lemma.
This dichotomy is the main reason why we are able to compute the precise asymptotic complexity of VASSs with polynomial bounds.

\begin{lemma}
\label{lem:optimization-duality}
Let $\rankCoeff$ and $\offsets$ be an optimal solution to constraint system~($\conSysQ$) and let $\counters$ be an optimal solution to constraint system~($\conSysP$).
Then, for all variables $\var \in \vars$ we either have $\rankCoeff(\var) > 0$ or $(\updates \counters)(\var) \ge 1$, and for all transitions $\transition \in \transitions(\vass)$ we either have $(\updates^T \rankCoeff + \flowMatrix^T\offsets)(\transition) < 0$ or $\counters(\transition) \ge 1$.
\end{lemma}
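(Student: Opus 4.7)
The plan is to derive both halves of the dichotomy by Farkas-style duality together with an ``optimality bump'' argument. Fix an optimal $\counters$ for~($\conSysP$) and an optimal $(\rankCoeff,\offsets)$ for~($\conSysQ$), and consider first a variable $\var_0$ with $(\updates\counters)(\var_0) = 0$. I claim that the system obtained from~($\conSysP$) by appending the inequality $(\updates\counters')(\var_0) \ge 1$ is infeasible: any witness $\counters'$ (found over the rationals and then scaled to the integers) would itself be a feasible solution to~($\conSysP$), so $\counters + \counters'$ would be feasible and its set of strict inequalities would contain that of $\counters$ and additionally $(\updates(\counters+\counters'))(\var_0)>0$, contradicting optimality of $\counters$. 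Applying Farkas' Lemma to this infeasible system then yields multipliers $\rankCoeff' \ge 0$ (for the rows $\updates\counters \ge 0$) and an unrestricted $\offsets'$ (for the flow equalities $\flowMatrix\counters=0$) with $\updates^T\rankCoeff' + \flowMatrix^T\offsets' \le 0$ and $\rankCoeff'(\var_0) > 0$.

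To turn $(\rankCoeff',\offsets')$ into a genuine feasible solution of~($\conSysQ$) I must arrange $\offsets' \ge 0$. Here the key observation is that every column of $\flowMatrix$ sums to $0$ (each transition contributes $-1$ at its source and $+1$ at its target, or only $0$ entries for a self-loop), so $\flowMatrix^T\oneVec = 0$; hence $\offsets'$ may be translated by any integer multiple of $\oneVec$ without affecting $\updates^T\rankCoeff' + \flowMatrix^T\offsets'$, and a sufficiently large shift yields $\offsets' \ge 0$. Now $(\rankCoeff + \rankCoeff', \offsets + \offsets')$ is feasible for~($\conSysQ$), satisfies every strict inequality already satisfied by $(\rankCoeff,\offsets)$ (since non-negativity is preserved under addition), and additionally has $(\rankCoeff+\rankCoeff')(\var_0) > 0$. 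This contradicts the optimality of $(\rankCoeff,\offsets)$ unless $\rankCoeff(\var_0) > 0$ to begin with, giving the variable half of the dichotomy.

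The transition half of the statement is entirely symmetric: for a transition $\transition_0$ with $\counters(\transition_0) = 0$, the same bump argument shows that the system obtained from~($\conSysP$) by appending $\counters'(\transition_0) \ge 1$ must be infeasible; Farkas then delivers a pair $(\rankCoeff',\offsets')$ with $(\updates^T\rankCoeff' + \flowMatrix^T\offsets')(\transition_0) < 0$, and the identical $\oneVec$-shift-and-add argument contradicts optimality of $(\rankCoeff,\offsets)$ unless $(\updates^T\rankCoeff + \flowMatrix^T\offsets)(\transition_0) < 0$ already. The main subtlety I anticipate is the mild asymmetry introduced by the non-standard constraint $\offsets \ge 0$ in~($\conSysQ$), which prevents reading the claim off as a clean LP duality statement; the identity $\flowMatrix^T\oneVec = 0$ neutralises it, but this step has to be executed cleanly. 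Everything else is a routine application of Farkas' Lemma, powered by the fact that both solution sets are closed under non-negative addition, which is what makes the ``bump'' argument go through.
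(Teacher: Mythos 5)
Your proposal is correct and takes essentially the same route as the paper: use optimality together with closure under addition to deduce infeasibility of the primal system augmented by the missing inequality, apply Farkas to obtain a dual certificate, shift $\offsets'$ by a multiple of $\oneVec$ (using $\flowMatrix^T\oneVec = 0$) to restore non-negativity, and add the certificate to the optimal dual solution to force the corresponding strict inequality by optimality. The paper packages the Farkas step as two free-standing duality lemmas (exactly one of $(\conSysA_\transition)$, $(\conSysB_\transition)$ is satisfiable, and likewise for $(\conSysC_\var)$, $(\conSysD_\var)$) and then invokes them, but the underlying ``bump'' argument is the same one you spell out. One small expositional slip worth fixing: when listing the Farkas multipliers you omit the non-negative multiplier $y$ attached to $\counters \ge 0$; it is precisely that multiplier which supplies the slack $\updates^T\rankCoeff' + \flowMatrix^T\offsets' = -y \le 0$ that you correctly state as the conclusion -- without it the dual equation would force exact equality.
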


\begin{example}
\label{ex:ranking-iteration}
Our main algorithm, Algorithm~\ref{alg:algorithm} presented in Section~\ref{sec:algorithm}, will directly use constraint systems~($\conSysP$) and~($\conSysQ$) in its first loop iteration, and adjusted versions in later loop iterations.
Here, we illustrate the first loop iteration.
We consider the running example $\vass_\exampleCS$, whose update and flow matrices we have stated in Example~\ref{ex:update-flow-matrices}.
An optimal solution to constraint systems~($\conSysP$) and~($\conSysQ$) is given by $\counters = (1 4 4 1 1 1 1 1 0 0)^T$ and $\rankCoeff = (2 2 0)^T$, $\offsets = (0 0 1 1)^T$.
The quasi-ranking function $\affineRank(\rankCoeff,\offsets)$ immediately establishes that $\tbound_\transition(N) \in O(N)$ for
$\transition=\state_1\rightarrow \state_3$ and $\transition=\state_4 \rightarrow \state_2$, because
1) $\affineRank(\rankCoeff,\offsets)$ decreases for these two transitions and does not increase for other transitions (by Lemma~\ref{lem:affine-ranking}), and because
2) the initial value of $\affineRank(\rankCoeff,\offsets)$ is bounded by $O(N)$, i.e., we have $\affineRank(\rankCoeff,\offsets)(\state,\val) \in O(N)$ for every state $\state \in \states(\vass_\exampleCS)$ and every valuation $\val$ with $\norm{\val} \le N$.
By a similar argument we get $\vbound_x(N) \in O(N)$ and $\vbound_y(N) \in O(N)$.
The exact reasoning for deriving upper bounds is given in Section~\ref{sec:upper-bound}.
From $\counters$ we can, by Lemma~\ref{lem:solution-is-multcycle}, obtain the cycles $\cycle_1 = \state_1 \rightarrow \state_2 \rightarrow \state_2 \rightarrow \state_2 \rightarrow \state_2 \rightarrow \state_2 \rightarrow \state_1 \rightarrow \state_1$ and $\cycle_2 =\state_3 \rightarrow \state_4 \rightarrow \state_4 \rightarrow \state_4 \rightarrow \state_4 \rightarrow \state_4 \rightarrow \state_4 \rightarrow \state_4$ with $\val(\cycle_1) + \val(\cycle_2) \ge (0 0 1)^T$ (*).
We will later show that the cycles $\cycle_1$ and $\cycle_2$ give rise to a family of traces that establish $\tbound_\transition(N) \in \Omega(N^2)$ for all transitions $\transition \in \transitions(\vass_\exampleCS)$ with $\transition \neq \state_1\rightarrow \state_3$ and $\transition \neq \state_4 \rightarrow \state_2$.
Here we give an intuition on the construction:
We consider a cycle $\cycle$ of $\vass_\exampleCS$ that visits all states at least once.
By (*), the updates along the cycles $\cycle_1$ and $\cycle_2$ cancel each other out.
However, the two cycles are not connected.
Hence, we execute the cycle $\cycle_1$ some $\Omega(N)$ times, then (s part of) the cycle $\cycle$, then execute $\cycle_2$ as often as $\cycle_1$, and finally the remaining part of $\cycle$; this we repeat $\Omega(N)$ times.
This construction also establishes the bound $\vbound_z(N) \in \Omega(N^2)$ because, by (*), we increase $z$ with every joint execution of $\cycle_1$ and $\cycle_2$.
The precise lower bound construction is given in Section~\ref{sec:lower-bound}.
\end{example}

\section{Main Algorithm}
\label{sec:algorithm}

Our main algorithm -- Algorithm~\ref{alg:algorithm} -- computes the complexity as well as variable and transition bounds of an input VASS $\vass$, either detecting that $\vass$ has at least exponential complexity or reporting precise asymptotic bounds for the transitions and variables of $\vass$ (up to a constant factor):
Algorithm~\ref{alg:algorithm} will compute values $\vExp(\var) \in \nats$ such that $\vbound_N(\var) \in \Theta(N^{\vExp(\var)})$ for every $\var \in \vars$ and values $\tExp(\transition) \in \nats$ such that $\tbound_N(\transition) \in \Theta(N^{\tExp(\transition)})$ for every $\transition \in \transitions(\vass)$.

\vspace{-0.2cm}
\paragraph{Data Structures.}
The algorithm maintains a rooted tree $\ctree$.
Every node $\node$ of $\ctree$ will always be labelled by a sub-VASSs $\vassF(\node)$ of $\vass$.
The nodes in the same layer of $\ctree$ will always be labelled by disjoint sub-VASS of $\vass$.
The main loop of Algorithm~\ref{alg:algorithm} will extend
$\ctree$ by one layer per loop iteration.
The variable $\layer$ always contains the next layer that is going to be added to $\ctree$.
For computing variable and transition bounds,  Algorithm~\ref{alg:algorithm} maintains the functions $\vExp: \vars \rightarrow \nats \cup \{\nobound\}$ and $\tExp: \transitions(\vass) \rightarrow \nats \cup \{\nobound\}$.

\vspace{-0.2cm}
\paragraph{Initialization.}
We assume $\updates$ to be the update matrix and $\flowMatrix$ to be the flow matrix associated to $\vass$ as discussed in Section~\ref{sec:farkas}.
At initialization, $\ctree$ consists of the root node $\Root$ and we set $\vassF(\Root) = \vass$, i.e., the root is labelled by the input $\vass$.
We initialize $\layer = 1$ as Algorithm~\ref{alg:algorithm} is going to add layer $1$ to $\ctree$ in the first loop iteration.
We initialize $\vExp(\var) = \nobound$ for all variables $\var \in \vars$ and $\tExp(\transition) = \nobound$ for all transitions $\transition \in \transitions(\vass)$.

\vspace{-0.2cm}
\paragraph{The constraint systems solved during each loop iteration.}
In loop iteration $\layer$, Algorithm~\ref{alg:algorithm} will set $\tExp(\transition) := \layer$ for some transitions $\transition$ and $\vExp(\var) := \layer$ for some variables $\var$.
In order to determine those transitions and variables, Algorithm~\ref{alg:algorithm} instantiates constraint systems~($\conSysP$) and~($\conSysQ$) from Section~\ref{sec:farkas} over the set of transitions $\unboundedTransitions
= \bigcup_{\node \in \layerF(\layer-1)} \transitions(\vassF(\node))$,
which contains all transitions associated to nodes in layer $\layer-1$ of $\ctree$.
However, instead of a direct instantiation using $\updates|_\unboundedTransitions$ and $\flowMatrix|_\unboundedTransitions$ (i.e., the restriction of $\updates$ and $\flowMatrix$ to the transitions $\unboundedTransitions$), we need to work with an extended set of variables and an extended update matrix.
We set $\vars_\extended := \{ (\var,\node) \mid \node \in \layerF(\layer - \vExp(\var)) \}$, where we set $n - \nobound = 0$ for all $n \in \nats$.
This means that we use a different copy of variable $\var$ for every node $\node$ in layer $\layer - \vExp(\var)$.
We note that for a variable $\var$ with $\vExp(\var) = \nobound$ there is only a single copy of $\var$ in $\vars_\extended$ because $\Root \in \layerF(0)$ is the only node in layer $0$.
We define the extended update matrix $\updates_\extended \in \mathbb{Z}^{\vars_\extended \times \unboundedTransitions}$ by setting $$\updates_\extended((\var,\node),\transition)
:= \left\{
    \begin{array}{cc}
        \updates(\var,\transition), & \text{if } \transition \in \transitions(\vassF(\node)), \\
        0, & \text{otherwise}. \\
    \end{array}   \right.
$$
Constraint systems~($\conSysI$) and~($\conSysII$) stated in Figure~\ref{fig:constraint-systems} can be recognized as instantiation of constraint systems~($\conSysP$) and~($\conSysQ$) with matrices $\updates_\extended$ and $\flowMatrix|_\unboundedTransitions$ and variables $\vars_\extended$, and hence the dichotomy stated in Lemma~\ref{lem:optimization-duality} holds.

We comment on the choice of $\vars_\extended$:
Setting $\vars_\extended = \{ (\var,\node) \mid \node \in \layerF(i) \}$ for any $i \le \layer - \vExp(\var)$ would result in correct upper bounds (while $i > \layer - \vExp(\var)$ would not).
However, choosing $i < \layer - \vExp(\var)$ does in general result in sub-optimal bounds because fewer variables make constraint system~($\conSysI$) easier and constraint system~($\conSysII$) harder to satisfy (in terms of their maximization objectives).
In fact, $i = \layer - \vExp(\var)$ is the optimal choice, because this choice allows us to prove corresponding lower bounds in Section~\ref{sec:lower-bound}.
We will further comment on key properties of constraint systems~($\conSysI$) and~($\conSysII$) in Sections~\ref{sec:upper-bound} and~\ref{sec:lower-bound}, when we outline the proofs of the upper resp. lower bound.

We note that Algorithm~\ref{alg:algorithm} does not use the optimal solution $\counters$ to constraint system~($\conSysI$) for the computation of the $\vExp(\var)$ and $\tExp(\transition)$, and hence the computation of the optimal solution $\counters$ could be removed from the algorithm.
The solution $\counters$ is however needed for the extraction of lower bounds in Sections~\ref{sec:lower-bound} and~\ref{sec:exponential}, and this is the reason why it is stated here.
The extraction of lower bounds is not explicitly added to the algorithm in order to not clutter the presentation.

\begin{algorithm}[t!]
\KwIn{a connected VASS $\vass$ with update matrix $\updates$ and flow matrix $\flowMatrix$}
$\ctree$ := single root node $\Root$ with $\vassF(\Root) = \vass$\;
$\layer$ := 1\;
$\vExp(\var) := \nobound$ for all variables $\var \in \vars$\;
$\tExp(\transition) := \nobound$ for all transitions $\transition \in \transitions(\vass)$\;
\Repeat{$\vExp(\var) \neq \nobound$ and $\tExp(\transition) \neq \nobound$ for all $\var \in \vars$ and $\transition \in \transitions(\vass)$}{

let $\unboundedTransitions := \bigcup_{\node \in \layerF(\layer-1)} \transitions(\vassF(\node))$\;

let $\vars_\extended := \{ (\var,\node) \mid \node \in \layerF(\layer - \vExp(\var)) \}$, where $n - \nobound = 0$ for $n \in \nats$\;

let $\updates_\extended \in \mathbb{Z}^{\vars_\extended \times \unboundedTransitions}$ be the matrix defined by\\
\hspace{2cm} $\updates_\extended((\var,\node),\transition)
= \left\{
    \begin{array}{cc}
        \updates(\var,\transition), & \text{if } \transition \in \transitions(\vassF(\node)) \\
        0, & \text{otherwise} \\
    \end{array}   \right.
$\;

find optimal solutions $\counters$ and $\rankCoeff$, $\offsets$ to constraint systems~(\conSysI) and~(\conSysII)\;

let $\decreasingTransitions := \{ \transition \in \unboundedTransitions \mid (\updates_\extended^T \rankCoeff + \flowMatrix|_\unboundedTransitions^T \offsets)(\transition) < 0 \}$\;

set $\tExp(\transition) := \layer$ for all $\transition \in \decreasingTransitions$\;
	
\ForEach{$\node \in \layerF(\layer-1)$}{
   let $\vass' := \vassF(\node)$ be the VASS associated to $\node$\;
   decompose $(\states(\vass'), \transitions(\vass') \setminus \decreasingTransitions)$ into SCCs\;
   \ForEach{ SCC $\scc$ of $(\states(\vass'), \transitions(\vass') \setminus \decreasingTransitions)$}{
       create a child $\node'$ of $\node$ with $\vassF(\node') = \scc$\;
   }

}

\ForEach{$\var \in \vars$ with $\vExp(\var) = \nobound$}{
    \lIf{$\rankCoeff(\var,\Root) > 0$}
   {
      set $\vExp(\var) := \layer$
   }
}

\If{there are no $\var \in \vars$, $\transition \in \transitions(\vass)$ with $\layer < \vExp(\var) + \tExp(\transition) < \nobound$}
{
   \Return ``$\vass$ has at least exponential complexity''
}

$\layer := \layer + 1$\;
}
\caption{Computes transition and variable bounds for a VASS $\vass$}
\label{alg:algorithm}
\end{algorithm}

\begin{figure}[t!]
\begin{tabular}{|c|c|}
\hline
 {\begin{minipage}[c]{0.45\linewidth}
\vspace{0.2cm}
constraint system ($\conSysI$):

\vspace{0.2cm}
there exists $\counters \in \mathbb{Z}^\unboundedTransitions$ with
\begin{align}
  \updates_\extended \counters & \ge 0 \nonumber\\
  \counters & \ge 0 \nonumber\\
  \flowMatrix|_\unboundedTransitions \counters & = 0 \nonumber
\end{align}

Maximization Objective:\\
Maximize the number of inequalities with $(\updates_\extended \counters)(\var) > 0$ and $\counters(\transition) > 0$
\vspace{0.2cm}
\end{minipage}}
  &
{\begin{minipage}[c]{0.51\linewidth}
\vspace{0.2cm}
constraint system ($\conSysII$):

\vspace{0.2cm}
there exist
$\rankCoeff \in \mathbb{Z}^{\vars_\extended},\offsets \in \mathbb{Z}^{\states(\vass)}$ with
\begin{align}
\rankCoeff & \ge  0 \nonumber\\
\offsets & \ge  0 \nonumber\\
\updates_\extended^T \rankCoeff + \flowMatrix|_\unboundedTransitions^T \offsets & \le 0 \nonumber
\end{align}
Maximization Objective:\\
Maximize the number of inequalities with $\rankCoeff(\var,\node) > 0$ and $(\updates_\extended^T \rankCoeff + \flowMatrix|_\unboundedTransitions^T \offsets)(\transition) < 0$
\vspace{0.2cm}
\end{minipage}}\\
\hline
\end{tabular}
\vspace{-0.2cm}
\caption{Constraint Systems~($\conSysI$) and~($\conSysII$) used by Algorithm~\ref{alg:algorithm}}
\label{fig:constraint-systems}
\vspace{-0.3cm}
\end{figure}

\vspace{-0.3cm}
\paragraph{Discovering transition bounds.}
After an optimal solution $\rankCoeff$, $\offsets$ to constraint system~(\conSysII) has been found,
Algorithm~\ref{alg:algorithm} collects all transitions $\transition$ with $(\updates_\extended^T \rankCoeff + \flowMatrix|_\unboundedTransitions^T \offsets)(\transition) < 0$ in the set $\decreasingTransitions$ (note that the optimization criterion in constraint system~($\conSysII$) tries to find as many such $\transition$ as possible).
Algorithm~\ref{alg:algorithm} then sets $\tExp(\transition) := \layer$ for all $\transition \in \decreasingTransitions$.
The transitions in $\decreasingTransitions$ will not be part of layer $\layer$ of $\ctree$.

\vspace{-0.3cm}
\paragraph{Construction of the next layer in $\ctree$.}
For each node $\node$ in layer $\layer-1$, Algorithm~\ref{alg:algorithm} will create children by removing the transitions in $\decreasingTransitions$.
This is done as follows:
Given a node $\node$ in layer $\layer-1$, Algorithm~\ref{alg:algorithm} considers the VASS $\vass' = \vassF(\node)$ associated to $\node$.
Then, $(\states(\vass'), \transitions(\vass') \setminus \decreasingTransitions)$ is decomposed into its SCCs.
Finally, for each SCC $\scc$ of $(\states(\vass'), \transitions(\vass') \setminus \decreasingTransitions)$ a child $\node'$ of $\node$ is created with $\vassF(\node') = \scc$.
Clearly, the new nodes in layer $\layer$ are labelled by disjoint sub-VASS of $\vass$.

\vspace{-0.3cm}
\paragraph{The transitions of the next layer.}
The following lemma states that the new layer $\layer$ of $\ctree$ contains all transitions of layer $\layer-1$ except for the transitions $\decreasingTransitions$;
the lemma is due to the fact that every transition in
$\unboundedTransitions \setminus \decreasingTransitions$ belongs to a cycle and hence to some SCC that is part of the new layer $\layer$.

\begin{lemma}
\label{lem:full-decomposition}
We consider the new layer constructed during loop iteration $\layer$ of Algorithm~\ref{alg:algorithm}:
we have
$\unboundedTransitions \setminus \decreasingTransitions = \bigcup_{\node \in \layerF(\layer)} \transitions(\vassF(\node))$.
\end{lemma}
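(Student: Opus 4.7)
The plan is to establish the two inclusions separately.

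The direction $\supseteq$ is essentially a matter of unfolding definitions. Given $\node' \in \layerF(\layer)$, by construction $\node'$ is a child of some $\node \in \layerF(\layer-1)$ and $\vassF(\node')$ is an SCC of $(\states(\vassF(\node)), \transitions(\vassF(\node)) \setminus \decreasingTransitions)$. Hence $\transitions(\vassF(\node')) \subseteq \transitions(\vassF(\node)) \setminus \decreasingTransitions \subseteq \unboundedTransitions \setminus \decreasingTransitions$, which yields the inclusion.

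For the direction $\subseteq$, fix $\transition \in \unboundedTransitions \setminus \decreasingTransitions$. I would argue that $\transition$ lies on a cycle that uses only transitions from $\unboundedTransitions \setminus \decreasingTransitions$ and stays inside one of the sub-VASSs $\vassF(\node)$ with $\node \in \layerF(\layer-1)$; this suffices because then $\transition$ belongs to an SCC of $(\states(\vassF(\node)), \transitions(\vassF(\node)) \setminus \decreasingTransitions)$, and that SCC is precisely the label of a child of $\node$ in layer $\layer$. By Lemma~\ref{lem:optimization-duality} applied to the instantiation in Figure~\ref{fig:constraint-systems}, we have $\counters(\transition) \ge 1$, and by Lemma~\ref{lem:solution-is-multcycle} the vector $\counters$ corresponds to a multi-cycle $\multicycle$ in which $\transition$ appears at least once.

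The main obstacle is to show that every cycle in $\multicycle$, in particular the one containing $\transition$, avoids $\decreasingTransitions$ altogether. For this I would compute the total change of the quasi-ranking function $\affineRank(\rankCoeff,\offsets)$ along $\multicycle$ in two ways. Summing per step gives $\sum_{\transition' \in \unboundedTransitions} \counters(\transition')\,(\updates_\extended^T \rankCoeff + \flowMatrix|_\unboundedTransitions^T \offsets)(\transition')$, where every summand is $\le 0$ and strictly negative for $\transition' \in \decreasingTransitions$. On the other hand, since $\flowMatrix|_\unboundedTransitions \counters = 0$ and $\updates_\extended \counters \ge 0$ with $\rankCoeff \ge 0$, the same sum equals $\rankCoeff^T (\updates_\extended \counters) \ge 0$. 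Consequently each summand must be zero, so $\counters(\transition') > 0$ forces $\transition' \notin \decreasingTransitions$. Thus every cycle in $\multicycle$ uses only transitions from $\unboundedTransitions \setminus \decreasingTransitions$.

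It remains to confine the cycle through $\transition$ to a single $\vassF(\node)$: since the sub-VASSs labelling nodes in $\layerF(\layer-1)$ are pairwise disjoint and each transition in $\unboundedTransitions$ has both its source and target inside exactly one such sub-VASS, following the cycle step by step keeps us inside the sub-VASS determined by the first transition. Hence $\transition$ belongs to an SCC of $(\states(\vassF(\node)), \transitions(\vassF(\node)) \setminus \decreasingTransitions)$, which is $\vassF(\node')$ for some child $\node' \in \layerF(\layer)$, completing the proof.
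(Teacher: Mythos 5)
Your proof is correct and takes essentially the same route as the paper's: Lemma~\ref{lem:optimization-duality} and Lemma~\ref{lem:solution-is-multcycle} yield a multi-cycle supported precisely on $\unboundedTransitions \setminus \decreasingTransitions$, which places every such transition in an SCC of some $(\states(\vassF(\node)), \transitions(\vassF(\node)) \setminus \decreasingTransitions)$. The only difference is that you re-derive the direction ``$\counters(\transition)\ge 1 \Rightarrow \transition\notin\decreasingTransitions$'' by a complementary-slackness computation, whereas the paper reads this off directly from Lemma~\ref{lem:optimization-duality} (whose Farkas-based proof in fact establishes an exclusive dichotomy, even though the statement is phrased with an inclusive ``either\ldots or'').
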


\vspace{-0.3cm}
\paragraph{Discovering variable bounds.}
For each $\var \in \vars$ with $\vExp(\var) = \nobound$, Algorithm~\ref{alg:algorithm} checks whether $\rankCoeff(\var,\Root) > 0$ (we point out that the optimization criterion in constraint systems~($\conSysII$) tries to find as many such $\var$ with $\rankCoeff(\var,\Root) > 0$ as possible).
Algorithm~\ref{alg:algorithm} then sets $\vExp(\var) := \layer$ for all those variables.

\vspace{-0.3cm}
\paragraph{The check for exponential complexity.}
In each loop iteration, Algorithm~\ref{alg:algorithm} checks whether there are $\var \in \vars$, $\transition \in \transitions(\vass)$ with $\layer < \vExp(\var) + \tExp(\transition) < \nobound$.
If this is not the case, then we can conclude that $\vass$ is at least exponential
(see Theorem~\ref{thm:exponential} below).
If the check fails, Algorithm~\ref{alg:algorithm} increments $\layer$ and continues with the construction of the next layer in the next loop iteration.

\vspace{-0.3cm}
\paragraph{Termination criterion.}
The algorithm proceeds until either exponential complexity has been detected or until $\vExp(\var) \neq \nobound$ and $\tExp(\transition) \neq \nobound$ for all $\var \in \vars$ and $\transition \in \transitions(\vass)$ (i.e., bounds have been computed for all variables and transitions).

\vspace{-0.3cm}
\paragraph{Invariants.}
We now state some simple invariants maintained by
Algorithm~\ref{alg:algorithm}, which are easy to verify:
\begin{itemize}
  \item For every node $\node$ that is a descendent of some node $\node'$ we have that $\vassF(\node)$ is a sub-VASS of $\vassF(\node')$.
  \item The value of $\vExp$ and $\tExp$ is changed at most once for each input; when the value is changed, it is changed from $\nobound$ to some value $\neq \nobound$.
  \item For every transition $\transition \in \transitions(\vass)$ and layer $\layer$ of $\ctree$, we have that either $\tExp(\transition) \le \layer$ or there is a node $\node \in \layerF(\layer)$ such that $\transition \in \transitions(\vassF(\node))$.
  \item We have $\tExp(\transition) = \layer$ for $\transition \in \transitions(\vass)$ if and only if there is a $\node \in \layerF(\layer-1)$ with $\transition \in \transitions(\vassF(\node))$ and there is no $\node \in \layerF(\layer)$ with $\transition \in \transitions(\vassF(\node))$.
\end{itemize}

\begin{figure}[t]
\begin{tabular}{l|ll}
\begin{minipage}{5.6cm}
$\updates_\extended = \begin{pmatrix}
    -1 & 1 & 0 & 0 & 0 & 0 & 0 & 0 \\
    1 & -1 & 0 & 0 & 0 & 0 & 0 & 0\\
    0 & 0 & -1 & 1 & 0 & 0 & 0 & 0 \\
    0 & 0 & 1 & -1 & 0 & 0 & 0 & 0\\
    -1  & 1 & 1 & -1 & -1 & -1 & -1 & -1
\end{pmatrix}$
\vspace{0.1cm}
with column order
$\state_1\rightarrow \state_1$, $\state_2\rightarrow \state_2$, $\state_3\rightarrow \state_3$, $\state_4\rightarrow \state_4$,
$\state_2\rightarrow \state_1$, $\state_1\rightarrow \state_2$, $\state_4\rightarrow \state_3$, $\state_3\rightarrow \state_4$
(from left to right) and row order $(x,\node_A),(y,\node_A),(x,\node_B)$, $(y,\node_B),(z,\Root)$ (from top to bottom)
\end{minipage}
&
\begin{minipage}{3.6cm}
$\updates_\extended = \begin{pmatrix}
    -1 & 0 & 0 & 0 \\
    1 & 0 & 0 & 0 \\
    0 & 1 & 0 & 0 \\
    0 & -1 & 0 & 0 \\
    0 & 0 & -1 & 0 \\
    0 & 0 & 1 & 0 \\
    0 & 0 & 0 & 1 \\
    0 & 0 & 0 & -1 \\
    -1 & 1 & 0 & 0 \\
    0 & 0 & 1 & -1
\end{pmatrix}$
\end{minipage}
&
\begin{minipage}{2.9cm}
with column order
$\state_1\rightarrow \state_1$, $\state_2\rightarrow \state_2$, $\state_3\rightarrow \state_3$, $\state_4\rightarrow \state_4$,
(from left to right) and row order $(x,\node_1),(y,\node_1),(x,\node_2)$, $(y,\node_2),(x,\node_3),(y,\node_3)$, $(x,\node_4),(y,\node_4),(z,\node_A)$, $(z,\node_B)$ (from top to bottom)
\end{minipage}

   \end{tabular}
   \caption{The extended update matrices during iteration $\layer=2$ (left) and $\layer=3$ (right) of Algorithm~\ref{alg:algorithm} on the running example $\vass_\exampleCS$ from Section~\ref{sec:intro}.}
   \vspace{-0.2cm}
   \label{fig:extended-update-matrices}
\end{figure}

 \vspace{-0.2cm}
\begin{example}
We sketch the execution of Algorithm~\ref{alg:algorithm} on $\vass_\exampleCS$.
In iteration $\layer = 1$, we have $\vars_\extended = \{(x,\Root), (y,\Root), (z,\Root)\}$, and thus matrix $\updates_\extended$ is identical to the matrix $\updates$.
Hence, constraint systems~($\conSysI$) and~($\conSysII$) are identical to constraint systems~($\conSysP$) and~($\conSysQ$), whose optimal solutions $\counters = (1 4 4 1 1 1 1 1 0 0)^T$ and $\rankCoeff = (2 2 0)^T$, $\offsets = (0 0 1 1)^T$ we have discussed in Example~\ref{ex:ranking-iteration}.
Algorithm~\ref{alg:algorithm} then sets $\tExp(\state_1 \rightarrow \state_3) = 1$ and $\tExp(\state_4 \rightarrow \state_2) = 1$,
creates two children $\node_A$ and $\node_B$ of $\Root$ labeled by $\vass_A = (\{\state_1,\state_2\}, \{ \state_1\rightarrow \state_1,\state_1\rightarrow \state_2,\state_2\rightarrow \state_2,\state_2\rightarrow \state_1\})$ and $\vass_B = (\{\state_3,\state_4\}, \{ \state_3\rightarrow \state_3,\state_3\rightarrow \state_4,\state_4\rightarrow \state_4,\state_4\rightarrow \state_3\})$, and sets $\vExp(x)=1$ and $\vExp(y)=1$.
In iteration $\layer = 2$,
we have $\vars_\extended = \{(x,\node_A), (y,\node_A), (x,\node_B), (y,\node_B), (z,\Root)\}$ and the matrix $\updates_\extended$ stated in Figure~\ref{fig:extended-update-matrices}.
Algorithm~\ref{alg:algorithm} obtains $\counters = (1 1 1 1 0 0 0 0)^T$ and $\rankCoeff = (1 2 2 1 1)^T$, $\offsets = (0 0 0 0)^T$ as optimal solutions to~($\conSysI$) and~($\conSysII$).
Algorithm~\ref{alg:algorithm} then sets $\tExp(\state_1   \rightarrow \state_2) = \tExp(\state_2 \rightarrow \state_1) = \tExp(\state_3 \rightarrow \state_4) = \tExp(\state_4 \rightarrow \state_3) = 2$, creates the children $\node_1,\node_2$ resp. $\node_3,\node_4$ of $\node_A$ resp. $\node_B$ with $\node_i$ labelled by $\vass_i = (\{\state_i\},\{\state_i\rightarrow\state_i\})$, and sets $\vExp(z)=2$.
In iteration $\layer = 3$,
we have $\vars_\extended = \{(x,\node_1), (y,\node_1), (x,\node_2), (y,\node_2), (x,\node_3),$ $(y,\node_3), (x,\node_4), (y,\node_4), (z,\node_A),(z,\node_B)\}$ and the matrix $\updates_\extended$ stated in Figure~\ref{fig:extended-update-matrices}.
Algorithm~\ref{alg:algorithm} obtains $\counters = (0 0 0 0)^T$ and $\rankCoeff = (1 1 1 3 3 1 1 1 1 1)^T$, $\offsets = (0 0 0 0)^T$ as optimal solutions to~($\conSysI$) and~($\conSysII$).
Algorithm~\ref{alg:algorithm} then sets $\tExp(\state_i\rightarrow\state_i) = 3$, for all $i$, and terminates.
\end{example}

We now state the main properties of Algorithm~\ref{alg:algorithm}:

\begin{lemma}
\label{lem:termination}
  Algorithm~\ref{alg:algorithm} always terminates.
\end{lemma}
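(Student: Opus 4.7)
The plan is to argue that the main loop terminates in two stages: first, I will bound the number of iterations in which $\vExp$ or $\tExp$ can change, and then I will show that once $\vExp$ and $\tExp$ have stabilized, the exponential-complexity test fires within a further bounded number of iterations.

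For the first stage, I will rely on the algorithm's invariant stated just before Lemma~\ref{lem:termination}: each entry of $\vExp$ and $\tExp$ is updated at most once, and always from $\nobound$ to a finite value. Consequently at most $|\vars| + |\transitions(\vass)|$ iterations can perform an update. Let $\layer^*$ denote the largest iteration at which such an update happens (with $\layer^* = 0$ if none does). Since the algorithm always sets a newly updated entry to the current value of $\layer$, every finite value of $\vExp(\var)$ and $\tExp(\transition)$ is at most $\layer^*$, and hence every finite pair-sum $\vExp(\var) + \tExp(\transition)$ is at most $2\layer^*$.

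For the second stage, I will observe that after iteration $\layer^*$ no further updates can occur. Therefore, for any iteration $\layer > 2\layer^*$, the check condition ``there are no $\var, \transition$ with $\layer < \vExp(\var) + \tExp(\transition) < \nobound$'' is satisfied -- vacuously so if no finite pair exists at all -- causing the algorithm to return ``at least exponential'' and exit. Combining the two stages, the loop runs for at most $2\layer^* + 1$ iterations, and in particular terminates.

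I do not foresee any substantive obstacle: the argument is essentially syntactic, relying only on the listed invariants of Algorithm~\ref{alg:algorithm} together with the fact that $\layer$ is incremented by exactly one per iteration. The one mild subtlety is that $\layer^*$ is not a priori bounded by $|\vars| + |\transitions(\vass)|$, since progress iterations need not be consecutive; nevertheless, because progress iterations are finite in number, $\layer^*$ is itself finite, which is all that is required for termination.
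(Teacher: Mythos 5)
Your proof is correct and follows essentially the same approach as the paper's: both rely on the invariant that each entry of $\vExp$ and $\tExp$ changes at most once, and both observe that once these maps stabilize, the ever-increasing $\layer$ must eventually exceed every finite sum $\vExp(\var)+\tExp(\transition)$, triggering the exponential-complexity return. The only difference is that you make the bookkeeping explicit (introducing $\layer^*$ and the bound $2\layer^*+1$), whereas the paper argues the same point more tersely without naming the last update iteration.
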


\begin{theorem}
\label{thm:exponential}
If Algorithm~\ref{alg:algorithm} returns ``$\vass$ has at least exponential complexity'', then $\complexity_\vass(N) \in 2^{\Omega(N)}$, and we have $\tbound_\transition(N) \in 2^{\Omega(N)}$ for all $\transition \in \transitions(\vass)$ with $\tExp(\transition) = \nobound$ and $\vbound_\transition(N) \in 2^{\Omega(N)}$ for all $\var \in \vars$ with $\vExp(\var) = \nobound$.
\end{theorem}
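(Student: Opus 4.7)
Plan: Assume Algorithm~\ref{alg:algorithm} declares ``$\vass$ has at least exponential complexity'' during some iteration $\layer^*$, with optimal solutions $\counters^*$ and $(\rankCoeff^*,\offsets^*)$ of constraint systems $(\conSysI)$ and $(\conSysII)$. My first step is to extract from $\counters^*$ a productive multi-cycle. I would observe that $\counters^* \neq 0$: otherwise the dichotomy of Lemma~\ref{lem:optimization-duality} would force $\rankCoeff^*(\var,\Root) > 0$ for every $\var \in \vars$ with $\vExp(\var) = \nobound$ (whose only copy in $\vars_\extended$ is $(\var,\Root)$) and $(\updates_\extended^T \rankCoeff^* + \flowMatrix|_\unboundedTransitions^T \offsets^*)(\transition) < 0$ for every $\transition \in \unboundedTransitions$, so the algorithm would assign finite values $\layer^*$ to all remaining $\vExp$ and $\tExp$; the newly set pair would then witness $\vExp(\var)+\tExp(\transition) = 2\layer^* > \layer^*$, the exponential check would fail, and the algorithm would exit normally via the Until-condition. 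Hence $\counters^* \neq 0$; by Lemma~\ref{lem:solution-is-multcycle} it yields a non-trivial multi-cycle $\multicycle^*$ on $\unboundedTransitions$ with $\valueSum(\multicycle^*) \ge 0$. Reapplying the dichotomy to each unbounded variable gives $\valueSum(\multicycle^*)(\var) = (\updates_\extended \counters^*)(\var,\Root) \ge 1$; i.e., $\multicycle^*$ strictly increases every $\var$ with $\vExp(\var) = \nobound$.

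The main work is to construct a family of traces $\{\trace_N\}$ with $\length(\trace_N) \in 2^{\Omega(N)}$. My plan is to iterate the lower-bound pumping construction of Section~\ref{sec:lower-bound} along the tree $\ctree$: at each intermediate iteration $i < \layer^*$ the algorithm's optimal $\counters$ witnesses a multi-cycle in the layer-$(i{-}1)$ sub-VASSs that can be pumped $\Omega(N)$ times using the bounded-variable fuel available at that stage; composing these pumping schemes inductively yields a trace fragment of length $\Omega(N^{\layer^*-1})$ per round. The multi-cycle $\multicycle^*$ at the deepest layer then plays a distinct role: each outer invocation of $\multicycle^*$ increases every unbounded variable by at least one, so the unbounded variables grow without a polynomial ceiling, and in the next outer round they permit proportionally more inner pumping. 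The combinatorial condition enforced by the failure of the exponential check — namely $\vExp(\var)+\tExp(\transition) \le \layer^*$ for every finite pair $(\var,\transition)$ — is precisely what prevents any already-computed polynomial bound from absorbing this positive feedback, forcing geometric (rather than merely super-polynomial) growth and yielding $\length(\trace_N) \in 2^{\Omega(N)}$ after $\Omega(N)$ outer rounds.

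From the trace family, $\complexity_\vass(N) \in 2^{\Omega(N)}$ is immediate. For the individual bounds, Lemma~\ref{lem:full-decomposition} identifies the transitions $\transition$ with $\tExp(\transition) = \nobound$ as exactly the set $\unboundedTransitions \setminus \decreasingTransitions$, and the dichotomy gives $\counters^*(\transition) \ge 1$ for every such $\transition$, so each of them appears in $\multicycle^*$ and is executed at least once per outer round, giving $\tbound_\transition(N) \in 2^{\Omega(N)}$. Similarly, every $\var$ with $\vExp(\var) = \nobound$ gains at least one unit per outer round, so $\vbound_\var(N) \in 2^{\Omega(N)}$.

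The hardest step will be the geometric amplification underlying the main construction. Translating the layer-wise failure of the check condition into an explicit doubling mechanism requires careful inductive bookkeeping along $\ctree$: the fuel provided by the current value of an unbounded variable must translate into a proportional increase in the number of inner pumping rounds in the next outer iteration, and these inner pumpings (via $\multicycle^*$) must themselves replenish the unbounded variable by at least a constant multiplicative factor. The correctness of this feedback loop relies on the interaction between the layer-$\layer^*$ multi-cycle $\multicycle^*$ and the lower-layer pumping schemes harvested from the earlier iterations of the algorithm; formalizing this interaction is where the bulk of the technical work would reside.
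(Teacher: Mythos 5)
There is a genuine gap in your approach, and you are doing a great deal of work the paper avoids. The paper does not construct the exponential trace from scratch: it invokes Lemma~\ref{lem:exponential} (cited from Leroux) as a black box. That lemma takes a family of cycles $\cycle_1,\ldots,\cycle_m$ and a partition $U,W$ of the variables satisfying two conditions — (a) each individual $\cycle_i$ has $\valueSum(\cycle_i)(\var) \ge 0$ for every $\var \in U$, and (b) the sum $\sum_i \valueSum(\cycle_i)(\var) \ge 1$ for every $\var \in W$ — and directly delivers paths $\paath_N$ executable from $N\cdot\oneVec$ that contain $c^N$ copies of each $\cycle_i$. Your ``geometric amplification'' and ``feedback loop'' paragraph is essentially sketching a re-proof of that lemma; this is not a proof, it is a plan to write one, and you flag it yourself as the part you cannot see how to discharge. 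The intended argument simply does not need it.

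More importantly, the step you never address is precisely the one the paper treats as the crux: condition (a). You correctly derive, via the dichotomy, that $\sum_{\node\in\layerF(\layer)}\valueSum(\cycle(\node))(\var)\ge 1$ for unbounded $\var$ (this is Lemma~\ref{lem:constraint-system-property-two}). But you never argue that each \emph{individual} cycle $\cycle(\node)$ is nonnegative on the bounded variables $U=\{\var : \vExp(\var)\le\layer\}$. Without this, pumping a single cycle $\cycle(\node)$ could drain a bounded variable below zero, and the whole construction collapses — the multi-cycle's cycles live in disjoint SCCs, so you cannot simply interleave them to cancel drains. The paper obtains (a) from the failure of the check in line~18: if there is a transition $\transition\in\transitions(\vassF(\node'))\setminus\transitions(\vassF(\node))$ separating $\node$ from its ancestor $\node'\in\layerF(\layer-\vExp(\var))$, then $\layer-\vExp(\var)<\tExp(\transition)\le\layer$, contradicting the check condition $\layer<\vExp(\var)+\tExp(\transition)<\nobound$ being unsatisfiable. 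Hence $\node$ is the sole descendant of $\node'$ in layer $\layer$, and Lemma~\ref{lem:constraint-system-property-one} gives $\valueSum(\cycle(\node))(\var)\ge 0$. This interplay between the check condition, the tree structure, and constraint system~(\conSysI) is the content of the proof; your sketch omits it entirely. Your observation that $\counters^*\neq 0$ is fine but tangential — the real work is verifying Lemma~\ref{lem:exponential}'s hypotheses, and half of them are missing.

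Finally, once Lemma~\ref{lem:exponential} is applied, the per-transition and per-variable bounds fall out directly from its conclusions (2) and (3): every $\transition$ with $\tExp(\transition)=\nobound$ lies in some $\transitions(\vassF(\node))$ for $\node\in\layerF(\layer)$ and appears $\ge 1$ time in $\cycle(\node)$, hence $\ge c^N$ times in $\paath_N$; every $\var$ with $\vExp(\var)=\nobound$ is in $W$ and reaches value $\ge c^N$. You gesture at this but anchor it in your unconstructed feedback loop rather than in the lemma.
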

The proof of Theorem~\ref{thm:exponential} is stated in Section~\ref{sec:exponential}.
We now assume that Algorithm~\ref{alg:algorithm} does not return ``$\vass$ has at least exponential complexity''.
Then, Algorithm~\ref{alg:algorithm} must terminate with $\tExp(\transition) \neq \nobound$ and $\vExp(\var) \neq \nobound$ for all $\transition \in \transitions(\vass)$ and $\var \in \vars$.
The following result states that $\tExp$ and $\vExp$ contain the precise exponents of the asymptotic transition and variable bounds of $\vass$:

\begin{theorem}
\label{thm:bounds}
$\vbound_N(\var) \in \Theta(N^{\vExp(\var)})$ for all $\var \in \vars$ and $\tbound_N(\transition) \in \Theta(N^{\tExp(\transition)})$ for all $\transition \in \transitions(\vass)$.
\end{theorem}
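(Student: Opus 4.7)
My plan is to establish the upper and lower inclusions of the $\Theta$ bound separately---as done in Sections~\ref{sec:upper-bound} and~\ref{sec:lower-bound}---proceeding in both directions by induction on the iteration $\layer$ at which a variable or transition was classified by Algorithm~\ref{alg:algorithm}. The induction hypothesis will assert that everything classified in iterations $<\layer$ already obeys the claimed asymptotic bound. Lemma~\ref{lem:optimization-duality} is the structural backbone: in any iteration $\layer$, the optimal solutions $\counters$ of~($\conSysI$) and $(\rankCoeff, \offsets)$ of~($\conSysII$) expose complementary sets of variables and transitions, so every one classified at layer $\layer$ is witnessed either from above by the quasi-ranking function or from below by the multi-cycle.

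For the upper bound direction, I will exploit Lemma~\ref{lem:affine-ranking} to turn $(\rankCoeff, \offsets)$ into a quasi-ranking argument. The subtlety is that the ranking function lives over the extended variable set $\vars_\extended$; unpacking the definition of $\updates_\extended$ shows that it supplies a \emph{family} of local quasi-ranking functions $\affineRank_\node(\state, \val) := \sum_\var \rankCoeff(\var, \node)\cdot \val(\var) + \offsets(\state)$, one per node $\node \in \ctree$, each of which is weakly decreasing along transitions of $\transitions(\vassF(\node))$ and strictly decreasing along those of $\decreasingTransitions \cap \transitions(\vassF(\node))$. Considering sub-traces confined to an individual sub-VASS $\vassF(\node)$ with $\node \in \layerF(\layer-1)$, the number of decreasing-transition instances inside such a sub-trace is bounded by the entry value of $\affineRank_\node$, which evaluates to $O(N^{\layer-1})$ by the induction hypothesis applied to the already-classified variables; the total number of visits to $\vassF(\node)$ along the entire trace is controlled by the inductive transition bounds of the edges entering $\vassF(\node)$; and these two factors combine to the desired $O(N^\layer)$. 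The variable bound $\vbound_N(\var) \in O(N^\layer)$ for $\vExp(\var) = \layer$ follows from $\rankCoeff(\var, \Root) > 0$ by a similar telescoping argument.

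For the lower bound direction, I will construct witness traces inductively. The base case $\layer = 1$ applies Lemma~\ref{lem:solution-is-multcycle} to $\counters$: the resulting multi-cycle $\multicycle$ has $\valueSum(\multicycle) \ge 0$, so its cycles can be iterated $\Omega(N)$ times within a spanning path of $\vass$, producing $\Omega(N)$ instances of every $\transition$ with $\counters(\transition) > 0$ and pumping every $\var$ with $(\updates \counters)(\var) \ge 1$ up to $\Omega(N)$. In the inductive step at layer $\layer$, the multi-cycle extracted via Lemma~\ref{lem:solution-is-multcycle} from $\counters$ at iteration $\layer$ is supported on transitions that are not yet classified and splits naturally across the children of nodes in $\layerF(\layer-1)$. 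Following exactly the pattern sketched for $\cycle_1, \cycle_2$ in Example~\ref{ex:ranking-iteration}, I wrap an outer loop of $\Omega(N)$ repetitions of this multi-cycle around the inductively built $\Omega(N^{\layer-1})$-long witness traces inside each child sub-VASS, threading them together by fixed paths of $\vass$ that traverse the children. The identity $\valueSum(\multicycle) \ge 0$, combined with the flow equation $\flowMatrix|_\unboundedTransitions \counters = 0$, guarantees that outer repetitions can be chained without any variable becoming negative at the seams.

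The main obstacle will be the bookkeeping in the lower-bound construction: I must verify that the nested, pumped-up trace is non-negative \emph{at every single intermediate step}, not merely after complete multi-cycle iterations. This forces an ordering of the cycles inside each outer iteration so that increases precede decreases, and a careful choice of the connecting paths that does not exhaust variables needed by the subsequent inductive sub-trace; the extended structure of $\updates_\extended$ records exactly the information required to perform this ordering coherently across layers. A secondary but less deep technical point on the upper-bound side will be justifying that visits to each $\vassF(\node)$ are correctly controlled by the entering transitions, which reduces to the structural invariants of Algorithm~\ref{alg:algorithm} listed at the end of Section~\ref{sec:algorithm}.
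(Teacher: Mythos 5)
Your high-level plan---split the $\Theta$ bound into its $O$ and $\Omega$ halves and prove each by induction on the iteration $\layer$ at which a variable or transition is classified---is the paper's: Theorem~\ref{thm:bounds} is obtained directly from Theorem~\ref{thm:upper-bound} and Corollary~\ref{cor:lower-bound}. Both directions of your sketch, however, deviate from the paper's actual arguments in ways that leave gaps.

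For the upper bound, the ``visits to $\vassF(\node)$ times entry value of $\affineRank_\node$'' accounting does not close. A single-node slice $\affineRank_\node$ is not on its own a quasi-ranking function for $\vassF(\node)$: the constraint $\updates_\extended^T\rankCoeff + \flowMatrix|_\unboundedTransitions^T\offsets \le 0$ couples $\rankCoeff(\var,\node')$ across all ancestors $\node'$ of $\node$, and a faithful aggregate over ancestors will include coefficients on variables classified at the \emph{current} layer $\layer$---for which you have no inductive bound, so the ``entry value'' is not a priori $O(N^{\layer-1})$. The paper instead uses one global witness $\coeff(\state,\val) = \affineRank(\rankCoeff,\offsets)(\state,\extF_\state(\val))$, notes that $\coeff$ is $O(N)$ on initial configurations, and controls its total accumulated increase via increase certificates: each transition $\transition \notin \unboundedTransitions$ increases $\coeff$ by $\increase_\transition(N) \in O(N^{\layer-\tExp(\transition)})$, so $\sum_\transition \tbound_\transition(N)\cdot\increase_\transition(N) = \sum_\transition O(N^{\tExp(\transition)}) \cdot O(N^{\layer - \tExp(\transition)}) = O(N^\layer)$. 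The key step your sketch omits is the case analysis on $(\var,\node) \in \vars_\extended$ in the proof of the increase certificates: when $\vExp(\var) > \layer - \tExp(\transition)$, the transition $\transition$ still lies inside a single $\vassF(\node')$ at layer $\layer - \vExp(\var)$, so the extended coordinate $\extF_\state(\val)(\var,\node)$ changes by only $O(1)$ across the step, not $O(N^{\vExp(\var)})$; it is this confinement property, built into the choice $\vars_\extended = \{(\var,\node) \mid \node \in \layerF(\layer - \vExp(\var))\}$, that makes the product telescope to $O(N^\layer)$.

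For the lower bound, ``ordering the cycles so increases precede decreases'' is not the paper's mechanism and is not obviously available: the transitions inside a fixed cycle $\cycle(\node)$ occur in a fixed order, and reordering them destroys the path property. The paper resolves the intermediate-non-negativity bookkeeping by introducing \emph{pre-paths} (transition sequences without matching end/start states), for which executability is established combinatorially via Proposition~\ref{prop:repetition-lemma} (repetition) and Proposition~\ref{prop:merge-proposition} (shuffling), and only afterwards assembles an actual path $\witComplete_\layer$ by shuffling the pre-paths $\ppathComplete_0,\ldots,\ppathComplete_\layer$ through the fixed decompositions $\cycle(\node) = \paath_0\paath_1\cdots\paath_\NoC$. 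Non-negativity at every intermediate step then follows automatically from Proposition~\ref{prop:merge-proposition}, not from any rearrangement of increments; without this separation of value bookkeeping from path structure, the bookkeeping obstacle you flag remains unresolved.
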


The upper bounds of Theorem~\ref{thm:bounds} will be proved in Section~\ref{sec:upper-bound} (Theorem~\ref{thm:upper-bound}) and the lower bounds in Section~\ref{sec:lower-bound} (Corollary~\ref{cor:lower-bound}).

We will prove in Section~\ref{sec:characterization} that the exponents of the variable and transition bounds are bounded exponentially in the dimension of $\vass$:
\begin{theorem}
\label{thm:bounds-size}
  We have $\vExp(\var) \le 2^{|\vars|}$ for all $\var \in \vars$ and $\tExp(\transition) \le 2^{|\vars|}$ for all $\transition \in \transitions(\vass)$.
\end{theorem}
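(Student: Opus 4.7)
The plan is to prove Theorem~\ref{thm:bounds-size} by bounding the total number of iterations of Algorithm~\ref{alg:algorithm} by $2^{|\vars|}$. Since $\vExp(\var)$ and $\tExp(\transition)$ are each assigned the value $\layer$ precisely at the iteration $\layer$ where they are first updated, such an iteration bound yields the claimed exponent bound directly.

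My approach would track the iterations $\layer_1 < \layer_2 < \cdots < \layer_m$ at which at least one new variable becomes bounded; since only previously unbounded variables can become bounded, $m \le |\vars|$. The central technical claim I would aim for is a geometric growth bound of the form $\layer_{i+1} \le 2\layer_i$; a straightforward induction then yields $\layer_m \le 2^{|\vars|}$. A separate short argument handles the tail iterations after $\layer_m$, during which only transitions are newly bounded: the exponential-complexity check built into Algorithm~\ref{alg:algorithm} terminates the loop once $\layer$ exceeds every finite sum $\vExp(\var) + \tExp(\transition)$, and one verifies that this forces termination within the target bound. In particular, any transition bounded later than $\layer_m$ must still satisfy $\tExp(\transition) \le 2\layer_m$, which is absorbed into the $2^{|\vars|}$ bound with a modest sharpening of the base case $\layer_1$.

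The principal obstacle is establishing the geometric growth bound $\layer_{i+1} \le 2\layer_i$. I would exploit the duality of Lemma~\ref{lem:optimization-duality} together with the structure of $\vars_\extended$: once $\vExp(\var) = \layer_i$ has been set, in later iterations $\vars_\extended$ contains a copy $(\var,\node)$ for every node $\node$ of layer $\layer - \layer_i$ of $\ctree$, so the dimension of the feasible cone of~($\conSysII$) grows with the depth of $\ctree$. Intuitively, once enough such copies are present, any multi-cycle feasible for~($\conSysI$) must use transitions that strictly increase some still-unbounded variable; by Lemma~\ref{lem:optimization-duality} this forces the dual optimum $\rankCoeff$ to satisfy $\rankCoeff(\var,\Root) > 0$ for some $\var$ with $\vExp(\var) = \nobound$, i.e., a new $\vExp$ update is recorded. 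Quantifying \emph{how many} copies suffice --- and hence how many iterations beyond $\layer_i$ can pass without a new bounded variable --- is the technical heart of the argument and likely proceeds via a pigeonhole over the distinct sub-VASS labels that can appear in a single layer of $\ctree$, combined with a rank-based analysis of the projection of the feasible cone of~($\conSysII$) onto the root-variable coordinates $\{(\var,\Root) \mid \vExp(\var) = \nobound\}$.
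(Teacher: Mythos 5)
Your proposal of bounding the iteration index at which each $\vExp$ or $\tExp$ value is set is the right target (since $\vExp(\var)=\layer$ and $\tExp(\transition)=\layer$ are assigned exactly at loop iteration $\layer$), but the concrete bound you propose as the technical core, $\layer_{i+1} \le 2\layer_i$, does not hold in general and is not what the paper establishes. The paper's key lemma (Lemma~\ref{lem:var-bound-relation}) states that $\vExp(\var)=\layer$ implies $\vExp(\var) \le \varsum(\layer)$, where $\varsum(\layer)$ is the \emph{sum over all variables} of the exponents already discovered below $\layer$. That bound permits jumps that violate your geometric claim: if, say, $\vExp(x_1)=1$ and then two variables $x_2, x_3$ are both bounded at iteration $2$, the paper's lemma allows the next variable to be bounded as late as iteration $\varsum(\layer)=1+2+2=5 > 2\cdot 2$. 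Your doubling bound would therefore need an argument strictly stronger than what the paper proves, and the route you sketch (duality of Lemma~\ref{lem:optimization-duality} plus a pigeonhole over distinct sub-VASS labels and a rank analysis of the projected cone) is only an intuition; it contains no mechanism that would force the step to stay within a factor of two rather than within the accumulated sum $\varsum$.

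The paper's actual argument is considerably more structured. It introduces a \emph{potential} $\pot$ for vectors and nodes of the decomposition tree $\ctree$, shows via the Bound Proof Principle that a trace can enter $\vassF(\node)$ at most $O(N^{\pot(\node)})$ times (Lemma~\ref{lem:entering-property}), uses this together with Proposition~\ref{prop:inv-set-properties} to prove $\vExp(\var)\le\varsum(\layer)$ and $\tExp(\transition)\le\varsum(\layer)$ whenever a new bound is set in iteration $\layer$ (Lemma~\ref{lem:var-bound-relation}), and finally closes the loop with an arithmetic induction (Lemma~\ref{lem:vars-bounded}) giving $\varsum(\layer)\le 2^k$ where $k$ is the number of variables already bounded below $\layer$; the inequality $(k-m+1)2^m \le 2^k$ is what absorbs the possibility of several variables being bounded in the same iteration. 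Your tail argument (transitions bounded after all variables) is also handled uniformly by Lemma~\ref{lem:var-bound-relation} in the paper, since $\varsum$ stabilizes once all variables are bounded; there is no separate case analysis required. In short: your high-level plan points in the right direction, but the central quantitative claim needs to be replaced by the $\varsum$-based bound, and the supporting machinery (potentials, the entering-count lemma, the complexity-witness construction in the proof of Lemma~\ref{lem:var-bound-relation}) is where all the actual work lies.
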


Finally, we obtain the following corollary from Theorems~\ref{thm:bounds} and~\ref{thm:bounds-size}:

\begin{corollary}
\label{cor:complexity}
  Let $\vass$ be a connected VASS.
  Then, either $\complexity_\vass(N) \in 2^{\Omega(N)}$ or $\complexity_\vass(N) \in \Theta(N^i)$ for some computable $1 \le i \le 2^{|\vars|}$.
\end{corollary}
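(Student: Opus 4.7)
The plan is to derive the corollary directly by invoking Algorithm~\ref{alg:algorithm} and chaining the three preceding results. First I would run Algorithm~\ref{alg:algorithm} on the input VASS $\vass$; Lemma~\ref{lem:termination} guarantees that this procedure halts. This yields two cases, which I will treat separately.

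In the first case, the algorithm returns ``$\vass$ has at least exponential complexity'', and Theorem~\ref{thm:exponential} gives $\complexity_\vass(N) \in 2^{\Omega(N)}$ immediately, disposing of the exponential branch of the dichotomy.

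In the second case, the algorithm terminates normally, so by the termination criterion $\tExp(\transition) \neq \nobound$ for every $\transition \in \transitions(\vass)$. Set $i := \max_{\transition \in \transitions(\vass)} \tExp(\transition)$. The key bridge is the identity $\length(\trace) = \sum_{\transition \in \transitions(\vass)} \instances(\trace,\transition)$ noted in the preliminaries, which allows me to translate transition bounds into a complexity bound. For the upper direction, summing the bounds from Theorem~\ref{thm:bounds} over the finitely many transitions of $\vass$ gives $\complexity_\vass(N) \le \sum_\transition \tbound_\transition(N) \in O(N^i)$. For the lower direction, pick $\transition^\ast$ realizing the maximum and observe that any trace witnessing $\instances(\trace,\transition^\ast) \in \Omega(N^i)$ has length at least that many steps, so $\complexity_\vass(N) \in \Omega(N^i)$. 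Together this gives $\complexity_\vass(N) \in \Theta(N^i)$.

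It remains to verify the range $1 \le i \le 2^{|\vars|}$ and computability. The upper bound $i \le 2^{|\vars|}$ follows directly from Theorem~\ref{thm:bounds-size}. The lower bound $i \ge 1$ holds because each $\tExp(\transition)$ is only ever assigned the value of the loop counter $\layer$, which is initialized to $1$ and only incremented, so every finite $\tExp(\transition)$ is at least $1$ (in the degenerate case $\transitions(\vass) = \emptyset$ the complexity is the constant $0$ and the claim is vacuous). Computability of $i$ is immediate from the fact that Algorithm~\ref{alg:algorithm} runs in polynomial time, as advertised in the introduction and encoded in the PTIME solvability of constraint systems~($\conSysI$) and~($\conSysII$). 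I do not expect any substantial obstacle here, since all of the real work has been absorbed into Lemma~\ref{lem:termination}, Theorem~\ref{thm:exponential}, Theorem~\ref{thm:bounds}, and Theorem~\ref{thm:bounds-size}; the only subtlety is the routine observation that the complexity of a VASS is asymptotically governed by its largest transition bound, which is why $i$ can be read off as the maximum over $\tExp$.
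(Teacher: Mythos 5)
Your proposal matches the paper's own proof essentially step for step: you split on the outcome of Algorithm~\ref{alg:algorithm}, invoke Theorem~\ref{thm:exponential} in the exponential branch, take $i = \max_{\transition} \tExp(\transition)$ in the polynomial branch, and convert transition bounds to a complexity bound via the identity $\length(\trace) = \sum_\transition \instances(\trace,\transition)$ together with Theorems~\ref{thm:bounds} and~\ref{thm:bounds-size}. Your added remarks (invoking Lemma~\ref{lem:termination} explicitly, noting that $i\ge 1$ because $\tExp$ is set to the loop counter starting at $1$, and the degenerate empty-transition case) are correct routine elaborations that the paper leaves implicit.
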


\subsection{Complexity of Algorithm~\ref{alg:algorithm}}

In the remainder of this section we will establish the following result:

\begin{theorem}
  Algorithm~\ref{alg:algorithm} (with the below stated optimization) can be implemented in polynomial time with regard to the size of the input VASS $\vass$.
\end{theorem}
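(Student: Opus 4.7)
The plan is to decompose the analysis into two claims: (a) each loop iteration can be executed in polynomial time in $|\vass|$, and (b) the stated optimization reduces the number of iterations to a polynomial bound. Claim (a) is largely routine, so I would dispatch it first, isolating the real work in claim (b).

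For (a), the key observation is that every layer of $\ctree$ contains at most $|\states(\vass)|$ nodes, because the sub-VASSs labelling nodes in the same layer are pairwise disjoint. Hence $|\vars_\extended| \le |\vars|\cdot|\states(\vass)|$ and $|\unboundedTransitions| \le |\transitions(\vass)|$, so the matrices $\updates_\extended$ and $\flowMatrix|_\unboundedTransitions$ have polynomial dimensions and entries no larger than those of $\updates$ and $\flowMatrix$. As already noted in Section~\ref{sec:farkas}, constraint systems~($\conSysI$) and~($\conSysII$) can then be solved in polynomial time by rational linear programming, with the maximization objectives encoded by standard linear objective functions, and integer solutions recovered by scaling with the least common multiple of the denominators. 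SCC decomposition and the remaining bookkeeping are classical polynomial-time operations.

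The main obstacle is (b): a naive bound on the iteration count equals the depth of $\ctree$, which by Theorem~\ref{thm:bounds-size} may grow as large as $2^{|\vars|}$. I would call an iteration \emph{progressive} if it assigns $\tExp(\transition) \neq \nobound$ to some new $\transition$ or $\vExp(\var) \neq \nobound$ to some new $\var$; since each of these values is set at most once, there are at most $|\vars|+|\transitions(\vass)|$ progressive iterations. The first step is to verify that a non-progressive iteration, one in which $\decreasingTransitions=\emptyset$ and no $\rankCoeff(\var,\Root)>0$ holds, merely duplicates the current layer of $\ctree$, because the SCC decomposition of every $(\states(\vass'),\transitions(\vass')\setminus\emptyset)$ returns $\vass'$ itself.

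To obtain polynomial running time, the optimization must leap over an entire maximal run of non-progressive iterations in polynomial time. The plan is to parameterise the constraint systems~($\conSysI$) and~($\conSysII$) symbolically in $\layer$: during such a run, the sets $\unboundedTransitions$ and $\transitions(\vassF(\node))$ are stable, and the only change across successive iterations is that already-bounded variables acquire additional but structurally identical copies in $\vars_\extended$. The hard part will be to show that, under this symbolic parameterisation, the smallest future $\layer'>\layer$ at which a progressive solution appears, or a certificate that no such $\layer'$ exists within the current run (triggering the exponential-complexity return), can be computed by solving a constant number of polynomial-size auxiliary LPs. Combining this skipping procedure with the $O(|\vars|+|\transitions(\vass)|)$ bound on the number of progressive iterations yields overall polynomial running time, as required.
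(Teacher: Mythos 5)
Your decomposition into (a) polynomial cost per iteration and (b) a polynomial bound on the number of iterations is the same as the paper's, and your argument for (a) is essentially identical (disjointness of the sub-VASSs in a layer gives a polynomial bound on $|\vars_\extended|$ and $|\unboundedTransitions|$, then LP over the rationals). Your definition of progressive iterations and the $O(|\vars|+|\transitions(\vass)|)$ bound on their number is also sound.

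The gap is in (b): you correctly identify that the entire difficulty is to leap over a maximal run of non-progressive iterations, but you explicitly leave the leap itself as an open "hard part," proposing only a vague symbolic parameterization without a concrete characterization of where to jump. The paper fills exactly this hole with its skip lemma (Lemma~\ref{lem:skip-lemma}): after completing iteration $\layer$, the next iteration at which any $\vExp$ or $\tExp$ value changes is precisely $\layer' = \min\{\layer' > \layer : \layer' \in \possibleBoundSet\}$, where $\possibleBoundSet = \{\tExp(\transition)+\vExp(\var) : \var\in\vars,\ \transition\in\transitions(\vass)\}$. This gives a closed-form jump target computed from already-known quantities, with no auxiliary LPs needed. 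Your observation that "the only change across iterations is that already-bounded variables acquire structurally identical copies" is also only conditionally correct: the copies for variable $\var$ in iteration $i$ live in layer $i - \vExp(\var)$, and whether that layer is structurally identical to layer $\layer-\vExp(\var)$ is precisely what must be proved, and what the skip lemma's proof establishes (by showing that otherwise there would be a transition $\transition$ with $\layer < \tExp(\transition)+\vExp(\var) < \layer'$, contradicting the definition of $\layer'$). Without that lemma or an equivalent, the proposal does not close.
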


We will argue that A) every loop iteration of Algorithm~\ref{alg:algorithm} only takes polynomial time, and B) that polynomially many loop iterations are sufficient (this only holds for the optimization of the algorithm discussed below).

Let $\vass$ be a VASS, let $m = |\transitions(\vass)|$ be the number of transitions of $\vass$, and let $n = |\vars|$ be the dimension of $\vass$.
We note that $|\layerF(\layer)| \le m$ for every layer $\layer$ of $\ctree$, because the VASSs of the nodes in the same layer are disjoint.

A) Clearly, removing the decreasing transitions and computing the strongly connected components can be done in polynomial time.
It remains to argue about constraint systems~(\conSysI) and~(\conSysII).
We observe that $|\vars_\extended| = |\{ (\var,\node) \mid \node \in \layerF(\layer - \vExp(\var)) \}| \le n \cdot m$ and $|U| \le m$.
Hence the size of constraint systems~(\conSysI) and~(\conSysII) is polynomial in the size of $\vass$.
Moreover, constraint systems~(\conSysI) and~(\conSysII) can be solved in PTIME as noted in Section~\ref{sec:farkas}.

B) We do not a-priori have a bound on the number of iterations of the main loop of Algorithm~\ref{alg:algorithm}.
(Theorem~\ref{thm:bounds-size} implies that the number of iterations is at most exponential; however, we do not use this result here).
We will shortly state an improvement of Algorithm~\ref{alg:algorithm} that ensures that polynomially many iterations are sufficient.
The underlying insight is that certain layers of the tree do not need to be constructed explicitly.
This insight is stated in the lemma below:
\begin{lemma}
\label{lem:skip-lemma}
  We consider the point in time when the execution of Algorithm~\ref{alg:algorithm} reaches line $\layer := \layer + 1$ during some loop iteration $\layer \ge 1$.
  Let $\possibleBoundSet = \{ \tExp(\transition) + \vExp(\var) \mid \var \in \vars, \transition \in \transitions(\vass)\}$ and let $\layer' = \min \{ \layer' \mid \layer' > \layer, \layer' \in \possibleBoundSet\}$.
  Then, $\vExp(\var) \neq i$ and $\tExp(\transition) \neq i$ for all $\var \in \vars$, $\transition \in \transitions(\vass)$ and $\layer < i < \layer'$.
\end{lemma}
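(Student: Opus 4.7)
The plan is to proceed by induction on $i$ within the range $\layer < i < \layer'$, showing that iteration~$i$ assigns no new value to $\tExp$ or $\vExp$, by exhibiting the iteration~$\layer$ optimum $\counters^{(\layer)}$ as a witness feasible for iteration~$i$'s $(\conSysI)$ with the required positivities.

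First, I would extract the structural consequences of the induction hypothesis together with the hypothesis $\layer < i < \layer'$. The induction hypothesis, combined with Lemma~\ref{lem:full-decomposition}, yields $\unboundedTransitions^{(i)} = \unboundedTransitions^{(\layer)} \setminus \decreasingTransitions^{(\layer)}$ and chain bijections identifying the nodes of $\layerF(\layer), \layerF(\layer+1), \ldots, \layerF(i-1)$ with matching sub-VASSs. The hypothesis $\layer < i < \layer'$ enters via the key observation that, for every variable $\var$ with finite $\vExp(\var) = k$ and every $j \in \{\layer{-}k{+}1, \ldots, i{-}k\}$, $\decreasingTransitions^{(j)} = \emptyset$: if $j > \layer$ this is the induction hypothesis, and if $j \le \layer$ then any transition $\transition$ with $\tExp(\transition) = j$ would give $\tExp(\transition)+\vExp(\var) = j+k \in (\layer, \layer')$, contradicting $\possibleBoundSet \cap (\layer, \layer') = \emptyset$. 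Consequently $\layerF(i-k)$ is a chain extension of $\layerF(\layer-k)$ with matching sub-VASSs; for $k = \layer$ this reduces to a chain copy of $\{\Root\}$, a single node carrying the whole VASS $\vass$.

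Next, $\counters^{(\layer)}$ is feasible for iteration~$i$'s $(\conSysI)$: the Kirchhoff equality and non-negativity carry over because $\counters^{(\layer)}$ vanishes on $\decreasingTransitions^{(\layer)}$ by complementary slackness at iteration~$\layer$; and for each $(\var, \node) \in \vars_\extended^{(i)}$, the local non-decrease constraint at $\node \in \layerF(i - \vExp(\var))$ coincides, via the chain identification above, with iteration~$\layer$'s constraint at the corresponding ancestor in $\layerF(\layer - \vExp(\var))$, reducing to the global root-constraint when $\vExp(\var) \in \{\layer, \nobound\}$. By Lemma~\ref{lem:optimization-duality} applied at iteration~$\layer$, we furthermore have $\counters^{(\layer)}(\transition) \ge 1$ for every $\transition \in \unboundedTransitions^{(i)}$ and $(\updates_\extended^{(i)}\counters^{(\layer)})(\var, \Root) \ge 1$ for every $\var$ with $\vExp(\var) = \nobound$.

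The conclusion follows from the maximization criteria combined with complementary slackness: closure of the feasible set of $(\conSysI)$ under addition ensures iteration~$i$'s optimum $\counters^{(i)}$ also achieves $\counters^{(i)}(\transition) \ge 1$ and $(\updates_\extended^{(i)}\counters^{(i)})(\var, \Root) \ge 1$ for the relevant $\transition$ and $\var$, and complementary slackness then forces every optimal $(\rankCoeff^{(i)}, \offsets^{(i)})$ to satisfy $\rankCoeff^{(i)}(\var, \Root) = 0$ and $(\updates_\extended^T \rankCoeff^{(i)} + \flowMatrix|_{\unboundedTransitions^{(i)}}^T \offsets^{(i)})(\transition) = 0$, ruling out any new assignment at iteration~$i$. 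The technical heart of the proof is the chain-extension argument in the previous paragraph, where the interplay between the induction hypothesis and the hypothesis $\layer < i < \layer'$ certifies that the ostensibly finer decomposition at depth $i - \vExp(\var)$ does not actually refine the decomposition at depth $\layer - \vExp(\var)$ already accounted for in iteration~$\layer$.
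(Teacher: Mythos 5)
Your proposal is correct and takes essentially the same route as the paper's own proof: induction on $i$ over $(\layer,\layer')$, a structural argument showing that for each $\var$ the decomposition at depth $i-\vExp(\var)$ is a trivial chain extension of the one at depth $\layer-\vExp(\var)$ (the paper's displayed fact~(+), which you derive from the same contradiction with $\possibleBoundSet\cap(\layer,\layer')=\emptyset$), feasibility of the iteration-$\layer$ optimum $\counters$ (restricted to $\unboundedTransitions^{(i)}$, where it vanishes on $\decreasingTransitions^{(\layer)}$) for iteration $i$'s~$(\conSysI)$, and then Lemma~\ref{lem:optimization-duality} together with closure under addition to force every optimal $(\rankCoeff,\offsets)$ at iteration $i$ to be null on the relevant entries. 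One small presentational caveat: the phrase ``$\decreasingTransitions^{(j)}=\emptyset$'' for $j\le\layer$ should be understood as the assertion that no transition has $\tExp(\transition)=j$ for $j$ in the range determined by the particular $k=\vExp(\var)$; as you derive it, this is exactly what the contradiction with $\layer'$ yields, so the claim is correct, but a reader could momentarily misread it as an unconditional statement about all pre-$\layer$ iterations.
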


We now present the optimization that achieves polynomially many loop iterations.
We replace the line $\layer := \layer + 1$ by the two lines $\possibleBoundSet := \{ \tExp(\transition) + \vExp(\var) \mid \var \in \vars, \transition \in \transitions(\vass)
\}$ and
$\layer := \min \{ \layer' \mid \layer' > \layer , \layer' \in \possibleBoundSet\}$.
The effect of these two lines is that Algorithm~\ref{alg:algorithm} directly skips to the next relevant layer.
Lemma~\ref{lem:skip-lemma}, stated above, justifies this optimization:
First, no new variable or transition bound is discovered in the intermediate layers $\layer < i < \layer'$.
Second, each intermediate layer $\layer < i < \layer'$ has the same number of nodes as layer $\layer$, which are labelled by the same sub-VASSs as the nodes in $\layer$ (otherwise there would be a transition with transition bound $\layer < i < \layer'$);
hence, whenever needed, Algorithm~\ref{alg:algorithm} can construct a missing layer $\layer < i < \layer'$ on-the-fly from layer $\layer$.

We now analyze the number of loop iterations of the optimized algorithm.
We recall that the value of each $\vExp(\var)$ and $\tExp(\transition)$ is changed at most once from $\nobound$ to some value $\neq \nobound$.
Hence, Algorithm~\ref{alg:algorithm} encounters at most $n \cdot m$ different values in the set $\possibleBoundSet = \{ \tExp(\transition) + \vExp(\var) \mid \var \in \vars, \transition \in \transitions(\vass) \}$ during execution.
Thus, the number of loop iterations is bounded by $n \cdot m$.
\section{Proof of the Upper Bound Theorem}
\label{sec:upper-bound}

We begin by stating a proof principle for obtaining upper bounds.

\begin{proposition}[Bound Proof Principle]
\label{prop:bound-proof-principle}
Let $\vass$ be a VASS.
Let $\unboundedTransitions \subseteq \transitions(\vass)$ be a subset of the transitions of $\vass$.
Let $\coeff: \configs(\vass) \rightarrow \nats$ and $\increase_\transition: \nats \rightarrow \nats$, for every $\transition \in \transitions(\vass) \setminus \unboundedTransitions$, be functions such that for every trace $\trace = (\state_0,\val_0) \xrightarrow{\update_1} (\state_1,\val_1) \xrightarrow{\update_2} \cdots$ of $\vass$ with $\initBound(\trace) \le N$ we have for every $i \ge 0$ that
\begin{enumerate}[label=\arabic*),ref=\arabic*]
  \item $\state_i \xrightarrow{\update_i} \state_{i+1} \in \unboundedTransitions$ implies $\coeff(\state_i,\val_i) \ge \coeff(\state_{i+1},\val_{i+1})$, and
  \item $\state_i \xrightarrow{\update_i} \state_{i+1} \in \transitions(\vass) \setminus \unboundedTransitions$ implies $\coeff(\state_i,\val_i) + \increase_\transition(N) \ge \coeff(\state_{i+1},\val_{i+1})$.
\end{enumerate}
We call such a function $\coeff$ a \emph{complexity witness} and the associated $\increase_\transition$ functions the \emph{increase certificates}.

Let $\transition \in \unboundedTransitions$ be a transition on which $\coeff$ \emph{decreases}, i.e., we have $\coeff(\state_1,\val_1) \ge \coeff(\state_2,\val_2) -1$ for every step $(\state_1,\val_1) \xrightarrow{\update} (\state_2,\val_2)$ of $\vass$ with $\transition = \state_1 \xrightarrow{\update} \state_2$.
Then,
$$\tbound_\transition(N) \le \max_{(\state,\val) \in \configs(\vass), \norm{\val} \le N} \coeff(\state,\val) + \sum_{\transition' \in \transitions(\vass) \setminus \unboundedTransitions} \tbound_{\transition'}(N) \cdot \increase_{\transition'}(N).$$

Further, let $\var \in \vars$ be a variable such that $\val(\var) \le \coeff(\state,\val)$ for all $(\state,\val) \in \configs(\vass)$.
Then,
$$\vbound_\var(N) \le \max_{(\state,\val) \in \configs(\vass), \norm{\val} \le N} \coeff(\state,\val) + \sum_{\transition' \in \transitions(\vass) \setminus \unboundedTransitions} \tbound_{\transition'}(N) \cdot \increase_{\transition'}(N).$$
\end{proposition}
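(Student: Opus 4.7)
The plan is to view $\coeff$ as a discrete potential function along a trace and to argue by telescoping. Fix any trace $\trace = (\state_0,\val_0) \xrightarrow{\update_1} \cdots \xrightarrow{\update_k} (\state_k,\val_k)$ of $\vass$ with $\initBound(\trace) \le N$, and consider the per-step differences $\coeff(\state_i,\val_i) - \coeff(\state_{i-1},\val_{i-1})$. By hypothesis~(1) every step whose transition lies in $\unboundedTransitions$ contributes $\le 0$; by the distinguished decrease hypothesis, every step whose transition equals $\transition$ contributes $\le -1$; and by hypothesis~(2), every step whose transition is some $\transition' \in \transitions(\vass) \setminus \unboundedTransitions$ contributes $\le \increase_{\transition'}(N)$. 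Summing over $i$ I obtain
\[
\coeff(\state_k,\val_k) - \coeff(\state_0,\val_0) \;\le\; -\,\instances(\trace,\transition) \;+\; \sum_{\transition' \in \transitions(\vass) \setminus \unboundedTransitions} \instances(\trace,\transition') \cdot \increase_{\transition'}(N).
\]

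For the transition bound, I would exploit that $\coeff$ has codomain $\nats$ and hence $\coeff(\state_k,\val_k) \ge 0$. Rearranging the telescoping inequality yields
\[
\instances(\trace,\transition) \;\le\; \coeff(\state_0,\val_0) + \sum_{\transition' \in \transitions(\vass) \setminus \unboundedTransitions} \instances(\trace,\transition') \cdot \increase_{\transition'}(N).
\]
Using $\coeff(\state_0,\val_0) \le \max_{\norm{\val} \le N} \coeff(\state,\val)$ together with $\instances(\trace,\transition') \le \tbound_{\transition'}(N)$, and then taking the supremum over all traces $\trace$ with $\initBound(\trace) \le N$, delivers the stated bound on $\tbound_\transition(N)$.

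For the variable bound, I would apply the same telescoping identity not to the full trace but to each prefix $\trace_i = (\state_0,\val_0) \xrightarrow{\update_1} \cdots \xrightarrow{\update_i} (\state_i,\val_i)$. Discarding the non-positive $-\instances(\trace_i,\transition)$ term (keeping only the upper bound) gives
\[
\coeff(\state_i,\val_i) \;\le\; \coeff(\state_0,\val_0) + \sum_{\transition'\in \transitions(\vass) \setminus \unboundedTransitions} \instances(\trace_i,\transition') \cdot \increase_{\transition'}(N).
\]
The dominance hypothesis $\val(\var) \le \coeff(\state,\val)$ then transfers the bound to $\val_i(\var)$, and using $\instances(\trace_i,\transition') \le \instances(\trace,\transition') \le \tbound_{\transition'}(N)$ followed by the supremum over $i$ and over traces $\trace$ yields the claimed bound on $\vbound_\var(N)$.

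The argument is essentially a potential-function telescoping and I do not expect any serious obstacle. The only minor subtleties are (a) that the variable bound requires reasoning about every intermediate configuration, which is handled by applying the telescoping to prefixes rather than to the full trace, and (b) that the distinguished $\transition$ already lies in $\unboundedTransitions$, so its per-step contribution must be counted once, using the sharpened $\le -1$ estimate, rather than being also charged against hypothesis~(2).
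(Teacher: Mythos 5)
Your telescoping potential-function argument is correct and is exactly the intended reasoning behind the Bound Proof Principle; the paper in fact states this Proposition without giving an explicit proof, treating it as evident. You handle the two subtleties correctly: counting each occurrence of the distinguished $\transition$ with the sharpened $\le -1$ estimate rather than the $\le 0$ estimate of hypothesis~(1), and applying the telescoping inequality to every prefix of the trace so the dominance $\val(\var) \le \coeff(\state,\val)$ can be read off at each intermediate configuration.
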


\paragraph{Proof Outline of the Upper Bound Theorem.}
Let $\vass$ be a VASS for which Algorithm~\ref{alg:algorithm} does not report exponential complexity.
We will prove by induction on loop iteration $\layer$ that $\vbound_N(\var) \in O(N^\layer)$ for every $\var \in \vars$ with $\vExp(\var) = \layer$ and that $\tbound_N(\transition) \in O(N^\layer)$ for every $\transition \in \transitions(\vass)$ with $\tExp(\transition) = \layer$.

We now consider some loop iteration $\layer \ge 1$.
Let $\unboundedTransitions = \bigcup_{\node \in \layerF(\layer-1)} \transitions(\vassF(\node))$ be the transitions, $\vars_\extended$ be the set of extended variables and $\updates_\extended \in \mathbb{Z}^{\vars_\extended \times \unboundedTransitions}$ be the update matrix considered by Algorithm~\ref{alg:algorithm} during loop iteration $\layer$.
Let $\rankCoeff,\offsets$ be some optimal solution to constraint system~($\conSysII$) computed by Algorithm~\ref{alg:algorithm} during loop iteration $\layer$.
The main idea for the upper bound proof is to use the quasi-ranking function from Lemma~\ref{lem:affine-ranking} as witness function for the Bound Proof Principle.
In order to apply Lemma~\ref{lem:affine-ranking} we need to consider the VASS associated to the matrices in constraint system~($\conSysII$):
Let $\vass_\extended$ be the VASS over variables $\vars_\extended$ associated to update matrix $\updates_\extended$ and flow matrix $\flowMatrix|_\unboundedTransitions$.
From Lemma~\ref{lem:affine-ranking} we get that $\affineRank(\rankCoeff,\offsets): \configs(\vass_\extended) \rightarrow \mathbb{N}$ is a quasi-ranking function for $\vass_\extended$.
We now need to relate $\vass$ to the extended VASS $\vass_\extended$ in order to be able to use this quasi-ranking function.
We do so by extending valuations over $\vars$ to valuations over $\vars_\extended$.
For every state $\state \in \states(\vass)$ and valuation $\val: \vars \rightarrow \nats$, we define the \emph{extended valuation} $\extF_\state(\val): \vars_\extended \rightarrow \nats$ by setting
\begin{displaymath}
  \extF_\state(\val)(\var,\node) = \left\{    \begin{array}{cc}
        \val(\var), & \text{if } \state \in \states(\vassF(\node)), \\
        0, & \text{otherwise.} \\
    \end{array}   \right.
\end{displaymath}

As a direct consequence from the definition of extended valuations, we have that
$(\state,\extF_\state(\val)) \in \configs(\vass_\extended)$ for all $(\state,\val) \in \configs(\vass)$, and that
$(\state_1,\extF_{\state_1}(\val_1))  \xrightarrow{\updates_\extended(\transition)} (\state_2,\extF_{\state_2}(\val_2))$ is a step of $\vass_\extended$ for every step $(\state_1,\val_1) \xrightarrow{\update} (\state_2,\val_2)$ of $\vass$ with $\state_1 \xrightarrow{\update} \state_2 \in \unboundedTransitions$.
We now define the witness function $\coeff$ by setting $$\coeff(\state,\val) = \affineRank(\rankCoeff,\offsets)(\state,\extF_\state(\val)) \quad \quad \text{ for all } (\state,\val) \in \configs(\vass).$$
We immediately get from Lemma~\ref{lem:affine-ranking} that $\coeff$ maps configurations to the non-negative integers and that condition 1) of the Bound Proof Principle is satisfied.
Indeed, we get from the first item of Lemma~\ref{lem:affine-ranking} that $\coeff(\state,\val) \ge 0$ for all $(\state,\val) \in \configs(\vass)$, and from the second item that $\coeff(\state_1,\val_1) \ge \coeff(\state_2,\val_2)$ for every step $(\state_1,\val_1) \xrightarrow{\update} (\state_2,\val_2)$ of $\vass$ with $\transition = \state_1 \xrightarrow{\update} \state_2 \in \unboundedTransitions$;
moreover, the inequality is strict if $(\updates_\extended^T \rankCoeff + \flowMatrix|_\unboundedTransitions^T \offsets)(\transition) < 0$, i.e., the witness function $\coeff$ decreases for transitions $\transition$ with $\tExp(\transition) = \layer$.
It remains to establish condition 2) of the Bound Proof Principle.
We will argue that we can find increase certificates $\increase_\transition(N) \in O(N^{\layer - \tExp(\transition)})$ for all $\transition \in \transitions(\vass) \setminus \unboundedTransitions$.
We note that $\tExp(\transition) < \layer$ for all $\transition \in \transitions(\vass) \setminus \unboundedTransitions$, and hence the induction assumption can be applied for such $\transition$.
We can then derive the desired bounds from the Bound Proof Principle because of $\sum_{\transition \in \transitions(\vass) \setminus \unboundedTransitions} \tbound_\transition(N) \cdot \increase_\transition(N) = \sum_{\transition \in \transitions(\vass) \setminus \unboundedTransitions} O(N^{\tExp(\transition)}) \cdot O(N^{\layer-\tExp(\transition)}) = O(N^\layer)$.

\begin{theorem}
\label{thm:upper-bound}
$\vbound_N(\var) \in O(N^{\vExp(\var)})$ for all $\var \in \vars$ and $\tbound_N(\transition) \in O(N^{\tExp(\transition)})$ for all $\transition \in \transitions(\vass)$.
\end{theorem}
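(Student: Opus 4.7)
I would prove the theorem by induction on the loop iteration $\layer$, establishing simultaneously the claimed bounds for every variable with $\vExp(\var) = \layer$ and every transition with $\tExp(\transition) = \layer$. The Bound Proof Principle is to be applied with $\unboundedTransitions = \bigcup_{\node \in \layerF(\layer-1)} \transitions(\vassF(\node))$ and the complexity witness $\coeff(\state,\val) := \affineRank(\rankCoeff,\offsets)(\state, \extF_\state(\val))$, where $(\rankCoeff,\offsets)$ is the optimal solution of (\conSysII) computed in iteration $\layer$. Because each step of $\vass$ using a transition $\transition \in \unboundedTransitions$ lifts to a step of $\vass_\extended$ with update $\updates_\extended(\transition)$ (immediate from the definition of $\extF_\state$), Lemma~\ref{lem:affine-ranking} gives non-negativity, condition~1 of the Bound Proof Principle, and the strict decrease of $\coeff$ on exactly the transitions $\transition$ with $\tExp(\transition) = \layer$. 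Moreover, $\coeff$ is affine in $\val$ with integer coefficients independent of $N$, so $\max_{\|\val\|\le N}\coeff(\state,\val) \in O(N)$.

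The main obstacle is condition~2: producing increase certificates $\increase_\transition(N) \in O(N^{\layer - \tExp(\transition)})$ for every $\transition = \state_1 \xrightarrow{\update} \state_2 \in \transitions(\vass) \setminus \unboundedTransitions$. Setting $\layer' := \tExp(\transition) < \layer$, I would expand
\[
\coeff(\state_2,\val_2) - \coeff(\state_1,\val_1) = \sum_{(\var,\node) \in \vars_\extended} \rankCoeff(\var,\node)\bigl[\extF_{\state_2}(\val_2)(\var,\node) - \extF_{\state_1}(\val_1)(\var,\node)\bigr] + \offsets(\state_2) - \offsets(\state_1),
\]
and classify summands by the layer of $\node$. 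The central structural observation is that $\transition$ was placed in $\decreasingTransitions$ at iteration $\layer'$, so $\state_1, \state_2 \in \states(\vassF(\node^\star))$ for some $\node^\star \in \layerF(\layer'-1)$. Combined with the disjointness of sub-VASSs in a common layer and the ancestor invariant of Algorithm~\ref{alg:algorithm}, this forces each layer $\le \layer' - 1$ to contain a single node $\node$ with both $\state_1, \state_2 \in \vassF(\node)$ and to have every other node in that layer contain neither endpoint. Summands for such $\node$ therefore collapse to $\rankCoeff(\var,\node) \cdot \update(\var) = O(1)$.

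The remaining summands come from $\node \in \layerF(\layer - \vExp(\var))$ with $\layer - \vExp(\var) \ge \layer'$, equivalently from already-bounded variables $\var$ satisfying $\vExp(\var) \le \layer - \layer' < \layer$. Each such summand is bounded in absolute value by $|\rankCoeff(\var,\node)| \cdot (\val_1(\var) + \val_2(\var))$, which by the inductive hypothesis on $\vbound_\var$ lies in $O(N^{\vExp(\var)}) \subseteq O(N^{\layer-\layer'})$. Since there are only boundedly many $(\var,\node)$ pairs and $|\offsets(\state_2) - \offsets(\state_1)| = O(1)$, this establishes $\increase_\transition(N) \in O(N^{\layer - \tExp(\transition)})$.

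With the certificates in hand, the Bound Proof Principle yields
\[
\tbound_\transition(N) \le O(N) + \sum_{\transition' \in \transitions(\vass) \setminus \unboundedTransitions} O\bigl(N^{\tExp(\transition')}\bigr) \cdot O\bigl(N^{\layer - \tExp(\transition')}\bigr) = O(N^\layer)
\]
for every $\transition$ with $\tExp(\transition) = \layer$. For a variable $\var$ with $\vExp(\var)=\layer$, Algorithm~\ref{alg:algorithm} guarantees $\rankCoeff(\var,\Root) \ge 1$; since $\Root \in \layerF(\layer-\vExp(\var))$ and every state lies in $\vassF(\Root) = \vass$, we have $\val(\var) \le \coeff(\state,\val)$, so the variable clause of the Bound Proof Principle delivers $\vbound_\var(N) \in O(N^\layer)$. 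This closes the induction and proves the theorem.
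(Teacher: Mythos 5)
Your proposal is correct and follows essentially the same approach as the paper's proof: induction on the loop iteration, the same witness function built from the quasi-ranking function of Lemma~\ref{lem:affine-ranking} composed with the extension map $\extF$, and the same two-case analysis of the summands of $\coeff(\state_2,\val_2)-\coeff(\state_1,\val_1)$ (the paper splits on $\vExp(\var) \le \layer - \tExp(\transition)$ versus $\vExp(\var) > \layer - \tExp(\transition)$, which is exactly your split by the layer of $\node$). The only cosmetic difference is that you justify the ``unique node containing both endpoints'' fact via the ancestor chain from the node $\node^\star$ where $\transition$ was removed, whereas the paper invokes the invariant $\tExp(\transition) > \layer - \vExp(\var)$ directly; these are the same observation.
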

\section{Proof of the Lower Bound Theorem}
\label{sec:lower-bound}

The following lemma will allow us to consider traces $\trace_N$ with $\initBound(\trace_N) \in O(N)$ instead of $\initBound(\trace_N) \le N$ when proving asymptotic lower bounds.

\begin{lemma}
\label{lem:lower-bound-init-valuation}
Let $\vass$ be a VASS, let $\transition \in \transitions(\vass)$ be a transition and let $\var \in \vars$ be a variable.
If there are traces $\trace_N$ with $\initBound(\trace_N) \in O(N)$ and $\instances(\trace_N,\transition) \ge N^i$, then $\tbound_N(\transition) \in \Omega(N^i)$.
If there are traces $\trace_N$ with $\initBound(\trace_N) \in O(N)$ that reach a final valuation $\val$ with $\val(\var) \ge N^i$, then $\vbound_N(\var) \in \Omega(N^i)$.
\end{lemma}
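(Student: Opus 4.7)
The claim is a routine rescaling argument: the definitions of $\tbound_\transition$ and $\vbound_\var$ take the supremum over traces whose initial configuration has norm at most $N$, whereas the hypothesis only controls the initial norm up to a constant factor. My plan is to absorb that constant into the parameter by reindexing the family of witness traces.

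First I would unpack the hypothesis $\initBound(\trace_N) \in O(N)$: there is a constant $c \ge 1$ and an $N_0$ such that $\initBound(\trace_N) \le cN$ for every $N \ge N_0$. Then, for a given $N$ which I want to bound $\tbound_\transition(N)$ from below, I set $M = \lfloor N / c \rfloor$ and consider the trace $\trace_M$ supplied by the hypothesis. For $N$ large enough, $M \ge N_0$, so $\initBound(\trace_M) \le cM \le N$. Hence $\trace_M$ is one of the traces over which $\tbound_\transition(N)$ is defined as a supremum, and therefore
\[
 \tbound_\transition(N) \;\ge\; \instances(\trace_M, \transition) \;\ge\; M^i \;=\; \lfloor N/c \rfloor^i \;\ge\; \bigl(N/(2c)\bigr)^i
\]
for all sufficiently large $N$. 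Since $c$ (and thus $(2c)^{-i}$) is a fixed positive constant, this gives $\tbound_\transition(N) \in \Omega(N^i)$.

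The variable-bound case is proved by exactly the same rescaling, using the same $M = \lfloor N/c \rfloor$: the trace $\trace_M$ satisfies $\initBound(\trace_M) \le N$ and reaches a valuation $\val$ with $\val(\var) \ge M^i$, so $\vbound_\var(N) \ge M^i \in \Omega(N^i)$ by the definition of $\vbound_\var$. There is no real obstacle here; the only thing to be mildly careful about is choosing $M$ so that $cM \le N$ holds exactly (not just asymptotically), which is why I use a floor, and handling the finitely many small $N < c N_0$ separately (they only affect the constant hidden in $\Omega(\cdot)$).
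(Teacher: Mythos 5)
Your proof is correct and follows essentially the same rescaling argument as the paper, which substitutes $N' = cN$ and rewrites $N^i = (1/c)^i\, N'^i$; you instead substitute $M = \lfloor N/c \rfloor$, which is the same idea but handles the integer rounding a bit more explicitly than the paper does.
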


The lower bound proof uses the notion of a \emph{pre-path}, which relaxes the notion of a path:
A pre-path $\ppath = \transition_1 \cdots \transition_k$ is a finite sequence of transitions $\transition_i = \state_i \xrightarrow{\update_i} \state_i'$.
Note that we do not require for subsequent transitions that the end state of one transition is the start state of the next transition, i.e., we do not require $\state_i' = \state_{i+1}$.
We generalize notions from paths to pre-paths in the obvious way, e.g., we set $\valueSum(\ppath) = \sum_{i \in [1,k]} \update_i$ and denote by $\instances(\ppath,\transition)$, for $\transition \in \transitions(\vass)$, the number of times $\ppath$ contains the transition $\transition$.
We say the pre-path $\ppath$ \emph{can be executed from valuation} $\val$, if there are valuations $\val_i \ge 0$ with $\val_{i+1} = \val_i + \update_{i+1}$ for all $0 \le i < k$ and $\val = \val_0$;
we further say that $\ppath$ \emph{reaches} valuation $\val'$, if $\val' = \val_k$.
We will need the following relationship between execution and traces:
in case a pre-path $\ppath$ is actually a path, $\ppath$ can be executed from valuation $\val$, if and only if there is a trace with initial valuation $\val$ that uses the same sequence of transitions as $\ppath$.
Two pre-paths $\ppath = \transition_1 \cdots \transition_k$ and $\ppath' = \transition_1' \cdots \transition_l'$ can be $\emph{shuffled}$ into a pre-path $\ppath'' = \transition_1'' \cdots \transition_{k+l}''$, if $\ppath''$ is an order-preserving interleaving of $\ppath$ and $\ppath'$;
formally, there are injective monotone functions $f: [1,k] \rightarrow [1,k+l]$ and $g: [1,l] \rightarrow [1,k+l]$ with $f([1,k]) \cap g([1,l]) = \emptyset$ such that $\transition_{f(i)}'' = \transition_i$ for all $i \in [1,k]$ and $\transition_{g(i)}'' = \transition_i'$ for all $i \in [1,l]$.
Further, for $d \ge 1$ and pre-path $\ppath$, we denote by
$\ppath^d = \underbrace{\ppath\ppath \cdots \ppath}_d$ the pre-path that consists of $d$ subsequent copies of $\ppath$.

For the remainder of this section, we fix a VASS $\vass$ for which Algorithm~\ref{alg:algorithm} does not report exponential complexity and we fix the computed tree $\ctree$ and bounds $\vExp$, $\tExp$.
We further need to use the solutions to constraint system~(\conSysI) computed during the run of Algorithm~\ref{alg:algorithm}:
For every layer $\layer \ge 1$ and node $\node \in \layerF(\layer)$,
we fix a cycle $\cycle(\node)$  that contains $\counters(\transition)$ instances of every $\transition \in \transitions(\vassF(\node))$,
where $\counters$ is an optimal solution to constraint system~(\conSysI) during loop iteration $\layer$.
The existence of such cycles is stated in Lemma~\ref{lem:cycle-existence} below.
We note that this definition ensures $\valueSum(\cycle(\node)) = \sum_{\transition \in \transitions(\vassF(\node))} \updates(\transition) \cdot \counters(\transition)$.
Further, for the root node $\Root$, we fix an arbitrary cycle $\cycle(\Root)$ that uses all transitions of $\vass$ at least once.

\begin{lemma}
\label{lem:cycle-existence}
Let $\counters$ be an optimal solution to constraint system~(\conSysI) during loop iteration $\layer$ of Algorithm~\ref{alg:algorithm}.
Then there is a cycle $\cycle(\node)$ for every $\node \in \layerF(\layer)$ that contains exactly $\counters(\transition)$ instances of every transition $\transition \in \transitions(\vassF(\node))$.
\end{lemma}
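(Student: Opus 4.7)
The plan is to first show that $\counters(\transition) = 0$ for every $\transition \in \decreasingTransitions$, and then, for each $\node \in \layerF(\layer)$, to assemble the required $\cycle(\node)$ from the restriction of $\counters$ to $\transitions(\vassF(\node))$ via a directed Eulerian-circuit argument. The main obstacle is the first step, which will require a complementary-slackness-style computation that uses both constraint systems of the current loop iteration simultaneously.

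For that first step, let $\rankCoeff,\offsets$ be an optimal solution to (\conSysII) in the same iteration (which Algorithm~\ref{alg:algorithm} computes alongside $\counters$), and I would consider the scalar
\[
S \;=\; \sum_{\transition \in \unboundedTransitions} \counters(\transition)\cdot\bigl((\updates_\extended^T\rankCoeff)(\transition) + (\flowMatrix|_\unboundedTransitions^T\offsets)(\transition)\bigr).
\]
Algebraically, $S = \rankCoeff^T(\updates_\extended\counters) + \offsets^T(\flowMatrix|_\unboundedTransitions\counters)$, which by the primal constraints of (\conSysI) (namely $\updates_\extended\counters \ge 0$ and $\flowMatrix|_\unboundedTransitions\counters = 0$) together with $\rankCoeff,\offsets \ge 0$ is nonnegative. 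On the other hand, every summand of $S$ is nonpositive by the (\conSysII) constraint, and for $\transition \in \decreasingTransitions$ the corresponding integer factor is at most $-1$, so the summand is at most $-\counters(\transition)$. Combining, $0 \le S \le -\sum_{\transition \in \decreasingTransitions} \counters(\transition)$, which together with $\counters \ge 0$ forces $\counters(\transition) = 0$ for every $\transition \in \decreasingTransitions$.

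Now fix any $\node \in \layerF(\layer)$. By Lemma~\ref{lem:optimization-duality} applied to (\conSysI), (\conSysII) and by Lemma~\ref{lem:full-decomposition}, $\counters(\transition) \ge 1$ for every $\transition \in \transitions(\vassF(\node))$. Because the sub-VASSs in layer $\layer$ have pairwise disjoint state sets and $\counters$ vanishes on $\decreasingTransitions$, the only transitions of $\unboundedTransitions$ with nonzero $\counters$-weight that are incident to a state in $\states(\vassF(\node))$ are the transitions in $\transitions(\vassF(\node))$; hence the equation $\flowMatrix|_\unboundedTransitions\counters = 0$ restricted to $\state \in \states(\vassF(\node))$ collapses to $\sum_{\transition \in \transitions(\vassF(\node))} \flowMatrix(\state,\transition)\cdot\counters(\transition) = 0$. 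Consider the directed multigraph $G_\node$ on vertex set $\states(\vassF(\node))$ containing $\counters(\transition)$ parallel copies of each $\transition \in \transitions(\vassF(\node))$: the displayed identity says that in-degree equals out-degree at every vertex, and $\counters(\transition) \ge 1$ guarantees that $G_\node$ contains every edge of the SCC $\vassF(\node)$, so $G_\node$ is strongly connected. The classical directed Eulerian-circuit theorem then yields a closed walk in $G_\node$ traversing every edge exactly once, and this walk is a cycle of $\vassF(\node)$ containing exactly $\counters(\transition)$ instances of every $\transition \in \transitions(\vassF(\node))$; we take $\cycle(\node)$ to be this cycle.
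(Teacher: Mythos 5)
Your proof is correct, and it arrives at the same essential conclusion by a slightly different route than the paper. The paper's proof cites Lemma~\ref{lem:solution-is-multcycle} to get a multi-cycle with $\counters(\transition)$ instances of each transition in $\unboundedTransitions$, invokes Lemma~\ref{lem:full-decomposition} (which in turn leans on Lemma~\ref{lem:optimization-duality}) to get $\unboundedTransitions \setminus \decreasingTransitions = \{\transition \mid \counters(\transition) \ge 1\}$, then argues by disjointness of the $\vassF(\node)$ that each cycle of the multi-cycle lives inside a single $\vassF(\node)$, and finally shuffles the cycles in each $\vassF(\node)$ into a single cycle. You instead (i) prove $\counters(\transition) = 0$ on $\decreasingTransitions$ directly by the complementary-slackness computation $0 \le \rankCoeff^T(\updates_\extended\counters) + \offsets^T(\flowMatrix|_\unboundedTransitions\counters) = \sum_\transition \counters(\transition) \cdot (\cdots)(\transition) \le -\sum_{\transition \in \decreasingTransitions}\counters(\transition)$, which the paper leaves implicit in its Lemma~\ref{lem:optimization-duality} citation (the dichotomy there is in fact exclusive, by the Farkas argument, so both readings are valid), (ii) localize the flow equation to each $\vassF(\node)$, and (iii) build the cycle directly by a per-SCC Eulerian-circuit argument rather than first assembling a global multi-cycle and then re-shuffling. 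The Eulerian ingredient is the same one hidden inside the proof of Lemma~\ref{lem:solution-is-multcycle}, so the two proofs are substantively the same in content; yours is a bit more self-contained and makes the vanishing of $\counters$ on $\decreasingTransitions$ and the strong connectivity of the per-node multigraph explicit, at the modest cost of redoing work that the paper delegates to Lemmas~\ref{lem:solution-is-multcycle} and~\ref{lem:full-decomposition}.
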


\paragraph{Proof Outline of the Lower Bound Theorem.\\}
\textbf{Step I)} We define a pre-path $\ppathLayer_\layer$, for every $\layer \ge 1$, with the following properties:
\begin{enumerate}[label=\arabic*), ref=\arabic*]
  \item $\instances(\ppathLayer_\layer,\transition) \ge N^{\layer+1}$ for all transitions $\transition \in \bigcup_{\node \in \layerF(\layer)} \transitions(\vassF(\node))$. \label{enum:prop-complete-one}
  \item $\valueSum(\ppathLayer_\layer) = N^{\layer+1} \sum_{\node \in \layerF(\layer)} \valueSum(\cycle(\node))$. \label{enum:prop-complete-two}
  \item $\valueSum(\ppathLayer_\layer)(\var) \ge 0$ for every $\var \in \vars$ with $\vExp(\var) \le \layer$.
      \label{enum:prop-complete-three}
  \item $\valueSum(\ppathLayer_\layer)(\var) \ge N^{\layer+1}$ for every $\var \in \vars$ with $\vExp(\var) \ge \layer+1$. \label{enum:prop-complete-four}
  \item $\ppathLayer_\layer$ is executable from some valuation $\val$ with
      \begin{enumerate}[label=\alph*), ref=\theenumi{}\alph*]
           \item $\val(\var) \in O(N^{\vExp(\var)})$ for $\var \in \vars$ with $\vExp(\var) \le \layer$, and \label{enum:prop-complete-five-a}
           \item $\val(\var) \in O(N^\layer)$ for $\var \in \vars$ with  $\vExp(\var) \ge \layer+1$. \label{enum:prop-complete-five-b}
      \end{enumerate} \label{enum:prop-complete-five}
\end{enumerate}
The difficulty in the construction of the pre-paths $\ppathLayer_\layer$ lies in ensuring Property~\ref{enum:prop-complete-five}).
The construction of the $\ppathLayer_\layer$ proceeds along the tree $\ctree$ using that the cycles $\cycle(\node)$ have been obtained according to solutions of constraint system~($\conSysI$).

\textbf{Step II)}
It is now a direct consequence of Properties~\ref{enum:prop-complete-three})-\ref{enum:prop-complete-five})
stated above that we can choose a sufficiently large $k > 0$ such that for every $\layer \ge 0$ the pre-path $\ppathComplete_\layer = \ppathLayer_0^k \ppathLayer_1^k \cdots \ppathLayer_\layer^k$ (the concatenation of $k$ copies of each $\ppathLayer_i$, setting $\ppathLayer_0 = \cycle(\Root)^N$), can be executed from some valuation $\val$ and reaches a valuation $\val'$ with
\begin{enumerate}[label=\arabic*), ref=\arabic*]
  \item $\norm{\val} \in O(N)$,
  \item $\val'(\var) \ge k N^{\vExp(\var)}$ for all $\var \in \vars$ with $\vExp(\var) \le \layer$, and
  \item $\val'(\var) \ge k N^{\layer+1}$ for all $\var \in \vars$ with $\vExp(\var) \ge \layer+1$.
\end{enumerate}
The above stated properties for the pre-path $\ppathComplete_{\layer_{\max}}$, where $\layer_{\max}$ is the maximal layer of $\ctree$, would be sufficient to conclude the lower bound proof except that we need to extend the proof from pre-paths to proper paths.

\textbf{Step III)} In order to extend the proof from pre-paths to paths we make use of the concept of shuffling.
For all $\layer \ge 0$, we will define paths $\witComplete_\layer$ that can be obtained by shuffling the pre-paths $\ppathComplete_0, \ppathComplete_1, \ldots, \ppathComplete_\layer$.
The path $\witComplete_{\layer_{\max}}$, where $\layer_{\max}$ is the maximal layer of $\ctree$, then has the desired properties and allows to conclude the lower bound proof with the following result:

\begin{theorem}
\label{thm:lower-bound}
There are traces $\trace_N$ with $\initBound(\trace_N) \in O(N)$ such that $\trace_N$ ends in configuration $(\state_N,\val_N)$ with $\val_N(\var) \ge N^{\vExp(\var)}$ for all variables $\var \in \vars$ and we have
$\instances(\trace_N,\transition) \ge N^{\tExp(\transition)}$ for all transitions $\transition \in \transitions(\vass)$.
\end{theorem}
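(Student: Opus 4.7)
The plan is to follow the three-step roadmap that precedes the theorem and then appeal to Lemma~\ref{lem:lower-bound-init-valuation}. Once I have built a proper path $\witComplete_{\layer_{\max}}$ of $\vass$ (with $\layer_{\max}$ the maximal layer of $\ctree$) that is executable from some valuation of norm $O(N)$, contains at least $N^{\tExp(\transition)}$ instances of every transition $\transition$, and reaches a valuation with $\val(\var) \ge N^{\vExp(\var)}$ for every variable $\var$, the associated trace $\trace_N$ is precisely what the theorem demands, and Lemma~\ref{lem:lower-bound-init-valuation} closes the proof. So the entire work lies in the construction of $\witComplete_{\layer_{\max}}$ via Steps~I--III.

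For Step~I, I would construct the pre-paths $\ppathLayer_\layer$ by induction on $\layer$ using the cycles $\cycle(\node)$ fixed before the theorem. By Lemma~\ref{lem:cycle-existence} each $\cycle(\node)$, for $\node \in \layerF(\layer)$, contains $\counters(\transition)$ instances of every $\transition \in \transitions(\vassF(\node))$ for an optimal $\counters$ of~($\conSysI$), and in particular every transition at layer $\layer$ is used at least once; Properties~\ref{enum:prop-complete-one}) and~\ref{enum:prop-complete-two}) fall out once each cycle is repeated $N^{\layer+1}$ times. The non-negativity of Property~\ref{enum:prop-complete-three}) for a variable $\var$ with $\vExp(\var) \le \layer$ is obtained by summing the inequalities $(\updates_\extended \counters)(\var, \node'') \ge 0$ over all $\node'' \in \layerF(\layer - \vExp(\var))$, and Property~\ref{enum:prop-complete-four}) follows from the dichotomy Lemma~\ref{lem:optimization-duality} at iteration $\layer$: for $\var$ with $\vExp(\var) \ge \layer+1$ the algorithm did not set $\vExp(\var) = \layer$, so $\rankCoeff(\var, \Root) = 0$ and the dichotomy forces $(\updates_\extended \counters)(\var, \Root) \ge 1$, which after $N^{\layer+1}$ repetitions yields $\valueSum(\ppathLayer_\layer)(\var) \ge N^{\layer+1}$.

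The hard part is Property~\ref{enum:prop-complete-five}), which imposes tight initial-valuation budgets. Moving between sibling cycles $\cycle(\node_1)$ and $\cycle(\node_2)$ beneath a common parent $\node'$ requires traversing a path inside $\vassF(\node')$, and the total number of such traversals across all of $\ppathLayer_\layer$ must stay $O(N^{\layer - \vExp(\var)})$ on every variable $\var$. I would enforce this by walking $\ctree$ top-down and nesting the $N^{\layer+1}$ repetitions of each layer-$\layer$ cycle $\cycle(\node)$ inside the repetition pattern of $\ppathLayer_{\layer-1}$, so that transitions between siblings under a common ancestor happen only $O(N^{\layer - \vExp(\var)})$ many times. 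Combined with the $O(N^{\vExp(\var)})$ starting budget for $\var$, this absorbs all intermediate decreases and clarifies why $i = \layer - \vExp(\var)$ is the precise choice needed in $\vars_\extended$: any larger $i$ would destroy the non-decrease guarantees on which this argument relies.

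Step~II is then immediate: for a sufficiently large universal constant $k$, the concatenation $\ppathComplete_\layer = \ppathLayer_0^k \cdots \ppathLayer_\layer^k$ (with $\ppathLayer_0 = \cycle(\Root)^N$) inherits executability and value guarantees from Properties~\ref{enum:prop-complete-three})--\ref{enum:prop-complete-five}). Step~III shuffles $\ppathComplete_0, \ldots, \ppathComplete_{\layer_{\max}}$ into a proper path $\witComplete_{\layer_{\max}}$; this is possible because at layer $i$ the sub-VASSs are contained in those at layer $i-1$, so every passage through a layer-$(i-1)$ state in $\ppathComplete_{i-1}$ that also lies in a layer-$i$ sub-VASS provides a legitimate splice point at which a layer-$i$ cycle can be injected without leaving the current sub-VASS. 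Executability, transition counts and final valuations are preserved by the shuffle, and a final application of Lemma~\ref{lem:lower-bound-init-valuation} yields the claimed $\Omega$-bounds.
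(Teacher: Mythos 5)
Your proposal follows the paper's own three-step roadmap, cites the right auxiliary facts (Lemma~\ref{lem:cycle-existence}, the dichotomy of Lemma~\ref{lem:optimization-duality}), and the overall shape is correct. However, your handling of Step~I mixes concerns that the paper deliberately keeps apart, and this muddle is not cosmetic.

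In Step~I the paper builds \emph{pre-paths} $\ppath_\layer(\node) = \ppath_\layer(\node_1)^N \cdots \ppath_\layer(\node_\NoC)^N$ that simply concatenate the children's pre-paths with \emph{no} connecting sub-paths between siblings; pre-paths by definition need not have matching start/end states. You instead worry that "moving between sibling cycles $\cycle(\node_1)$ and $\cycle(\node_2)$ requires traversing a path inside $\vassF(\node')$" and speak of "nesting the repetitions of each layer-$\layer$ cycle inside the repetition pattern of $\ppathLayer_{\layer-1}$." Both of these describe what happens only in Step~III (the construction of the proper paths $\wit_\layer(\node) = \paath_0 \wit_\layer(\node_1)^N \paath_1 \cdots \wit_\layer(\node_\NoC)^N \paath_\NoC$, which does splice in the connector sub-paths $\paath_j$ and nests $\ppath_\layer$ inside $\wit_{\layer-1}$). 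Keeping Step~I connector-free is essential: otherwise Property~\ref{enum:prop-complete-two}) would not hold exactly (the connecting transitions would contaminate $\valueSum(\ppathLayer_\layer)$) and Properties~\ref{enum:prop-complete-three})–\ref{enum:prop-complete-four}), which are derived purely from the constraint-system inequalities, would be lost. In turn, Property~\ref{enum:prop-complete-five}) is established not by budgeting "traversals between siblings" but by a layer-by-layer induction on $\ppath_\layer(\node)$ (Lemma~\ref{lem:pre-paths-stages-properties}), crucially relying on Proposition~\ref{prop:repetition-lemma} (repetition $\ppath^d$ is executable from an initial valuation that scales only on variables whose $\valueSum$ can be negative) and Proposition~\ref{prop:merge-proposition} (shuffled pre-paths execute from the \emph{sum} of initial valuations). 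You cite neither, yet these are exactly where the tight $O(N^{\vExp(\var)})$ budget is won.

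Step~III as you phrase it is directionally right but under-specified: you need to inject $N$ copies (not one) of each layer-$i$ pre-path at each splice point so that the counts match $N^{i+1}$ after $N$ repetitions of $\wit_i(\Root)$, and you need the formal claim (Lemma~\ref{lem:merge-per-layer} and its successor) that $\witComplete_\layer$ is literally a shuffle of $\ppathComplete_0,\ldots,\ppathComplete_\layer$, which then feeds Proposition~\ref{prop:merge-proposition} to get executability from a valuation of norm $O(N)$. One last nit: Lemma~\ref{lem:lower-bound-init-valuation} is used to derive Corollary~\ref{cor:lower-bound} from this theorem, not inside this theorem's proof; the theorem itself only needs the path $\witComplete_{\layer_{\max}}$ to be a genuine path executable from an $O(N)$-bounded valuation.
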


With Lemma~\ref{lem:lower-bound-init-valuation} we get the desired lower bounds from Theorem~\ref{thm:lower-bound}:

\begin{corollary}
\label{cor:lower-bound}
$\vbound_N(\var) \in \Omega(N^{\vExp(\var)})$ for all $\var \in \vars$ and $\tbound_N(\transition) \in \Omega(N^{\tExp(\transition)})$ for all $\transition \in \transitions(\vass)$.
\end{corollary}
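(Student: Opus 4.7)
The plan is to derive this corollary as a direct composition of the two preceding results, Theorem~\ref{thm:lower-bound} and Lemma~\ref{lem:lower-bound-init-valuation}. All the hard work has been pushed into Theorem~\ref{thm:lower-bound} (whose proof is outlined in Steps I--III of Section~\ref{sec:lower-bound}); the corollary itself is essentially a normalization step that turns traces with initial valuation of magnitude $O(N)$ into the form required by the definitions of $\vbound$ and $\tbound$, where the initial valuation is bounded by $N$.

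Concretely, I would proceed as follows. First, invoke Theorem~\ref{thm:lower-bound} to obtain, for every parameter value $N$, a trace $\trace_N$ of $\vass$ with $\initBound(\trace_N) \in O(N)$ such that (a) $\trace_N$ ends in a configuration $(\state_N, \val_N)$ with $\val_N(\var) \ge N^{\vExp(\var)}$ for every $\var \in \vars$, and (b) $\instances(\trace_N, \transition) \ge N^{\tExp(\transition)}$ for every $\transition \in \transitions(\vass)$. Second, for each fixed transition $\transition \in \transitions(\vass)$, the family $\trace_N$ witnesses the hypothesis of the transition part of Lemma~\ref{lem:lower-bound-init-valuation} with exponent $i = \tExp(\transition)$, yielding $\tbound_N(\transition) \in \Omega(N^{\tExp(\transition)})$. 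Third, for each fixed variable $\var \in \vars$, the same family $\trace_N$ reaches a final valuation $\val_N$ with $\val_N(\var) \ge N^{\vExp(\var)}$, so the variable part of Lemma~\ref{lem:lower-bound-init-valuation} with exponent $i = \vExp(\var)$ gives $\vbound_N(\var) \in \Omega(N^{\vExp(\var)})$.

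Since the number of variables and the number of transitions of $\vass$ are both finite and independent of $N$, the asymptotic conclusions hold uniformly, and the proof is complete. The only subtle point is conceptual rather than technical: one must observe that a single family of traces $\trace_N$ suffices to witness all variable and transition lower bounds simultaneously, which is precisely the content of Theorem~\ref{thm:lower-bound}. There is no further obstacle in this corollary; the real technical burden of the lower bound argument lies in the construction of the pre-paths $\ppathLayer_\layer$ and their shuffling into paths $\witComplete_\layer$ in Steps I--III above, all of which is subsumed by Theorem~\ref{thm:lower-bound} and therefore need not be revisited here.
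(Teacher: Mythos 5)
Your proposal is correct and matches the paper's argument exactly: the paper itself presents the corollary as the immediate composition of Theorem~\ref{thm:lower-bound} (which furnishes a single family of traces $\trace_N$ with $\initBound(\trace_N) \in O(N)$ simultaneously witnessing all variable and transition counts) with Lemma~\ref{lem:lower-bound-init-valuation} (which normalizes $O(N)$ initial valuations to the $\le N$ bound required by $\vbound$ and $\tbound$). There is no divergence from the paper's route.
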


\section{The Size of the Exponents}
\label{sec:characterization}

For the remainder of this section, we fix a VASS $\vass$ for which Algorithm~\ref{alg:algorithm} does not report exponential complexity and we fix the computed tree $\ctree$ and bounds $\vExp$, $\tExp$.
Additionally, we fix a vector $\offsets_\layer \in \mathbb{Z}^{\states(\vass)}$ for every layer $\layer$ of $\ctree$ and a vector $\rankCoeff_\node \in \mathbb{Z}^\vars$ for every node $\node \in \layerF(\layer)$ as follows:
Let $\rankCoeff, \offsets$ be an optimal solution to constraint system ($\conSysII$) in iteration $\layer+1$ of Algorithm~\ref{alg:algorithm}.
We then set $\offsets_\layer = \offsets$.
For every $\node \in \layerF(\layer)$ we define $\rankCoeff_\node$ by setting
$\rankCoeff_\node(\var) = \rankCoeff(\var,\node')$,
where $\node' \in \layerF(\layer - \vExp(\var))$ is the unique ancestor of $\node$ in layer $\layer - \vExp(\var)$.
The following properties are immediate from the definition:

\begin{proposition}
\label{prop:inv-set-properties}
For every layer $\layer$ of $\ctree$ and node $\node \in \layerF(\layer)$ we have:
\begin{enumerate}[label=\arabic*),ref=\arabic*]
  \item $\offsets_\layer \ge 0$ and $\rankCoeff_\node \ge 0$. \label{enum:inv-prop-one}
  \item $\rankCoeff_\node^T\update + \offsets_\layer(\state_2) - \offsets_\layer(\state_1) \le 0$ for every transition $\state_1 \xrightarrow{\update} \state_2 \in \transitions(\vassF(\node))$;
      moreover, the inequality is strict for all transitions $\transition$ with $\tExp(\transition) = \layer+1$. \label{enum:inv-prop-two}
  \item Let $\node' \in \layerF(i)$ be a strict ancestor of $\node$.
      Then, $\rankCoeff_{\node'}^T\update + \offsets_i(\state_2) - \offsets_i(\state_1) = 0$
      for every transition $\state_1 \xrightarrow{\update} \state_2 \in \transitions(\vassF(\node))$. \label{enum:inv-prop-three}
 \item For every $\var \in \vars$ with $\vExp(\var) = \layer+1$ we have $\rankCoeff_\node(\var) > 0$ and
     $\rankCoeff_\node(\var) = \rankCoeff_{\node'}(\var)$ for all $\node' \in \layerF(\layer)$. \label{enum:inv-prop-four}
 \item For every $\var \in \vars$ with $\vExp(\var) > \layer+1$ we have $\rankCoeff_\node(\var) = 0$. \label{enum:inv-prop-five}
 \item For every $\var \in \vars$ with $\vExp(\var) \le \layer$ there is an ancestor $\node' \in \layerF(i)$ of $\node$ such that $\rankCoeff_{\node'}(\var) > 0$ and $\rankCoeff_{\node'}(\var') = 0$ for all $\var'$ with $\vExp(\var') > \vExp(\var)$. \label{enum:inv-prop-six}
\end{enumerate}
\end{proposition}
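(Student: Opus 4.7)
My plan is to verify each of the six properties by unpacking the definitions of $\rankCoeff_\node$ and $\offsets_\layer$ and translating the constraints of system $(\conSysII)$ from the extended variables $\vars_\extended$ back to the ordinary variables $\vars$. The key technical lemma underlying everything is that for a transition $\transition \in \transitions(\vassF(\node))$ with $\node \in \layerF(\layer)$, the row of $\updates_\extended^T \rankCoeff$ indexed by $\transition$ in iteration $\layer+1$ collapses to $\rankCoeff_\node^T \update$: only those rows $(\var,\node'')$ with $\transition \in \transitions(\vassF(\node''))$ contribute, and by the sub-VASS invariant such $\node''$ must lie on the path from $\Root$ to $\node$, so there is a unique ancestor in every layer, which is precisely the one picked out by $\rankCoeff_\node(\var)$. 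Similarly, $(\flowMatrix|_\unboundedTransitions^T\offsets)(\transition) = \offsets_\layer(\state_2) - \offsets_\layer(\state_1)$.

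Properties (\ref{enum:inv-prop-one}) and (\ref{enum:inv-prop-two}) then follow immediately. (\ref{enum:inv-prop-one}) is simply the non-negativity constraints $\rankCoeff \ge 0$ and $\offsets \ge 0$ restricted to the relevant coordinates. For (\ref{enum:inv-prop-two}), the $\le 0$ part is the main inequality $\updates_\extended^T \rankCoeff + \flowMatrix|_\unboundedTransitions^T \offsets \le 0$ from $(\conSysII)$ in iteration $\layer+1$, rewritten via the collapse above; strictness corresponds to $\transition \in \decreasingTransitions$, which by the algorithm's bookkeeping is exactly the set of transitions receiving $\tExp(\transition) = \layer+1$. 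The maximization objective, together with the fact (stated in Section~\ref{sec:farkas}) that the set of satisfied inequalities is the same across all optimal solutions, is what allows the equivalence ``$\tExp(\transition) = \layer+1$ iff the inequality is strict'' to be independent of the particular optimal solution chosen.

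For (\ref{enum:inv-prop-three}), I would apply (\ref{enum:inv-prop-two}) at the ancestor $\node' \in \layerF(i)$, using that $\vassF(\node) \subseteq \vassF(\node')$ implies $\transition \in \transitions(\vassF(\node'))$. The inequality gives $\le 0$; if it were strict we would have $\tExp(\transition) = i+1 \le \layer$, but $\transition$ is still present in layer $\layer$, so by the invariants on $\tExp$ listed in Section~\ref{sec:algorithm} this is impossible. Hence equality. This is the one step where I need to be careful about direction: strictness comes via the maximization, and the exclusion of strictness comes via the survival of $\transition$ to a deeper layer.

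The remaining properties (\ref{enum:inv-prop-four})--(\ref{enum:inv-prop-six}) hinge on the algorithm's rule ``set $\vExp(\var) := \layer$ iff $\rankCoeff(\var,\Root) > 0$''. For (\ref{enum:inv-prop-four}), if $\vExp(\var) = \layer+1$, then at iteration $\layer+1$ we had $\vExp(\var) = \nobound$, so the ancestor of any $\node \in \layerF(\layer)$ in layer $(\layer+1) - \nobound = 0$ is the root, giving $\rankCoeff_\node(\var) = \rankCoeff(\var,\Root)$; this common value is positive by the algorithm's condition for having set $\vExp(\var)$. For (\ref{enum:inv-prop-five}), $\vExp(\var) > \layer+1$ means the algorithm did \emph{not} set $\vExp(\var)$ in iteration $\layer+1$, so $\rankCoeff(\var,\Root)$ cannot be positive in \emph{any} optimal solution (otherwise the uniqueness of the satisfied-inequality set forces it positive in the chosen one too), hence $= 0$. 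Finally, (\ref{enum:inv-prop-six}) is the symmetric argument at the iteration $\vExp(\var)$ in which the algorithm set $\vExp(\var)$: choose $\node'$ to be the ancestor of $\node$ in $\layerF(\vExp(\var)-1)$; the positivity of $\rankCoeff_{\node'}(\var)$ is the condition that triggered the setting, and for any $\var'$ with $\vExp(\var') > \vExp(\var)$ we had $\vExp(\var') = \nobound$ at that iteration, so the same non-setting argument gives $\rankCoeff_{\node'}(\var') = 0$. The only step I expect to require care is the uniqueness-of-optimum argument used in (\ref{enum:inv-prop-five}) and (\ref{enum:inv-prop-six}), since it is what decouples the statements from the particular optimal solution picked by the algorithm.
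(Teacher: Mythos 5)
The paper does not provide a proof of Proposition~\ref{prop:inv-set-properties} --- it declares the properties ``immediate from the definition'' --- so what is being tested is whether your unwinding is sound, and it is. The collapse lemma is the right reduction: for $\transition \in \transitions(\vassF(\node))$ with $\node \in \layerF(\layer)$, the disjointness of the sub-VASSs within a layer together with the sub-VASS-along-ancestors invariant forces the unique node in each layer containing $\transition$ to be an ancestor of $\node$, so the $\transition$-row of $\updates_\extended^T \rankCoeff$ (from iteration $\layer+1$) becomes $\rankCoeff_\node^T \update$ and the corresponding flow row becomes $\offsets_\layer(\state_2)-\offsets_\layer(\state_1)$. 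One clarification worth recording: Section~\ref{sec:farkas} literally states only that the \emph{number} of satisfied inequalities is the same at optimality, but you need, and correctly use, the stronger fact that the \emph{set} is the same; this follows by the same closure-under-addition observation, since two optima satisfying different sets would sum to a solution satisfying a strictly larger set, contradicting optimality. With that in hand, your derivations of properties (1)--(5) via the algorithm's $\decreasingTransitions$ bookkeeping and the $\rankCoeff(\var,\Root)$ test, and the reduction of (6) to (4) and (5) applied at the ancestor in layer $\vExp(\var)-1$, are all correct and are exactly what ``immediate'' abbreviates here.
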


For a vector $\rankCoeff \in \mathbb{Z}^\vars$, we define the \emph{potential} of $\rankCoeff$ by setting $\pot(\rankCoeff) = \max \{ \vExp(\var) \mid \var \in \vars, \rankCoeff(\var) \neq 0 \}$, where we set $\max \emptyset = 0$.
The motivation for this definition is that we have $\rankCoeff^T\val \in O(N^{\pot(\rankCoeff)})$ for every valuation $\val$ reachable by a trace $\trace$ with $\initBound(\trace) \le N$.
We will now define the \emph{potential} of
a set of vectors $Z \subseteq \mathbb{Z}^\vars$.
Let $M$ be a matrix whose columns are the vectors of $Z$ and whose rows are ordered according to the variable bounds, i.e., if the row associated to variable $\var'$ is above the row associated to variable $\var$, then we have $ \vExp(\var') \ge \vExp(\var)$.
Let $L$ be some lower triangular matrix obtained from $M$ by elementary column operations.
We now define $\pot(Z) = \sum_{\text{column } \rankCoeff \text{ of } L} \pot(\rankCoeff)$, where we set $\sum \emptyset = 0$.
We note that $\pot(Z)$ is well-defined, because the value $\pot(Z)$ does not depend on the choice of $M$ and $L$.

We next state an upper bound on potentials.
Let $\layer \ge 0$ and let $B_\layer = \{ \vExp(\var) \mid \var \in \vars, \vExp(\var) < \layer\}$ be the set of variable bounds below $\layer$.
We set $\varsum(\layer) = 1$, for $B_\layer = \emptyset$, and $\varsum(\layer) = \sum B_\layer$, otherwise.
The following statement is a direct consequence of the definitions:

\begin{proposition}
\label{prop:pot-estimation}
Let $Z \subseteq \mathbb{Z}^\vars$ be a set of vectors  such that $\rankCoeff(\var) = 0$ for all $\rankCoeff \in Z$ and $\var \in \vars$ with $\vExp(\var) > \layer$.
Then, we have $\pot(Z) \le \varsum(\layer+1)$.
\end{proposition}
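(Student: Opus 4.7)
The plan is to pin down where the pivots of the lower-triangular matrix $L$ from the definition of $\pot(Z)$ can lie, and then sum $\vExp$ over these pivot rows.

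First I would form the matrix $M$ as in the definition of $\pot(Z)$, with rows indexed by $\vars$ and arranged so that rows with larger $\vExp(\var)$ sit on top. The hypothesis on $Z$ says exactly that every row of $M$ indexed by a variable with $\vExp(\var) > \layer$ is identically zero. Each elementary column operation replaces a column of $M$ by a $\mathbb{Z}$-linear combination of columns, and such a combination cannot produce a nonzero entry in a row that is identically zero. Consequently, every lower-triangular matrix $L$ obtained from $M$ by such operations also has zero entries on every row indexed by a variable with $\vExp(\var) > \layer$.

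Next I would exploit the column-echelon structure of $L$: since $\pot(Z)$ does not depend on the choice of $L$, I may take $L$ to be in column echelon form, so that each nonzero column has a pivot (its topmost nonzero entry) and these pivots occur in pairwise distinct rows. Because the rows of $L$ are arranged in decreasing order of $\vExp$, the topmost nonzero entry of a column $\rankCoeff$ sits in the row realising the maximum $\max\{\vExp(\var) \mid \rankCoeff(\var) \neq 0\} = \pot(\rankCoeff)$, so $\pot(\rankCoeff)$ equals $\vExp$ of the pivot variable of $\rankCoeff$. Combined with the previous paragraph, every pivot variable $\var$ satisfies $\vExp(\var) \le \layer$.

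Letting $V \subseteq \vars$ be the set of (pairwise distinct) pivot variables of the nonzero columns of $L$, I would then conclude via
\[
  \pot(Z) \;=\; \sum_{\rankCoeff} \pot(\rankCoeff) \;=\; \sum_{\var \in V} \vExp(\var) \;\le\; \sum_{\substack{\var \in \vars \\ \vExp(\var) \le \layer}} \vExp(\var) \;=\; \varsum(\layer+1),
\]
where the first sum ranges over the nonzero columns of $L$ and the last equality is the definition of $\varsum$; the boundary case $B_{\layer+1} = \emptyset$ corresponds to $L$ being the zero matrix, which gives $\pot(Z) = 0 \le 1 = \varsum(\layer+1)$. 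The one non-routine ingredient is the observation that elementary column operations preserve zero rows, which, combined with the row ordering, forces every pivot of $L$ to lie in a row with $\vExp(\var) \le \layer$; the remaining steps are a direct unfolding of the definitions.
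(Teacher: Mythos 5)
Your argument follows the same route as the paper's proof: both exploit the lower-triangular structure to see that each variable is the pivot row of at most one column, that each column's potential equals $\vExp$ of its pivot variable, and that the hypothesis on $Z$ (zero rows for variables with $\vExp(\var) > \layer$, preserved under column operations) forces all pivots into rows with $\vExp(\var) \le \layer$. The paper's version is considerably terser — it leaves the "column operations preserve zero rows, hence no pivot above layer $\layer$" step implicit — whereas you spell it out explicitly, along with the boundary case $B_{\layer+1} = \emptyset$, but the underlying idea is identical.
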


We define $\pot(\node) = \pot( \{ \rankCoeff_{\node'} \mid \node' \text{ is a strict ancestor } \text{ of } \node \} )$ as the \emph{potential} of a node $\node$.
We note that $\pot(\node) \le \varsum(\layer+1)$ for every node $\node \in \layerF(\layer)$ by Proposition~\ref{prop:pot-estimation}.
Now, we are able to state the main results of this section:

\begin{lemma}
\label{lem:entering-property}
Let $\node$ be a node in $\ctree$.
Then, every trace $\trace$ with $\initBound(\trace) \le N$ enters $\vassF(\node)$ at most $O(N^{\pot(\node)})$ times,
i.e., $\trace$  contains at most $O(N^{\pot(\node)})$ transitions $\state \xrightarrow{\update} \state'$ with $\state \not\in \states(\vassF(\node))$ and $\state' \in \states(\vassF(\node))$.
\end{lemma}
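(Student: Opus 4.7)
The plan is to argue by induction on the depth $\layer$ of $\node$ in $\ctree$. For the base case $\node = \Root$ the set of strict ancestors is empty, so $\pot(\Root) = 0$; every trace starts inside $\vass = \vassF(\Root)$ and never enters it from outside, matching $O(N^0) = O(1)$. For the inductive step, let $\node_p$ denote the parent of $\node$ in $\ctree$ and $\node_0 = \Root, \node_1, \ldots, \node_{\layer-1} = \node_p$ its strict ancestors. I would partition the entries into $\vassF(\node)$ by the location of the source state. Entries whose source lies outside $\vassF(\node_p)$ are automatically entries into $\vassF(\node_p)$ as well (since $\vassF(\node) \subseteq \vassF(\node_p)$), so the inductive hypothesis bounds them by $O(N^{\pot(\node_p)})$; because enlarging a set of vectors never decreases its potential, $\pot(\node_p) \le \pot(\node)$ and this part of the count is $O(N^{\pot(\node)})$.

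The interesting case is an entry transition $\transition = \state \xrightarrow{\update} \state'$ with $\state \in \states(\vassF(\node_p)) \setminus \states(\vassF(\node))$ and $\state' \in \states(\vassF(\node))$. First I would show that every such $\transition$ lies in $\decreasingTransitions$ at iteration $\layer$, hence $\tExp(\transition) = \layer$: both endpoints belong to $\vassF(\node_p)$, yet no child of $\node_p$ contains both (children have pairwise disjoint state sets and $\state \notin \states(\vassF(\node))$), so Lemma~\ref{lem:full-decomposition} rules out any alternative. By Proposition~\ref{prop:inv-set-properties}(2) the quasi-ranking function $f_{\node_p}(\state, \val) := \rankCoeff_{\node_p}^T \val + \offsets_{\layer-1}(\state)$ then drops strictly by at least one on every such entry and is non-increasing along all of $\transitions(\vassF(\node_p))$.

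To convert this per-entry drop into the right polynomial order, I would apply elementary column operations to the matrix whose columns are $\rankCoeff_{\node_0}, \ldots, \rankCoeff_{\node_{\layer-1}}$, reducing $\rankCoeff_{\node_p}$ modulo the span of the earlier columns to produce rationals $c_0, \ldots, c_{\layer-2}$ and a residual $\rankCoeff' := \rankCoeff_{\node_p} - \sum_{i<\layer-1} c_i \rankCoeff_{\node_i}$; by the definition of $\pot$ through the lower triangular form with rows ordered by $\vExp$, one obtains $\pot(\rankCoeff') = \pot(\node) - \pot(\node_p)$. Proposition~\ref{prop:inv-set-properties}(3), applied with $\node_p$ in the role of $\node$, tells us that every $f_{\node_i}$ for $i < \layer-1$ is constant along $\transitions(\vassF(\node_p))$. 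Hence, within any maximal subpath of $\trace$ inside $\vassF(\node_p)$, the combination $g(\state, \val) := f_{\node_p}(\state, \val) - \sum_{i<\layer-1} c_i f_{\node_i}(\state, \val) = (\rankCoeff')^T \val + C(\state)$, with a bounded state term $|C(\state)| = O(1)$, drops strictly by at least one on each of the entries at hand. Using Theorem~\ref{thm:upper-bound} to bound $\val(\var) = O(N^{\vExp(\var)})$ for reachable valuations, the range of $g$ is $O(N^{\pot(\rankCoeff')})$, so each visit to $\vassF(\node_p)$ contributes at most $O(N^{\pot(\rankCoeff')})$ such entries. Multiplying by the inductive bound $O(N^{\pot(\node_p)})$ on the number of visits yields $O(N^{\pot(\node)})$, which together with the count from the first part closes the induction. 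The main obstacle I foresee is the linear-algebra step that produces $\rankCoeff'$ and matches $\pot(\rankCoeff')$ with $\pot(\node) - \pot(\node_p)$: this hinges on a careful handling of the lower triangular form with the $\vExp$-ordered rows, together with the observation that reducing $\rankCoeff_{\node_p}$ modulo the span of the previous ancestor vectors either collapses it to zero (no new pivot, no potential gain) or exposes a fresh pivot whose $\vExp$ is exactly the increment added to $\pot(\node_p)$ by including $\rankCoeff_{\node_p}$.
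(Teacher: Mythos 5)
Your proof follows essentially the same route as the paper's. Both argue by induction on the layer of $\node$, both combine the ancestor vectors $\rankCoeff_{\node_i}$ and offsets $\offsets_i$ — using Proposition~\ref{prop:inv-set-properties}(\ref{enum:inv-prop-three}) to cancel the contribution of the strict ancestors on $\transitions(\vassF(\node_p))$ — into a single affine function whose range is $O(N^{\pot(\node)-\pot(\node_p)})$ and which strictly drops on the relevant entries, and both multiply by the inductive count of entries into $\vassF(\node_p)$. The packaging differs: the paper pushes the argument through the Bound Proof Principle (Proposition~\ref{prop:bound-proof-principle}), which requires the combined vector $\rankCoeff^\circ$ to be non-negative and achieves this via Proposition~\ref{prop:inv-set-properties}(\ref{enum:inv-prop-six}); your two-sided range argument with the raw residual $\rankCoeff'$ dispenses with that normalization, a small but genuine simplification. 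One detail you should firm up: a transition with both endpoints in $\states(\vassF(\node_p))$ need not lie in $\transitions(\vassF(\node_p))$, since it may have been removed at an earlier layer. Such a transition can interrupt a ``visit'' without ever leaving $\states(\vassF(\node_p))$ and can raise $g$ by a constant, and your deduction that $\tExp(\transition)=\layer$ for in-$\node_p$ entries also presupposes $\transition\in\transitions(\vassF(\node_p))$. The Bound Proof Principle framework is set up to absorb such transitions uniformly as elements of $\transitions(\vass)\setminus\unboundedTransitions$ with $O(1)$ increase certificates weighted by their already-established $\tExp$; in your visit-based phrasing you would need to add this contribution explicitly.
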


\begin{lemma}
\label{lem:var-bound-relation}
For every layer $\layer$, we have that $\vExp(\var) = \layer$ resp. $\tExp(\transition) = \layer$ implies
$\vExp(\var) \le \varsum(\layer)$ resp. $\tExp(\transition) \le \varsum(\layer)$.
\end{lemma}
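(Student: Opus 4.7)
I argue by strong induction on $\layer$. The base case $\layer = 1$ is immediate since $B_1 = \emptyset$ forces $\varsum(1) = 1$ by definition, so $\layer = 1 \le 1$. For the inductive step at $\layer \ge 2$, I handle the transition case first.

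Assume $\tExp(\transition) = \layer$; by construction of Algorithm~\ref{alg:algorithm}, $\transition$ is placed in $\decreasingTransitions$ during iteration $\layer$, so $\transition \in \transitions(\vassF(\node))$ for some $\node \in \layerF(\layer-1)$. Lemma~\ref{lem:entering-property} bounds the number of entries of any trace $\trace$ with $\initBound(\trace) \le N$ into $\vassF(\node)$ by $O(N^{\pot(\node)})$. Since the strict ancestors of $\node$ lie in layers $0,1,\ldots,\layer-2$, item~\ref{enum:inv-prop-five} of Proposition~\ref{prop:inv-set-properties} guarantees each vector $\rankCoeff_{\node'}$ in the set defining $\pot(\node)$ is supported only on variables $\var'$ with $\vExp(\var') \le \layer-1$. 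Applying Proposition~\ref{prop:pot-estimation} with parameter $\layer-1$ then yields $\pot(\node) \le \varsum(\layer)$. Next, within any maximal sub-trace of $\trace$ contained in $\vassF(\node)$, the transition $\transition$ is used at most $|\states(\vass)|$ times: by Lemma~\ref{lem:affine-ranking} the iteration-$\layer$ quasi-ranking function strictly decreases on every transition in $\decreasingTransitions$, so no cycle of $\vassF(\node)$ contains a decreasing transition, and therefore $\decreasingTransitions \cap \transitions(\vassF(\node))$ forms a DAG over the children of $\node$ in $\layerF(\layer)$. Any entry traverses this DAG monotonically, crossing only boundedly many decreasing transitions. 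Multiplying the two bounds yields $\tbound_\transition(N) \le O(N^{\varsum(\layer)})$; combined with $\tbound_\transition(N) = \Omega(N^\layer)$ from Corollary~\ref{cor:lower-bound}, we conclude $\layer \le \varsum(\layer)$.

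For the variable case $\vExp(\var) = \layer$, my plan is to adapt the above argument. Let $\coeff$ denote the complexity witness from Section~\ref{sec:upper-bound}. Then $\val(\var) \le \coeff(\state,\val)/\rankCoeff(\var,\Root) \le \coeff(\state,\val)$, so it suffices to bound the maximum of $\coeff$ over the trace by $O(N^{\varsum(\layer)})$. Decomposing $\coeff_{\max}$ into its initial value $O(N)$ and the total increase along transitions $\transition' \notin \unboundedTransitions$, and splitting the latter by entry events into the sub-VASSs $\vassF(\node')$ for $\node' \in \layerF(\layer-1)$, one observes that the contribution of the newly bounded pair $(\var,\Root)$ to a single-transition increase is only $O(1)$, because $\vassF(\Root)$ contains every state and the source and target terms cancel; hence all entry contributions come from variables $\var'$ with $\vExp(\var') < \layer$. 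The main obstacle is refining the per-entry contribution using the tree structure: invoking items~\ref{enum:inv-prop-four} and~\ref{enum:inv-prop-five} of Proposition~\ref{prop:inv-set-properties}, the ``activated'' coefficients at a given entry are determined by the layer of the common ancestor of the entered node and the node containing the source state, so summing across entries telescopes along the tree to deliver the sharp bound $O(N^{\varsum(\layer)})$ rather than the naive $O(N^{\varsum(\layer)+\layer-1})$. Combining with $\vbound_\var(N) = \Omega(N^\layer)$ from Corollary~\ref{cor:lower-bound} then gives $\layer \le \varsum(\layer)$, completing the induction.
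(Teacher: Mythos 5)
Your base case and overall inductive framework (combine an $O(N^{\varsum(\layer)})$ upper bound with the matching lower bound from Corollary~\ref{cor:lower-bound} to get $\layer \le \varsum(\layer)$) agree with the paper, and the observation that the ancestors of $\node$ in $\layerF(\layer-1)$ are supported on variables with $\vExp < \layer$, hence $\pot(\node) \le \varsum(\layer)$, is correct. But the core step of your transition argument is wrong.

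You claim that within a maximal sub-trace inside $\vassF(\node)$, a transition $\transition \in \decreasingTransitions$ is used at most $|\states(\vass)|$ times, on the grounds that ``no cycle of $\vassF(\node)$ contains a decreasing transition'' and that the decreasing transitions therefore form a DAG over the child SCCs. Neither claim holds. The quasi-ranking function of Lemma~\ref{lem:affine-ranking} is a function of \emph{configurations} (states together with valuations): it strictly decreases on each use of a transition in $\decreasingTransitions$, but its value scales with the counters, so this only bounds the number of such uses by the initial ranking value, not by a constant. At the level of the underlying control-flow graph, a directed cycle of $\vassF(\node)$ can absolutely contain decreasing transitions — removing $\decreasingTransitions$ and taking SCCs does not yield a DAG once the removed edges are added back. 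The running example $\vass_\exampleCS$ makes this concrete: in iteration $\layer=2$ the node $\node_A=\{\state_1,\state_2\}$ has children $\{\state_1\}$ and $\{\state_2\}$, the decreasing transitions are $\state_1\rightarrow\state_2$ and $\state_2\rightarrow\state_1$, and these obviously form a cycle between the two children. Quantitatively, $\vassF(\node_1)=\{\state_1\}$ is entered $\Theta(N^2)$ times (via $\state_2 \rightarrow \state_1$, which has $\tExp=2$), yet $\tExp(\state_1\rightarrow\state_1)=3$, so the per-entry use count of the self-loop grows like $\Theta(N)$, not $O(1)$. Thus the product ``number of entries $\times$ uses per entry'' with a constant second factor underestimates the true bound, and your derivation does not establish $\tbound_\transition(N) \in O(N^{\varsum(\layer)})$.

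The paper avoids this trap: rather than counting uses of $\transition$ per entry into $\vassF(\node)$, it applies the Bound Proof Principle directly. It builds a witness function as a specific non-negative linear combination $\rankCoeff^\circ_\node = \lambda_\node\rankCoeff_\node + \sum \lambda_{\node'}\rankCoeff_{\node'}$ of ranking vectors from the ancestors of $\node$, chosen (via the column-reduction underlying the definition of $\pot$) so that $\pot(\rankCoeff^\circ_\node) + \pot(\node) \le \varsum(\layer+1)$. The witness is non-increasing inside $\vassF(\node)$ and strictly decreases on $\transition$; each \emph{entry} into $\vassF(\node)$ increases the witness by only $O(N^{\pot(\rankCoeff^\circ_\node)})$, and there are $O(N^{\pot(\node)})$ entries (Lemma~\ref{lem:entering-property}), so the total increase — and hence the number of uses of $\transition$ — is $O(N^{\varsum(\layer+1)})$. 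Your sketch for the variable case gestures at this kind of telescoping but never constructs the linear combination or shows the potential identity, so it is not a proof as written. To repair the argument you would need to abandon the per-entry-use-count decomposition and instead control the per-entry \emph{increase of a witness function}, exactly as the paper does.
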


The next result follows from Lemma~\ref{lem:var-bound-relation} only by arithmetic manipulations and induction on $\layer$:

\begin{lemma}
\label{lem:vars-bounded}
Let $\layer$ be some layer.
Let $k$ be the number of variables $\var \in \vars$ with $\vExp(\var) < \layer$.
Then, $\varsum(\layer) \le 2^k$.
\end{lemma}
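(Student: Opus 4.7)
The plan is to proceed by induction on $\layer$. The base case ($\layer \le 1$) is immediate: since every finite $\vExp(\var)$ is at least $1$, we have $B_\layer = \emptyset$, hence $\varsum(\layer) = 1 = 2^0$ and $k = 0$.

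For the inductive step, I would enumerate $B_\layer = \{b_1 < b_2 < \cdots < b_r\}$ and let $n_j$ be the number of variables $\var$ with $\vExp(\var) = b_j$, so that $k = n_1 + \cdots + n_r$ and the quantity $k_j := n_1 + \cdots + n_{j-1}$ counts exactly the variables with $\vExp(\var) < b_j$. The crucial observation is that $B_{b_j}$ coincides with $\{b_1, \ldots, b_{j-1}\}$: any $\vExp(\var) < b_j$ is in particular $< \layer$ and so already appears in $B_\layer$, and by minimality of $b_j$ it can only be one of $b_1, \ldots, b_{j-1}$. Hence the induction hypothesis applied at the strictly smaller layer $b_j < \layer$ yields $\varsum(b_j) \le 2^{k_j}$, and Lemma~\ref{lem:var-bound-relation} then gives the per-element bound $b_j \le \varsum(b_j) \le 2^{k_j}$.

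What remains is arithmetic. Since the $b_j$ are pairwise distinct, $n_j \ge 1$ for every $j$, and so $k_{j+1} = k_j + n_j \ge k_j + 1$. Unrolling this gives $k_j \le k_r - (r-j)$, and therefore
\[
\varsum(\layer) \;=\; \sum_{j=1}^r b_j \;\le\; \sum_{j=1}^r 2^{k_j} \;\le\; \sum_{j=1}^r 2^{k_r - r + j} \;<\; 2^{k_r + 1}.
\]
Finally, $n_r \ge 1$ forces $k = k_r + n_r \ge k_r + 1$, so $2^{k_r+1} \le 2^k$, closing the induction.

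I do not anticipate a genuine obstacle; the argument is essentially careful bookkeeping combined with a geometric sum. The only subtle points are recognising that $B_{b_j}$ is precisely $\{b_1, \ldots, b_{j-1}\}$ (so that the induction hypothesis at layer $b_j$ delivers exactly the power of two indexed by the correct running count of variables), and noting that the strict monotonicity $k_{j+1} > k_j$ is what collapses the sum of the $2^{k_j}$ to essentially its last term.
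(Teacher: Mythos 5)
Your argument is correct, and it takes a recognizable but somewhat different route from the paper's. Both proofs induct on $\layer$ and both hinge on the same application of Lemma~\ref{lem:var-bound-relation} to bound each relevant value $b$ by $\varsum(b)$; the divergence is in how that per-value bound is turned into the final estimate. The paper runs a \emph{weak} induction, stepping from $\layer$ to $\layer+1$: it bounds the new contribution at level $\layer$ by $\layer \le \varsum(\layer) \le 2^m$ (with $m$ the variable count below $\layer$), over-counts by treating $\varsum(\layer+1)$ as a sum over variables rather than over the set $B_{\layer+1}$, and closes the step with the inequality $(k-m+1)2^m \le 2^{k-m}2^m$. You instead run a \emph{strong} induction: you enumerate $B_\layer = \{b_1<\cdots<b_r\}$, apply the hypothesis at each $b_j < \layer$ to get $b_j \le 2^{k_j}$ with $k_j$ the running variable count, observe that the $k_j$ are strictly increasing because each $n_j \ge 1$, and collapse $\sum_j 2^{k_j}$ by a geometric series to $< 2^{k_r+1} \le 2^k$. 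Your version handles the set-versus-multiset distinction in the definition of $\varsum$ cleanly and replaces the paper's $(k-m+1)\le 2^{k-m}$ estimate by a telescoping sum; the paper's version has the advantage of a one-step induction with simpler bookkeeping. Both are valid; yours is arguably the tidier of the two. (One small omission in your write-up: for $\layer \ge 2$ you implicitly assume $r \ge 1$; the degenerate case $B_\layer=\emptyset$ gives $\varsum(\layer)=1=2^0$ and $k=0$ and is trivial, but worth a sentence.)
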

Theorem~\ref{thm:bounds-size} is then a direct consequence of Lemma~\ref{lem:var-bound-relation} and~\ref{lem:vars-bounded} (using $k \le |\vars|$).

\section{Exponential Witness}
\label{sec:exponential}

The following lemma from~\cite{conf/icalp/Leroux18} states a condition that is sufficient for a VASS to have exponential complexity\footnote{Our formalization differs from\cite{conf/icalp/Leroux18}, but it is easy to verify that our conditions a) and b) are equivalent to the conditions on the cycles in the `iteration schemes' of~\cite{conf/icalp/Leroux18}.}.
We will use this lemma to prove  Theorem~\ref{thm:exponential}:

\begin{lemma}[Lemma 10 of~\cite{conf/icalp/Leroux18}]
\label{lem:exponential}
Let $\vass$ be a connected VASS, let $U,W$ be a partitioning of $\vars$ and let $\cycle_1,\ldots,\cycle_m$ be cycles such that a) $\valueSum(\cycle_i)(\var) \ge 0$ for all $\var \in U$ and $1 \le i \le m$, and b) $\sum_i \valueSum(\cycle_i)(\var) \ge 1$ for all $\var \in W$.
Then, there is a $c > 1$ and paths $\paath_N$ such that 1) $\paath_N$ can be executed from initial valuation $N \cdot \oneVec$, 2) $\paath_N$ reaches a valuation $\val$ with $\val(\var) \ge c^N$ for all $\var \in W$ and 3) $(\cycle_i)^{c^N}$ is a sub-path of $\paath_N$ for each $1 \le i \le m$.
\end{lemma}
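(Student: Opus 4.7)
The plan is to construct $\paath_N$ in two stages: a \emph{pumping stage} that inflates every $W$-variable to magnitude $\Theta(c^N)$ starting from $N\oneVec$, followed by a \emph{burst stage} that realizes each $(\cycle_i)^{c^N}$ as a contiguous sub-path. Let $\state_i$ denote the base state of $\cycle_i$, set $M = \max_{i,\var}|\valueSum(\cycle_i)(\var)|$, and let $D$ bound the worst-case dip of any variable below its starting value during one execution of any $\cycle_i$. Because $\vass$ is connected, for every required pair of states I fix a connector path $\rho_{\state,\state'}$; the finitely many connectors used contribute a bounded total offset $\delta \in \mathbb{Z}^\vars$.

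\textbf{Pumping stage.}
Fix a state $\state^*$ and, for a constant $K$ to be chosen, consider the interleaved super-cycle
$$\Xi^{(K)} \;=\; \rho_{\state^*,\state_1}(\cycle_1)^K\rho_{\state_1,\state^*}\;\rho_{\state^*,\state_2}(\cycle_2)^K\rho_{\state_2,\state^*}\,\cdots\,\rho_{\state^*,\state_m}(\cycle_m)^K\rho_{\state_m,\state^*}$$
at $\state^*$. Its value is $\valueSum(\Xi^{(K)}) = K\sum_i\valueSum(\cycle_i) + \delta$, which by hypotheses (a) and (b) is $\geq 0$ on $U$ and $\geq K - \norm{\delta}$ on $W$; choosing $K > \norm{\delta}$ makes $\valueSum(\Xi^{(K)})$ pointwise non-negative and strictly positive on $W$. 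Because $K$ is a constant, the maximum dip inside a single execution of $\Xi^{(K)}$ is bounded by some constant $D'$. I then iterate $\Xi^{(K)}$ exactly $L := \lceil 2mMc^N\rceil$ times, where $c > 1$ is a constant to be fixed; feasibility is automatic for $N \geq D'$ because values never decrease across iterations and within-iteration dips are at most $D'$. The reached valuation $\val$ satisfies $\val(\var) \geq N$ for $\var \in U$ and $\val(\var) \geq L$ for $\var \in W$.

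\textbf{Burst stage and conclusion.}
Append $\rho_{\state^*,\state_1}(\cycle_1)^{c^N}\rho_{\state_1,\state^*}\cdots\rho_{\state^*,\state_m}(\cycle_m)^{c^N}\rho_{\state_m,\state^*}$. Inside a block $(\cycle_i)^{c^N}$ a $W$-variable can fall by at most $c^N M + D \leq L/2$, which is well within the reserve just built; an $U$-variable never falls below its block-entry value by more than $D$ because hypothesis (a) yields $\valueSum((\cycle_i)^k)(\var) \geq 0$ for every $k$ on $U$, and this is dominated by $\val(\var) \geq N$. Across all $m$ blocks the net effect on any $\var \in W$ is $c^N\sum_i \valueSum(\cycle_i)(\var) \geq c^N$, so the final valuation still satisfies $\val(\var) \geq c^N - O(1)$, which absorbs into $c^N$ after slightly shrinking $c$ if needed. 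By construction each $(\cycle_i)^{c^N}$ appears as a contiguous sub-path of $\paath_N$, so properties 1), 2), 3) all hold.

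\textbf{Main obstacle.}
The real difficulty is that hypothesis (a) only constrains the $\cycle_i$ \emph{in aggregate} on $W$: an individual $\cycle_i$ may strictly decrease some $W$-variable, so trying to iterate $(\cycle_i)^{c^N}$ directly from $N\oneVec$ is hopeless. The two-stage design circumvents this by first using a heavily padded interleaved super-cycle --- where the aggregate lower bound from (b) drives the growth on $W$ and the fixed connector offset $\delta$ gets swallowed by choosing $K$ large --- to accumulate an exponential reserve, and only then discharging that reserve in the burst stage where individual cycles are allowed to dip. The delicate bookkeeping is picking $K$, $L$, and $c$ consistently so that $M$, $D$, $\norm{\delta}$, and $m$ all remain absorbed into constants independent of $N$.
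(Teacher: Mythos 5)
Your proof has a genuine gap in the pumping stage. You claim that choosing $K > \norm{\delta}$ makes $\valueSum(\Xi^{(K)}) = K\sum_i\valueSum(\cycle_i) + \delta$ non-negative on $U$, but hypothesis (a) only gives $\sum_i\valueSum(\cycle_i)(\var) \ge 0$ for $\var \in U$, not strict positivity; if a $U$-variable $\var$ has $\sum_i\valueSum(\cycle_i)(\var) = 0$ while some connector strictly decreases $\var$ (so $\delta(\var) < 0$), then $\valueSum(\Xi^{(K)})(\var) = \delta(\var) < 0$ for every $K$. Since you iterate $\Xi^{(K)}$ exactly $L = \Theta(c^N)$ times, the cumulative loss on such a $\var$ is $\Theta(c^N)$, which the initial budget $N$ cannot cover, so executability breaks down almost immediately. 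A concrete obstruction: states $s_1, s_2$; $\cycle_1$ a self-loop at $s_1$ of value $(0,2)$; $\cycle_2$ a self-loop at $s_2$ of value $(0,-1)$; connecting transitions $s_1 \to s_2$ of value $(-1,0)$ and $s_2 \to s_1$ of value $(0,0)$; $U=\{x\}$, $W=\{y\}$. All hypotheses hold, yet $\valueSum(\Xi^{(K)})(x) = -1$ regardless of $K$, so your pumping stage dies after about $N$ iterations, far short of the $\Theta(c^N)$ you need.

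The construction of~\cite{conf/icalp/Leroux18} (which the paper only cites, not reproves) sidesteps exactly this: it visits the base states of the cycles only $O(N)$ times in total rather than $\Theta(c^N)$ times. Schematically, one uses $O(N)$ rounds, where round $r$ executes each $\cycle_i$ about $k_r$ times contiguously with $k_r$ growing geometrically (the $W$-reservoir grows by roughly $k_r - \norm{\delta}$ per round, hence can afford a proportionally longer burst next round, giving $k_{r+1} \approx (1+\varepsilon)k_r$). Connector costs then sum to $O(N)$, which $N\oneVec$ absorbs on $U$, while after $O(N)$ rounds both the $W$-reservoir and the burst length reach $\Theta(c^N)$; the last round already contains $(\cycle_i)^{c^N}$ as a contiguous sub-path. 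Your burst-stage bookkeeping is essentially sound once an exponential reserve actually exists, so replacing the flat-iterate pumping stage with a geometric-round scheme of this kind would repair the argument.
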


We now outline the proof of Theorem~\ref{thm:exponential}:
We assume that Algorithm~\ref{alg:algorithm} returned ``$\vass$ has at least exponential complexity'' in loop iteration $\layer$.
According to Lemma~\ref{lem:cycle-existence}, there are cycles $\cycle(\node)$, for every node $\node \in \layerF(\layer)$, that contain $\counters(\transition)$ instances of every transition $\transition \in \transitions(\vassF(\node))$.
One can then show that the cycles $\cycle(\node)$ and the sets $U = \{ \var \in \vars \mid \vExp(\var) \le \layer\}$, $W = \{ \var \in \vars \mid \vExp(\var) > \layer\}$ satisfy the requirements of Lemma~\ref{lem:exponential}, which establishes Theorem~\ref{thm:exponential}.

\bibliographystyle{plain}
\bibliography{main}

\appendix

\section{Proof of Lemma~\ref{lem:solution-is-multcycle}}

\begin{proof}
"$\Leftarrow$":
We consider some multi-cycle $\multicycle$ with $\counters(\transition)$ instances of each transition~$\transition$ and $\valueSum(\multicycle) \ge 0$.
Clearly, $\counters \ge 0$ because the number of instances of a transition is always non-negative.
Because of $\valueSum(\multicycle) \ge 0$ and $\valueSum(\multicycle) = \updates \counters$, we have $\updates \counters  \ge 0$.
Because $\multicycle$ is a multi-cycle we have that $\counters$ satisfies the flow constraint $\flowMatrix \counters = 0$, which encodes for every state that the number of incoming transitions equals the number of out-going transitions.

"$\Rightarrow$":
We assume that $\counters$ is a solution  to constraint system~($\conSysP$).
Hence, we have $\counters \ge 0$.
We now consider the multi-graph which contains $\counters(\transition)$ copies of every transition~$\transition$.
From the flow constraint $\flowMatrix \counters = 0$ we have that the multi-graph is balanced, i.e., the number of incoming edges equals the number of outgoing edges for every state.
It follows that every strongly connected component has an Eulerian cycle.
Each of these Eulerian cycles gives us a cycle in the original VASS.
The union of these cycles is the desired multi-cycle because we have $\updates \counters \ge 0$ by assumption. \qed
\end{proof}

\section{Proof of Lemma~\ref{lem:affine-ranking}}

\begin{proof}
The first item holds because $\rankCoeff,\offsets$ satisfy the constraints $\rankCoeff \ge 0$ and $\offsets \ge 0$ of constraint system ($\conSysQ$).

For the second item, we consider a transition $\transition = \state_1 \xrightarrow{\update} \state_2 \in \transitions(\vass)$ and valuations $\val_1,\val_2 \in \Val(\vass)$ with $\val_2 = \val_1 + \update$.
We have $\affineRank(\rankCoeff,\offsets)(\state_2,\val_2) = \rankCoeff^T \val_2 + \offsets(\state_2) = \rankCoeff^T (\val_1 + \update) + \offsets(\state_2) =
\rankCoeff^T \val_1 + \rankCoeff^T \update + \offsets(\state_2) =
\rankCoeff^T \val_1 + \offsets(\state_1) + \update^T \rankCoeff + \offsets(\state_2) - \offsets(\state_1)
\le \rankCoeff^T \val_1 + \offsets(\state_1) = \affineRank(\rankCoeff,\offsets)(\state_1,\val_1)$,
where we have the inequality because $\rankCoeff,\offsets$
satisfies the constraint $\updates^T \rankCoeff + \flowMatrix^T \offsets \le 0$ of constraint system~($\conSysQ$).
We further observe that the inequality is strict for every $\transition$ with $(\updates^T \rankCoeff + \flowMatrix^T \offsets)(\transition) < 0$. \qed
\end{proof}

\section{Proof of Lemma~\ref{lem:optimization-duality}}

The proof of Lemma~\ref{lem:optimization-duality} will be obtained by two applications of Farkas' Lemma.
We will employ the following version of Farkas' Lemma, which states that for matrices $A$,$C$ and vectors $b$,$d$, exactly one of the following statements is true:

\vspace{0.3cm}
\begin{tabular}{|c|c|}
\hline
\begin{minipage}[c]{0.4\linewidth}

\vspace{0.2cm}
there exists $x$ with

\vspace{0.2cm}
$\begin{array}{rcr}
   Ax & \ge & b \\
   Cx & = & d
 \end{array}$

\end{minipage}
  &
\begin{minipage}[c]{0.5\linewidth}
\vspace{0.2cm}

\vspace{0.2cm}
there exist $y,z$ with

\vspace{0.2cm}
$\begin{array}{rcr}
   y & \ge & 0 \\
   A^T y + C^T z & = & 0 \\
   b^T y + d^T z & > & 0
 \end{array}$
\vspace{0.2cm}
\end{minipage} \\

\hline
\end{tabular}
\vspace{0.2cm}

We now consider the constraint systems~($\conSysA_\transition$) and~($\conSysB_\transition$) stated below.
Both constraint systems are parameterized by a transition $\transition \in \transitions(\vass)$ (we note that only Equations (\ref{multcycle:eq4}) and (\ref{multcycle:eq5}) are parameterized by $\transition$).

\vspace{0.3cm}
\begin{tabular}{|c|c|}
\hline
 {\begin{minipage}[c]{0.4\linewidth}
\vspace{0.2cm}
constraint system ($\conSysA_\transition$):

\vspace{0.2cm}
there exists $\counters \in \mathbb{Z}^{\transitions(\vass)}$ with
\begin{align}
  \updates \counters & \ge 0 \nonumber\\
  \counters & \ge 0 \nonumber\\
  \flowMatrix \counters & = 0 \nonumber\\
  \counters(\transition) & \ge 1 \label{multcycle:eq4}
\end{align}
\vspace{-0.5cm}
\end{minipage}}
  &
{\begin{minipage}[c]{0.52\linewidth}
constraint system ($\conSysB_\transition$):

\vspace{0.2cm}
there exist\\
$\rankCoeff \in \mathbb{Z}^\vars,\offsets \in \mathbb{Z}^{\states(\vass)}$ with
\begin{align}
\rankCoeff & \ge  0 \nonumber\\
\offsets & \ge  0 \nonumber\\
\updates^T \rankCoeff + \flowMatrix^T \offsets & \le 0 \text{ with } < 0 \text{ in line } \transition \label{multcycle:eq5}
\end{align}
\end{minipage}}\\
\hline
\end{tabular}
\vspace{0.3cm}

We recognize constraint system~($\conSysA_\transition$) as the dual of constraint system~($\conSysB_\transition$)
in the following Lemma:

\begin{lemma}
\label{lem:ranking-or-witness}
Exactly one of the constraint systems ($\conSysA_\transition$) and ($\conSysB_\transition$) has a solution.
\end{lemma}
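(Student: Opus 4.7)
The plan is to derive Lemma~\ref{lem:ranking-or-witness} as a direct application of the stated form of Farkas' Lemma, using the specific structure of the flow matrix to upgrade an unrestricted offset vector into a non-negative one.

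First, I would cast constraint system~($\conSysA_\transition$) in the Farkas form $Ax \ge b$, $Cx = d$ by taking $x = \counters$, stacking the three $\ge$-inequalities into
$A = \begin{pmatrix} \updates \\ \identity \\ e_\transition^T \end{pmatrix}$
with right-hand side $b = (\zeroVec, \zeroVec, 1)^T$, where $e_\transition \in \mathbb{Z}^{\transitions(\vass)}$ is the standard unit vector at coordinate $\transition$, and by setting $C = \flowMatrix$, $d = \zeroVec$. The Farkas alternative then furnishes non-negative dual blocks $\rankCoeff$, $u$, $c$ (with $c$ a scalar) and an unrestricted vector $\offsets$ satisfying $\updates^T \rankCoeff + u + c \cdot e_\transition + \flowMatrix^T \offsets = \zeroVec$ together with $c > 0$ (the strict inequality coming from $b^T y + d^T z > 0$, since only the last coordinate of $b$ is nonzero). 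Rearranging this yields $\updates^T \rankCoeff + \flowMatrix^T \offsets = -u - c \cdot e_\transition \le \zeroVec$, with a strict inequality in line~$\transition$ because $c > 0$.

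The main obstacle I expect is to recover the missing constraint $\offsets \ge 0$ of~($\conSysB_\transition$), since Farkas leaves $\offsets$ unconstrained in sign. Here I would exploit the defining structure of $\flowMatrix$: for every transition $\state_1 \xrightarrow{\update} \state_2 \in \transitions(\vass)$ the corresponding column of $\flowMatrix$ sums to $0$ (either one $+1$ paired with one $-1$, or entirely zeros for a self-loop). Hence $\oneVec^T \flowMatrix = \zeroVec$, equivalently $\flowMatrix^T \oneVec = \zeroVec$, so replacing $\offsets$ by $\offsets + k \cdot \oneVec$ for a sufficiently large integer $k$ preserves $\flowMatrix^T \offsets$ and the just-derived inequality, while making the shifted offsets non-negative. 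This shift-by-$\oneVec$ step is the only ingredient that goes beyond a mechanical invocation of Farkas.

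Finally, I would address integrality: the stated Farkas' Lemma produces rational witnesses, but multiplying any such witness by the least common multiple of the denominators of its entries yields an integer solution, while preserving $\counters(\transition) \ge 1$ for~($\conSysA_\transition$) and all non-strict and strict inequalities of~($\conSysB_\transition$). The ``at most one'' half of the dichotomy is immediate from Farkas' Lemma itself, because a simultaneous solution to both systems would place us in both halves of the Farkas alternative at once; this can also be verified directly by computing the scalar $\rankCoeff^T (\updates \counters) + \offsets^T (\flowMatrix \counters)$ in two ways and observing the contradiction between the resulting $\le 0$ and $> 0$ bounds. No further obstacle is expected beyond the $\flowMatrix^T \oneVec = \zeroVec$ trick.
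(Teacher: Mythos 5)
Your proof is correct and takes essentially the same route as the paper: cast ($\conSysA_\transition$) in the Farkas form, take the dual, and use $\flowMatrix^T \oneVec = \zeroVec$ to shift the unrestricted offset vector into the non-negative orthant. The only difference is cosmetic — the paper folds $\counters \ge 0$ and $\counters(\transition) \ge 1$ into a single block $\identity \counters \ge \character_\transition$, while you keep $\identity \counters \ge \zeroVec$ and add a separate row $e_\transition^T \counters \ge 1$, which just introduces one extra (equivalent) dual scalar.
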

\begin{proof}
We fix some transition~$\transition$.
We denote by $\character_\transition \in \mathbb{Z}^{\states(\vass)}$ the vector with $\character_\transition(\transition') = 1$, if $\transition' = \transition$, and $\character_\transition(\transition') = 0$, otherwise.
Using this notation we rewrite ($\conSysA_\transition$) to the equivalent constraint system ($\conSysA_\transition'$):

\vspace{0.2cm}
\begin{tabular}{|lc|}
\hline
{\begin{minipage}[r]{0.3\linewidth}
constraint system ($\conSysA_\transition'$):
\vspace{0.8cm}
\end{minipage}}
&
{\begin{minipage}[r]{0.3\linewidth}
\vspace{-0.2cm}
\begin{eqnarray*}
  \begin{pmatrix}
  \updates \\
  \identity
  \end{pmatrix}
  \counters & \ge &
  \begin{pmatrix}
  0 \\
  \character_\transition
  \end{pmatrix}\\
  \flowMatrix \counters & = & 0
\end{eqnarray*}
\vspace{-0.6cm}
\end{minipage}}\\
\hline
\end{tabular}
\vspace{0.2cm}

Using Farkas' Lemma, we see that either ($\conSysA_\transition'$) is satisfiable or the following constraint system ($\conSysB_\transition'$) is satisfiable:

\vspace{0.3cm}
\begin{tabular}{|r|c|}
\hline
{\begin{minipage}[r]{0.47\linewidth}
\vspace{0.2cm}
\hspace{-0.15cm}
constraint system ($\conSysB_\transition'$):

\vspace{-0.7cm}
\begin{eqnarray*}
  \begin{pmatrix}
  \rankCoeff \\
  y
  \end{pmatrix} & \ge & 0 \\
  \begin{pmatrix}
  \updates \\
  \identity
  \end{pmatrix}^T
  \begin{pmatrix}
  \rankCoeff \\
  y
  \end{pmatrix} + \flowMatrix^T \offsets & = & 0 \\
  \begin{pmatrix}
  0 \\
  \character_\transition
  \end{pmatrix}^T
  \begin{pmatrix}
  \rankCoeff \\
  y
  \end{pmatrix} + 0^T \offsets & > & 0
\end{eqnarray*}
\vspace{-0.3cm}
\end{minipage}}
  &
{\begin{minipage}[c]{0.45\linewidth}
constraint system ($\conSysB_\transition'$) simplified:
\begin{eqnarray*}
  \rankCoeff & \ge & 0 \\
  y & \ge & 0 \\
  \updates^T \rankCoeff + y + \flowMatrix^T \offsets & = & 0 \\
  y(\transition) & > & 0
\end{eqnarray*}
\end{minipage}}\\
\hline
\end{tabular}
\vspace{0.3cm}

We observe that solutions of constraint system ($\conSysB_\transition'$) are invariant under shifts of $\offsets$, i.e, if $\rankCoeff$, $y$, $\offsets$ is a solution, then $\rankCoeff$, $y$, $\offsets + c \cdot \oneVec$ is also a solution for all $c \in \mathbb{Z}$  (because every row of $\flowMatrix^T$ either contains exactly one $-1$ and $1$ entry or only $0$ entries).
Hence, we can force $\offsets$ to be non-negative.
We recognize that constraint systems ($\conSysB_\transition'$) and ($\conSysB_\transition$) are equivalent. \qed
\end{proof}

We now consider the constraint systems ($\conSysC_\var$) and ($\conSysD_\var$) stated below.
Both constraint systems are parameterized by a variable $\var \in \vars$ (we note that only Equations (\ref{multcycle:eq6}) and (\ref{multcycle:eq7}) are parameterized by $\var$).

\vspace{0.3cm}
\begin{tabular}{|c|c|}
\hline
 {\begin{minipage}[c]{0.42\linewidth}
\vspace{0.2cm}
constraint system ($\conSysC_\var$):

\vspace{0.2cm}
there exists $\counters \in \mathbb{Z}^{\transitions(\vass)}$ with
\begin{align}
  \updates \counters & \ge 0 \text{ with } \ge 1 \text{ in line } \var \label{multcycle:eq6}\\
  \counters & \ge 0 \nonumber\\
  \flowMatrix \counters & = 0 \nonumber
\end{align}
\vspace{-0.5cm}
\end{minipage}}
  &
{\begin{minipage}[c]{0.5\linewidth}
\vspace{0.2cm}
constraint system ($\conSysD_\var$):

\vspace{0.2cm}
there exist
$\rankCoeff \in \mathbb{Z}^\vars,\offsets \in \mathbb{Z}^{\states(\vass)}$ with
\begin{align}
\rankCoeff & \ge  0 \nonumber\\
\offsets & \ge  0 \nonumber\\
\updates^T \rankCoeff + \flowMatrix^T \offsets & \le 0 \nonumber\\
\rankCoeff(\var) & > 0 \label{multcycle:eq7}
\end{align}
\vspace{-0.5cm}
\end{minipage}}\\
\hline
\end{tabular}
\vspace{0.3cm}

We recognize constraint system ($\conSysC_\var$) as the dual of constraint system ($\conSysD_\var$) in the following Lemma:

\begin{lemma}
\label{lem:ranking-or-witness-two}
Exactly one of the constraint systems ($\conSysC_\var$) and ($\conSysD_\var$) has a solution.
\end{lemma}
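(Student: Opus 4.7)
The plan is to mirror the proof of Lemma~\ref{lem:ranking-or-witness}, applying Farkas' Lemma with the roles of the transition-specific and variable-specific constraints swapped. The salient difference is that here the extra condition lives on $\updates\counters$ rather than on $\counters$, so after dualisation the strict inequality will land on a component of $\rankCoeff$ rather than on a slack variable associated with a transition.

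First I would introduce the indicator $\character_\var \in \mathbb{Z}^\vars$ with $\character_\var(\var)=1$ and $\character_\var(\var')=0$ for $\var' \neq \var$, so that ``$\updates\counters \ge 0$ with $\ge 1$ in line $\var$'' becomes $\updates\counters \ge \character_\var$. Stacking this with $\counters \ge 0$ yields an equivalent system $A\counters \ge b$, $\flowMatrix\counters = 0$, where $A$ consists of $\updates$ on top of the identity $\identity$ and $b$ consists of $\character_\var$ on top of the zero vector. Applying Farkas' Lemma, in the form stated just before Lemma~\ref{lem:ranking-or-witness}, gives as dual the existence of $\rankCoeff \ge 0$, $y \ge 0$, and $\offsets$ with $\updates^T \rankCoeff + y + \flowMatrix^T \offsets = 0$ and $\character_\var^T \rankCoeff + 0^T y + 0^T \offsets > 0$. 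The latter is precisely $\rankCoeff(\var) > 0$, and eliminating the slack $y \ge 0$ from the equation is equivalent to the componentwise inequality $\updates^T \rankCoeff + \flowMatrix^T \offsets \le 0$.

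Finally, I would re-use the invariance argument from Lemma~\ref{lem:ranking-or-witness}: the solution set of the dual system is closed under shifts $\offsets \mapsto \offsets + c\cdot \oneVec$ for $c \in \mathbb{Z}$, since each column of $\flowMatrix$ is either zero or contains exactly one $+1$ and one $-1$, so $\flowMatrix^T \oneVec = 0$. Choosing $c$ sufficiently large makes $\offsets$ non-negative, and what remains is exactly a solution to $(\conSysD_\var)$. I do not anticipate any real obstacle: the argument is essentially a transposition of the preceding lemma. The only point to double-check is that Farkas' Lemma over $\mathbb{Q}$ suffices for the integer statement here, which is immediate by scaling rational solutions by the least common multiple of their denominators, as noted at the start of Section~\ref{sec:farkas}.
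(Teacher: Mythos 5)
Your proposal is correct and follows essentially the same route as the paper: rewrite $(\conSysC_\var)$ with the indicator $\character_\var$, dualise via the stated form of Farkas' Lemma to obtain the slack variable $y$ whose elimination yields $\updates^T\rankCoeff + \flowMatrix^T\offsets \le 0$ together with $\rankCoeff(\var)>0$, and then shift $\offsets$ by a multiple of $\oneVec$ (using $\flowMatrix^T\oneVec = 0$) to obtain $\offsets \ge 0$. Your explicit remark that rational Farkas suffices after scaling is a small, harmless addition beyond what the paper records here.
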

\begin{proof}
We fix some variable~$\var \in \vars$.
We denote by $\character_\var \in \mathbb{Z}^{\vars}$ the vector with $\character_\var(\var') = 1$, if $\var' = \var$, and $\character_\var(\var') = 0$, otherwise.
Using this notation we rewrite ($\conSysA_\var$) to the equivalent constraint system ($\conSysA_\var'$):

\vspace{0.2cm}
\begin{tabular}{|lc|}
\hline
{\begin{minipage}[r]{0.3\linewidth}
constraint system ($\conSysC_\var'$):
\vspace{0.8cm}
\end{minipage}}
&
{\begin{minipage}[r]{0.3\linewidth}
\vspace{-0.2cm}
\begin{eqnarray*}
  \begin{pmatrix}
  \updates \\
  \identity
  \end{pmatrix}
  \counters & \ge &
  \begin{pmatrix}
  \character_\var \\
  0
  \end{pmatrix}\\
  \flowMatrix \counters & = & 0
\end{eqnarray*}
\vspace{-0.6cm}
\end{minipage}}\\
\hline
\end{tabular}
\vspace{0.2cm}

Using Farkas' Lemma, we see that either ($\conSysC_\var'$) is satisfiable or the following constraint system ($\conSysD_\var'$) is satisfiable:

\vspace{0.3cm}
\begin{tabular}{|r|c|}
\hline
{\begin{minipage}[r]{0.47\linewidth}
\vspace{0.2cm}
\hspace{-0.15cm}
constraint system ($\conSysD_\var'$):

\vspace{-0.7cm}
\begin{eqnarray*}
  \begin{pmatrix}
  \rankCoeff \\
  y
  \end{pmatrix} & \ge & 0 \\
  \begin{pmatrix}
  \updates \\
  \identity
  \end{pmatrix}^T
  \begin{pmatrix}
  \rankCoeff \\
  y
  \end{pmatrix} + \flowMatrix^T \offsets & = & 0 \\
  \begin{pmatrix}
  \character_\var \\
  0
  \end{pmatrix}^T
  \begin{pmatrix}
  \rankCoeff \\
  y
  \end{pmatrix} + 0^T \offsets & > & 0
\end{eqnarray*}
\vspace{-0.3cm}
\end{minipage}}
  &
{\begin{minipage}[c]{0.45\linewidth}
constraint system ($\conSysB_\transition'$) simplified:
\begin{eqnarray*}
  \rankCoeff & \ge & 0 \\
  y & \ge & 0 \\
  \updates^T \rankCoeff + y + \flowMatrix^T \offsets & = & 0 \\
  \rankCoeff(\var) & > & 0
\end{eqnarray*}
\end{minipage}}\\
\hline
\end{tabular}
\vspace{0.3cm}

We observe that solutions of constraint system ($\conSysD_\var'$) are invariant under shifts of $\offsets$, i.e, if $\rankCoeff$, $y$, $\offsets$ is a solution, then $\rankCoeff$, $y$, $\offsets + c \cdot \oneVec$ is also a solution for all $c \in \mathbb{Z}$  (because every row of $\flowMatrix^T$ either contains exactly one $-1$ and $1$ entry or only $0$ entries).
Hence, we can force $\offsets$ to be non-negative.
We recognize that constraint systems ($\conSysD_\var'$) and ($\conSysD_\var$) are equivalent. \qed
\end{proof}

We are now ready to state the proof of Lemma~\ref{lem:optimization-duality}:
\begin{proof}
We consider optimal solutions $\counters$ and $\rankCoeff$,$\offsets$ to constraint systems~($\conSysP$) and~($\conSysQ$).
The claim then directly follows from Lemma~\ref{lem:ranking-or-witness} and Lemma~\ref{lem:ranking-or-witness-two}. \qed
\end{proof}

\section{Proof of Lemma~\ref{lem:full-decomposition}}

\begin{proof}
By Lemma~\ref{lem:solution-is-multcycle} there is a multi-cycle $\multicycle$ with $\counters(\transition)$ instances of every transition~$\transition \in \unboundedTransitions$.
By Lemma~\ref{lem:optimization-duality} we have $\unboundedTransitions \setminus \decreasingTransitions = \{ \transition \in \unboundedTransitions \mid \counters(\transition) \ge 1 \}$.
Hence, every transition~$\transition \in \unboundedTransitions \setminus \decreasingTransitions$ is part of a cycle that uses only edges from $\unboundedTransitions \setminus \decreasingTransitions$.
Thus, every transition~$\transition \in \unboundedTransitions \setminus \decreasingTransitions$ must belong to some SCC of $(\states(\vassF(\node)), \transitions(\vassF(\node)) \setminus \decreasingTransitions)$ for some $\node \in \layerF(\layer-1)$.
We get that $\unboundedTransitions \setminus \decreasingTransitions = \bigcup_{\node \in \layerF(\layer)} \transitions(\vassF(\node))$.
\end{proof}

\section{Proof of Lemma~\ref{lem:termination}}

\begin{proof}
  We note that $\layer$ is incremented in every iteration of Algorithm~\ref{alg:algorithm}.
  Hence, if the values of $\vExp$ and $\tExp$ are not changed, then the condition `there are no $\transition \in \transitions(\vass)$, $\var \in \vars$ with $\layer < \tExp(\transition) + \vExp(\var) < \nobound$' will eventually become true.
  Hence, Algorithm~\ref{alg:algorithm} either terminates after finitely many iterations or there is a change in the values of $\vExp$ and $\tExp$.
  Now we recall that the value of $\vExp(\var)$ and $\tExp(\transition)$ is changed at most once for every $\transition \in \transitions(\vass)$ and $\var \in \vars$.
  Hence, Algorithm~\ref{alg:algorithm} must terminate after finitely many iterations. \qed
\end{proof}

\section{Proof of Corollary~\ref{cor:complexity}}

\begin{proof}
  We consider the run of Algorithm~\ref{alg:algorithm} on $\vass$.
  In case Algorithm~\ref{alg:algorithm} returns ``$\vass$ has at least exponential complexity'',
  then $\complexity_\vass(N) \in 2^{\Omega(N)}$ by Theorem~\ref{thm:exponential}.
  Otherwise, we have $\tExp(\transition) \neq \nobound$ and $\vExp(\var) \neq \nobound$ for all $\transition \in \transitions(\vass)$ and $\var \in \vars$.
  Let $i = \max_{\transition \in \transitions(\vass)} \tExp(\transition)$.
  Using Theorem~\ref{thm:bounds} we get that $\length(\trace) = \sum_{\transition \in \transitions(\vass)} \instances(\trace,\transition) \in O(N^i)$ for every trace $\trace$ of $\vass$ with $\initBound(\trace) \le N$.
  Hence, $\complexity_\vass(N) \in O(N^i)$.
  For the lower bound, we consider a transition $\transition \in \transitions(\vass)$ with $i = \tExp(\transition)$.
  From Theorem~\ref{thm:bounds} we get that there are traces $\trace_N$ of $\vass$ with $\initBound(\trace_N) \le N$ and $\instances(\trace_N,\transition) \in \Omega(N^i)$.
  Because of $\instances(\trace_N,\transition) \le \length(\trace_N)$ for all $N \ge 0$, we get
  $\complexity_\vass(N) \in \Omega(N^i)$. Thus, we have shown $\complexity_\vass(N) \in \Theta(N^i)$.
  Finally, $i \le 2^{|\vars|}$ by Theorem~\ref{thm:bounds-size}.
  \qed
\end{proof}

\section{Proof of Lemma~\ref{lem:skip-lemma}}

\begin{proof}
  We consider the point in time when the execution of Algorithm~\ref{alg:algorithm} reaches line $\layer := \layer + 1$ during some loop iteration $\layer \ge 1$.
  Let $\possibleBoundSet = \{ \tExp(\transition) + \vExp(\var) \mid \var \in \vars, \transition \in \transitions(\vass)\}$ and let $\layer' = \min \{ \layer' \mid \layer' > \layer, \layer' \in \possibleBoundSet\}$.

  We begin by stating the main consequence of the definition of $\layer'$:
  For every variable $\var \in \vars$, every node $\node \in \layerF(\layer - \vExp(\var))$ and every layer $\layer < i < \layer'$ there is a node $\node' \in \layerF(i - \vExp(\var))$ such that $\vassF(\node) = \vassF(\node')$ (+).
  Assume that this is not the case.
  Then there is a variable $\var \in \vars$, a node $\node \in \layerF(\layer - \vExp(\var))$ and a transition $\transition \in \transitions(\vassF(\node))$ such that
  $\layer - \vExp(\var) < \tExp(\transition) < \layer'  - \vExp(\var)$.
  However, this implies $\layer < \tExp(\transition) + \vExp(\var) < \layer'$, which contradicts the definition of $\layer'$.

  Let $\unboundedTransitions = \bigcup_{\node \in \layerF(\layer-1)} \transitions(\vassF(\node))$ be the transitions, $\vars_\extended$ be the variables
  and let $\updates_\extended$ be the update matrix considered by Algorithm~\ref{alg:algorithm} during loop iteration $\layer$.
  Let $\counters$ and $\rankCoeff$, $\offsets$ be optimal solutions to constraint systems~(\conSysI) and~(\conSysII), and let $\decreasingTransitions = \{ \transition \in \unboundedTransitions \mid (\updates_\extended^T \rankCoeff + \flowMatrix|_\unboundedTransitions^T \offsets)(\transition) < 0 \}$ be the transitions removed during loop iteration $\layer$.
  We set $\unboundedTransitions_\circ = \unboundedTransitions \setminus \decreasingTransitions$.
  By Lemma~\ref{lem:full-decomposition}, we have
  $\unboundedTransitions_\circ = \bigcup_{\node \in \layerF(\layer)} \transitions(\vassF(\node))$.
  We define $\counters_\circ$ and $\updates_\circ$ as the restriction of $\counters$ and $\updates_\extended$ to $\unboundedTransitions_\circ$, i.e.,
  we set $\counters_\circ = \counters|_{\unboundedTransitions_\circ}$ and $\updates_\circ = \updates_\extended|_{\unboundedTransitions_\circ}$.
  From Lemma~\ref{lem:optimization-duality} we get that $\unboundedTransitions_\circ = \{ \transition \mid \counters(\transition) \ge 1 \}$,
  and hence $\updates_\extended \counters = \updates_\circ \counters_\circ$.
  From this and the fact that $\counters \in \mathbb{Z}^\unboundedTransitions$ is a solution to constraint system ($\conSysI$) we get that $\updates_\circ \counters_\circ \ge 0$, $\counters_\circ \ge \oneVec$ and
  $\flowMatrix|_{\unboundedTransitions_\circ} \counters_\circ = 0$ (*).
  From Lemma~\ref{lem:optimization-duality} we further get
  that $(\updates_\circ \counters_\circ)(\var) = (\updates_\extended \counters)(\var) \ge 1$ for all variables $\var$ with $\vExp(\var) > \layer$ (\#).

  We now consider the layers $\layer \le i < \layer'$.
  We will show by induction that $\unboundedTransitions_\circ = \bigcup_{\node \in \layerF(i-1)} \transitions(\vassF(\node))$ for all $\layer \le i < \layer'$, and that $\vExp(\var) \neq i$ and $\tExp(\transition) \neq i$ for all $\layer < i < \layer'$, $\var \in \vars$ and $\transition \in \transitions(\vass)$.
  For $i = \layer$ the statement trivially holds.
  We now consider some layer $\layer < i < \layer'$.
  Layer $i$ is constructed from layer $i-1$ during loop iteration $i$.
  By induction assumption we have $\unboundedTransitions_\circ = \bigcup_{\node \in \layerF(i-1)} \transitions(\vassF(\node))$.
  During iteration $i$, thus, Algorithm~\ref{alg:algorithm} considers the update matrix
  $\updates_i \in \mathbb{Z}^{\vars_i \times \unboundedTransitions_\circ}$ for the set of (extended) variables $\vars_i = \{ (\var,\node) \mid \node \in \layerF(i - \vExp(\var)) \}$.
  From (+) we have that there is a bijective function $f: \vars_\extended \rightarrow \vars_i$ with $f(\var,\node) = (\var',\node')$ if and only if $\var = \var'$ and $\vassF(\node) = \vassF(\node')$.
  Hence, we have $f(\updates_\circ) = \updates_i$
  \footnote{For a function $f: I \rightarrow K$ and a matrix $A \in\field^{I\times J}$ we denote by $f(A) \in \field^{K \times J}$ the matrix defined by $f(A)(i,j) = A(f(i),j)$ for all $i,j \in I \times J$.}, i.e., the matrices $\updates_i$ and $\updates_\circ$ are identical up to renaming of the variables.
  Thus, from (*) we get that $\updates_i \counters_\circ \ge 0$, $\counters_\circ \ge \oneVec$ and
  $\flowMatrix|_{\unboundedTransitions_\circ} \counters_\circ = 0$, i.e., we have that
  $\counters_\circ$ is a solution to constraint system~(\conSysI) during loop iteration $i$.
  Hence, by Lemma~\ref{lem:optimization-duality} we have that
  $\updates_i^T \rankCoeff + \flowMatrix|_{\unboundedTransitions_\circ}^T \offsets = 0$ for every optimal solution $\rankCoeff$, $\offsets$ to constraint system~(\conSysII), i.e.,
  no transition is removed during loop iteration $i$.
  Thus, we get $\unboundedTransitions_\circ = \bigcup_{\node \in \layerF(i)} \transitions(\vassF(\node))$ and $\tExp(\transition) \neq i$ for all $\transition \in \transitions(\vass)$.
  Further, by (\#) we have that
  $(\updates_i \counters_\circ)(\var) \ge 1$ for all variables $\var$ with $\vExp(\var) > \layer$.
  Again by Lemma~\ref{lem:optimization-duality} we
  have for every variable $\var$ with $\vExp(\var) > \layer$ and every optimal solution $\rankCoeff$, $\offsets$ to constraint system~(\conSysII) that
  $\rankCoeff(\var,\Root) = 0$, i.e., $\vExp(\var) \neq i$ for all $\var \in \vars$.
  \qed
\end{proof}

\section{Proof of Theorem~\ref{thm:upper-bound}}

\begin{proof}
We prove the claim by induction on loop iteration $\layer$ of Algorithm~\ref{alg:algorithm}.
We consider some $\layer \ge 1$.
Let $\unboundedTransitions = \bigcup_{\node \in \layerF(\layer-1)} \transitions(\vassF(\node))$ be the transitions, $\vars_\extended$ be the set of extended variables and $\updates_\extended \in \mathbb{Z}^{\vars_\extended \times \unboundedTransitions}$ be the update matrix considered by Algorithm~\ref{alg:algorithm} during loop iteration $\layer$.
For every transition $\transition \in \transitions(\vass) \setminus \unboundedTransitions$ we have $\tExp(\var) < \layer$, and hence we can assume $\tbound_N(\transition) \in O(N^{\tExp(\transition)})$ by the induction assumption.
Further, we can apply the induction assumption for variables $\var \in \vars$ with $\vExp(\var) < \layer$ and assume $\vbound_N(\var) \in O(N^{\vExp(\var)})$.
Let $\rankCoeff,\offsets$ be some optimal solution to constraint system~($\conSysII$) computed by Algorithm~\ref{alg:algorithm} during loop iteration $\layer$.
As discussed earlier, we define the witness function $\coeff$ using the quasi-ranking function from Lemma~\ref{lem:affine-ranking}.
We note that we have $$\coeff(\state,\val) = \affineRank(\rankCoeff,\offsets)(\state,\extF_\state(\val)) = \rankCoeff^T \extF_\state(\val) + \offsets(\state) \quad \quad \text{ for all } (\state,\val) \in \configs(\vass).$$
We have already argued that $\coeff$ maps configurations to the non-negative integers, that condition 1) of the Bound Proof Principle is satisfied,
and that the witness function $\coeff$ decreases for transitions $\transition$ with $\tExp(\transition) = \layer$ (*).
It remains to establish condition 2) of the Bound Proof Principle.
We will show that there are increase certificates $\increase_\transition(N) \in O(N^{\layer - \tExp(\transition)})$ for all transitions $\transition \in \transitions(\vass) \setminus \unboundedTransitions$.

We fix some transition $\transition \in \transitions(\vass) \setminus \unboundedTransitions$.
Let $(\state_1,\val_1) \xrightarrow{\update} (\state_2,\val_2)$ be a step with $\state_1 \xrightarrow{\update} \state_2 = \transition$ in a trace $\trace$ of $\vass$ with $\initBound(\trace) \le N$.
We note that $\val_2 = \val_1 + \update$.
We will now show that $\coeff(\state_2,\val_2) - \coeff(\state_1,\val_1) \in O(N^{\layer-\tExp(\transition)})$, which is sufficient to conclude that there is an increase certificate $\increase_\transition$ with $\increase_\transition(N) \in O(N^{\layer-\tExp(\transition)})$.

Let $(\var,\node) \in \vars_\extended$ be a variable with $\vExp(\var) \le \layer-\tExp(\transition)$.
For both $i=1$ or $i=2$,
we consider the extended valuation $\extF_{\state_i}(\val_i)(\var,\node)$.
In case of $\state_i \in \states(\vassF(\node))$ we have $\extF_{\state_i}(\val_i)(\var,\node) = \val_i(\var) \in O(N^{\vExp(\var)})$ because of $\vExp(\var) \le \layer-\tExp(\transition) < \layer$, using the induction assumption for $\var$.
In case of $\state_i \not\in \states(\vassF(\node))$ we have $\extF_{\state_i}(\val_i)(\var,\node) = 0$.
The case analysis allow us to conclude that
$\extF_{\state_2}(\val_2)(\var,\node)-\extF_{\state_1}(\val_1)(\var,\node) \in O(N^{\vExp(\var)})$.

Let $(\var,\node) \in \vars_\extended$  be a variable with $\vExp(\var) > \layer - \tExp(\transition)$.
We note there is a unique node $\node' \in \layerF(\layer - \vExp(\var))$ in layer such that $\transition \in \transitions(\vassF(\node))$ (because of $\tExp(\transition) > \layer-\vExp(\var)$).
In case of $\node = \node'$, we have $\state_1,\state_2 \in \states(\vassF(\node))$, and hence $\extF_{\state_2}(\val_2)(\var,\node) =
\val_2(\var) = \val_1(\var) + \update(\var) =
\extF_{\state_1}(\val_1)(\var,\node) + \update(\var)$.
In case of $\node \neq \node'$, we have
$\state_1,\state_2 \not\in \states(\vassF(\node))$ because $\vassF(\node)$ is disjoint from $\vassF(\node')$, and hence
$\extF_{\state_2}(\val_2)(\var,\node) = \extF_{\state_1}(\val_1)(\var,\node) = 0$.
In both cases we get
$\extF_{\state_2}(\val_2)(\var,\node)-\extF_{\state_1}(\val_1)(\var,\node) \in O(1)$.

Using the above stated facts, we obtain
\begin{align*}
  & \coeff(\state_2,\val_2) - \coeff(\state_1,\val_1) = \rankCoeff^T \extF_{\state_2}(\val_2) + \offsets(\state_2) - \rankCoeff^T \extF_{\state_1}(\val_1)  - \offsets(\state_1) \\
  & = \sum_{(\var,\node) \in \vars_\extended, \vExp(\var) \le \layer-\tExp(\transition)} \rankCoeff(\var,\node) \cdot  (\extF_{\state_2}(\val_2)(\var,\node)-\extF_{\state_1}(\val_1)(\var,\node))\\
  & + \sum_{(\var,\node) \in \vars_\extended, \vExp(\var) > \layer-\tExp(\transition)} \rankCoeff(\var,\node)\cdot  (\extF_{\state_2}(\val_2)(\var,\node)-\extF_{\state_1}(\val_1)(\var,\node)) \\
  & + \offsets(\state_2) - \offsets(\state_1) \\
  & = \sum_{(\var,\node) \in \vars_\extended, \vExp(\var) \le \layer-\tExp(\transition)} O(N^{\vExp(\var)}) \\
  & + \sum_{(\var,\node) \in \vars_\extended, \vExp(\var) > \layer-\tExp(\transition)} O(1)\\
  & + \offsets(\state_2) - \offsets(\state_1) \\
  & = O(N^{\layer-\tExp(\transition)}) + O(1) + O(1) = O(N^{\layer-\tExp(\transition)}).
\end{align*}

We are now ready to apply the Bound Proof Principle from Proposition~\ref{prop:bound-proof-principle}.
We observe that $$\max_{(\state,\val) \in \configs(\vass), \norm{\val} \le N} \coeff(\state,\val) \in O(N)$$ because $\coeff(\state,\val) = \rankCoeff^T \extF_\state(\val) + \offsets(\state)$ is a linear expression for all $\state \in \states(\vass)$, and we consider valuations $\val$ with $\norm{\val} \le N$.
Further, by the above, we have
$$\sum_{\transition \in  \transitions(\vass) \setminus \unboundedTransitions} \tbound_\transition(N) \cdot \increase_\transition(N) = \sum_{\transition \in \transitions(\vass) \setminus \unboundedTransitions} O(N^{\tExp(\transition)}) \cdot O(N^{\layer-\tExp(\transition)}) = O(N^\layer),$$
using the induction assumption for $\transition \in \transitions(\vass) \setminus \unboundedTransitions$.
With (*) we can now conclude from the Bound Proof Principle that $\tbound_N(\transition) \in O(N^{\layer})$ for all transitions $\transition$ with $\tExp(\transition) = \layer$.
Next, we argue that we can also deduce the desired variable bounds.
We recall that for each variable $\var$ with $\vExp(\var) = \layer$ we have $\rankCoeff(\var,\Root) > 0$.
Hence, $\coeff(\state,\val) = \rankCoeff^T \extF_\state(\val) + \offsets(\state) \ge \rankCoeff(\var,\Root) \cdot \val(\var) \ge \val(\var)$ for all $(\state,\val) \in \configs(\vass)$.
Thus, we can conclude from the Bound Proof Principle that $\vbound_N(\var) \in O(N^{\layer})$ for all variables $\var$ with $\vExp(\var) = \layer$.  \qed
\end{proof}

\section{Proof of Lemma~\ref{lem:lower-bound-init-valuation}}

\begin{proof}
Assume that there is a $c > 0$ and that there are traces $\trace_N$ with $\initBound(\trace_N) \le cN$ and $\instances(\paath_N,\transition) \ge N^i$.
We set $N' = cN$.
We get that there are traces $\trace_{N'}$ with $\initBound(\trace_{N'}) \le N'$ and $\instances(\trace_{N'},\transition) \ge (\frac{1}{c})^i N'^i$.
Hence, $\tbound_N(\transition) \in \Omega(N^i)$.
The second claim can be shown analogously. \qed
\end{proof}

\section{Proof of Lemma~\ref{lem:cycle-existence}}
\begin{proof}
Let $\unboundedTransitions \setminus \decreasingTransitions = \bigcup_{\node \in \layerF(\layer)} \transitions(\vassF(\node))$ be the transition of the layer constructed during loop iteration $\layer$ of Algorithm~\ref{alg:algorithm}.
As we have argued in the proof of Lemma~\ref{lem:full-decomposition},
we have $\unboundedTransitions \setminus \decreasingTransitions = \{ \transition \mid \counters(\transition) \ge 1 \}$ and there is a multi-cycle $\multicycle$ with $\counters(\transition)$ instances of every transition~$\transition \in \unboundedTransitions$.
Because the $\vassF(\node)$ of the nodes $\node$ in layer $\layer$ are disjoint, we have that the transitions of every cycle of $\multicycle$ belong to only a single set $\transitions(\vassF(\node))$ for some $\node \in \layerF(\layer)$.
We can now shuffle all cycles that use transitions from the same $\vassF(\node)$ into a single cycle.
We obtain a single cycle for each $\node \in \layerF(\layer)$ that uses exactly $\counters(\transition)$ instances of every transition $\transition \in \transitions(\vassF(\node))$.
\qed
\end{proof}

\section{Proof of Theorem~\ref{thm:lower-bound}}

We will need the following two properties about pre-paths:

\begin{proposition}
\label{prop:merge-proposition}
Let $\ppath$ be a pre-path that can be obtained by shuffling the two pre-paths $\ppath_a$ and $\ppath_b$.
If $\ppath_a$ resp. $\ppath_b$ are executable from some valuation $\val_a$ resp. $\val_b$,
then $\ppath$ is executable from valuation $\val_a + \val_b$;
moreover, if $\ppath_a$ reaches a valuation $\val_a'$ from $\val_a$ and $\ppath_b$ reaches a valuation $\val_b'$ from $\val_b$, then $\ppath$ reaches valuation $\val_a' + \val_b'$ from $\val_a + \val_b$.
\end{proposition}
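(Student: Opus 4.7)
}
The plan is a direct invariant argument based on the definition of shuffling. I will unpack the shuffle witnesses and the two execution witnesses, then exhibit a single witness for $\ppath$ whose intermediate valuations are, at each position, the sum of the two corresponding intermediate valuations of $\ppath_a$ and $\ppath_b$. Non-negativity then follows from non-negativity of the summands.

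More concretely, write $\ppath_a = \transition_1 \cdots \transition_k$ with updates $\update_1,\ldots,\update_k$ and $\ppath_b = \transition_1' \cdots \transition_l'$ with updates $\update_1',\ldots,\update_l'$, and let $f:[1,k]\to[1,k+l]$, $g:[1,l]\to[1,k+l]$ be the injective monotone functions with disjoint images witnessing that $\ppath = \transition_1'' \cdots \transition_{k+l}''$ is a shuffle of $\ppath_a$ and $\ppath_b$. Executability of $\ppath_a$ from $\val_a$ yields non-negative valuations $\val_0=\val_a,\val_1,\ldots,\val_k$ with $\val_{i+1}=\val_i+\update_{i+1}$, and similarly $\val_0'=\val_b,\val_1',\ldots,\val_l'$ for $\ppath_b$.

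I then define, for each $j \in [0,k+l]$, the indices $i(j) = |f^{-1}([1,j])|$ and $i'(j) = |g^{-1}([1,j])|$, which count the number of $\ppath_a$- and $\ppath_b$-transitions occurring in the first $j$ positions of $\ppath$. By the disjointness of the images of $f$ and $g$ and the fact that $f([1,k]) \cup g([1,l]) = [1,k+l]$, we have $i(j)+i'(j)=j$. Setting $\mu_j := \val_{i(j)} + \val_{i'(j)}'$, I claim $\mu_0 = \val_a + \val_b$, $\mu_{j+1} = \mu_j + $ (update of $\transition_{j+1}''$), and $\mu_j \ge 0$. The first and third properties are immediate; for the step, observe that going from $j$ to $j+1$ increments exactly one of $i(\cdot)$ or $i'(\cdot)$ by one, depending on whether $j+1 \in f([1,k])$ or $j+1 \in g([1,l])$, and in either case $\mu_{j+1} - \mu_j$ equals exactly the update of $\transition_{j+1}''$ (which is either some $\update_p$ or some $\update_p'$ by the shuffling conditions $\transition''_{f(p)}=\transition_p$ and $\transition''_{g(p)}=\transition_p'$).

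This establishes executability of $\ppath$ from $\val_a + \val_b$. For the final claim, specialize $j=k+l$: then $i(k+l)=k$ and $i'(k+l)=l$, so the reached valuation is $\mu_{k+l} = \val_k + \val_l' = \val_a' + \val_b'$. There is no real obstacle here; the only care needed is to check that the step-by-step update of $\mu_j$ lines up with the shuffle structure, which follows mechanically from monotonicity and disjointness of $f$ and $g$.
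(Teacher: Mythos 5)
Your proof is correct and follows essentially the same route as the paper's: the paper defines $u(i) = \max\{h \in [1,k] : f(h) \le i\}$ and $v(i) = \max\{h \in [1,l] : g(h) \le i\}$ and sets $\val_i'' = \val_{u(i)} + \val_{v(i)}'$, which (by injectivity and monotonicity of $f$ and $g$) is exactly the same construction as your $\mu_j = \val_{i(j)} + \val_{i'(j)}'$ with $i(j) = |f^{-1}([1,j])|$, $i'(j) = |g^{-1}([1,j])|$; the case analysis on which of the two counters increments at each step and the verification that endpoints and non-negativity match is identical.
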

\begin{proof}
Let
$\ppath_a = \transition_1 \cdots \transition_k$ resp. $\ppath_b = \transition_1' \cdots \transition_l'$
be the sequences of transitions $\transition_i = \state_i \xrightarrow{\update_i} \state_{i+1}$ resp.
$\transition_i' = \state'_i \xrightarrow{\update_i'} \state_{i+1}'$.
By assumption there are injective monotone functions $f: [1,k] \rightarrow [1,k+l]$ and $g: [1,l] \rightarrow [1,k+l]$ with $f([1,k]) \cap g([1,l]) = \emptyset$ that define $\ppath = \transition_1'' \cdots \transition_{k+l}''$, i.e., we have $\transition_{f(i)}'' = \transition_i$ for all $i \in [1,k]$ and $\transition_{g(i)}'' = \transition_i'$ for all $i \in [1,l]$.
We define functions $u(i) = \max \{ h \in [1,k] \mid f(h) \le i\}$ and $v(i) = \max \{ h \in [1,l] \mid g(h) \le i\}$ for all $0 \le i \le l+k$, where we set $\max \emptyset = 0$.
Because $\ppath_a$ resp. $\ppath_b$ are executable from some valuation $\val_a$ resp. $\val_b$,
there are valuations $\val_i \ge 0$ and $\val_i' \ge 0$ such that $\val_a = \val_0$ and $\val_{i+1} = \val_i + \update_{i+1}$ for all $0 \le i < k$ as well as $\val_b = \val_0'$ and $\val_{i+1}' = \val_i' + \update_{i+1}' \ge 0$ for all $0 \le i < l$.
We now set $\val_i'' = \val_{u(i)} + \val_{v(i)}'$ for all $0 \le i \le k+l$.
We observe $\val_0'' = \val_{u(0)} + \val_{v(0)}' = \val_a + \val_b$ and $\val_{k+l}'' = \val_{u(k+l)} + \val_{v(k+l)}' = \val_a' + \val_b'$.
Clearly, $\val_i'' \ge 0$ for all $1 \le i \le k+l$.
We now observe that either $u(i) = u(i+1)$ and $v(i) \neq v(i+1)$ or $u(i) \neq u(i+1)$ and $v(i) = v(i+1)$ for all $0 \le i < k+l$.
In particular,
$\val_{i+1}'' =
\val_{u(i+1)} + \val_{v(i+1)}' =
\val_{u(i)} + \update_{u(i+1)} + \val_{v(i)}' =
\val_i'' + \update_{u(i+1)}$ or
$\val_{i+1}'' =
\val_{u(i+1)} + \val_{v(i+1)}' =
\val_{u(i)} + \val_{v(i)}' + \update_{v(i+1)}' =
\val_i'' + \update_{u(i+1)}$ for all $0 \le i < k+l$.
Hence, the claim holds. \qed
\end{proof}

\begin{proposition}
\label{prop:repetition-lemma}
Let $U,W$ be a partitioning of $\vars$.
Let $d \ge 1$ be a natural number and let $\ppath$ be a pre-path such that $\valueSum(\ppath)(\var) \ge 0$ for all $\var \in U$.
If $\ppath$ can be executed from some valuation $\val$,
then $\ppath^d$ can be executed from valuation $\val_d$ with $\val_d(\var) = \val(\var)$ for $\var \in U$, and $\val_d(\var) = d\val(\var)$, for $\var \in W$.
\end{proposition}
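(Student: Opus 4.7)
The plan is to track the valuation reached after every transition along $\ppath^d$ starting from $\val_d$ and check component-wise that it stays non-negative; Proposition~\ref{prop:repetition-lemma} then follows directly from the definition of executability. Let $\ppath = \transition_1 \cdots \transition_k$ have updates $\update_1, \ldots, \update_k$. Since $\ppath$ is executable from $\val$, the partial sums $\val^{(i)} = \val + \update_1 + \cdots + \update_i$ are all non-negative for $0 \le i \le k$; in particular $\val + \valueSum(\ppath) = \val^{(k)} \ge 0$, and this last inequality is essentially the only consequence of executability that I need.

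First I would observe that the valuation reached after the first $i$ transitions of the $j$-th copy of $\ppath$ (for $1 \le j \le d$ and $0 \le i \le k$), starting from $\val_d$, is exactly $\val_d + (j-1)\valueSum(\ppath) + (\val^{(i)} - \val)$. The whole proof then reduces to checking that this vector is non-negative in every coordinate, which I would split according to whether the coordinate belongs to $U$ or to $W$.

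For $\var \in U$, the equality $\val_d(\var) = \val(\var)$ simplifies the above expression to $(j-1)\valueSum(\ppath)(\var) + \val^{(i)}(\var)$, which is a sum of two non-negative terms by the hypothesis on $U$ and by $\val^{(i)} \ge 0$. For $\var \in W$, the equality $\val_d(\var) = d\val(\var)$ turns the expression into $(d-1)\val(\var) + (j-1)\valueSum(\ppath)(\var) + \val^{(i)}(\var)$, and I would argue by cases on the sign of $\valueSum(\ppath)(\var)$: when $\valueSum(\ppath)(\var) \ge 0$ the sum is visibly non-negative, and when $\valueSum(\ppath)(\var) < 0$ the inequality $\val(\var) \ge -\valueSum(\ppath)(\var)$ combined with $j-1 \le d-1$ gives $(d-1)\val(\var) + (j-1)\valueSum(\ppath)(\var) \ge (d-j)(-\valueSum(\ppath)(\var)) \ge 0$, so the full sum remains non-negative.

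I do not expect any significant obstacle here: the key observation that makes the entire argument go through without additional induction is that on $W$ the slack $(d-j)\val(\var)$ carried at the start of the $j$-th iteration always dominates any intermediate dip that $\ppath$ can cause within that iteration, since $\val^{(i)}(\var) \ge 0$ is already baked into the partial sums inherited from the execution of $\ppath$ from $\val$.
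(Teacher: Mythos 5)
Your proof is correct and rests on the same key fact the paper uses, namely that executability of $\ppath$ from $\val$ gives $\val + \valueSum(\ppath) \ge 0$. The paper packages the argument as an induction on $d$ (after one pass of $\ppath$ from $\val_d$ the resulting valuation dominates $\val_{d-1}$, so the rest executes by the induction hypothesis and monotonicity of executability), whereas you unroll this into the explicit formula $\val_d + (j-1)\valueSum(\ppath) + (\val^{(i)} - \val)$ for the valuation after $i$ steps of the $j$-th copy and check non-negativity coordinate-wise; these are the same bookkeeping in two presentations.
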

\begin{proof}
Let $\val$ be a valuation from which $\ppath$ can be executed and let $\val'$ be the valuation reached by $\ppath$ from $\val$.
Because of $\val + \valueSum(\ppath) = \val' \ge 0$, we get that $\valueSum(\ppath) \ge -\val$ (*).

We now prove the claim by induction on $d \ge 1$.
Clearly the claim holds for $d=1$.
We consider some $d > 1$.
Because of $\val_d \ge \val$ we have that $\ppath$ can be executed from valuation $\val_d$.
Let $\val_d'$ be the valuation reached from $\val_d$ by executing $\ppath$.
By (*) we have $\val_d'(\var) \ge \val_d(\var)-\val(\var)=
\val_{d-1}(\var)$ for all $\var \in W$.
Because of $\valueSum(\ppath)(\var) \ge 0$ for all $\var \in U$ we have $\val_d'(\var) \ge \val_d(\var) = \val(\var) = \val_{d-1}(\var)$ for all $\var \in U$.
Hence, the claim follows from the induction assumption. \qed
\end{proof}

\subsection{Step I}
\label{subsec:step-one}

The properties stated in the two lemmata below are needed for the construction of the pre-paths $\ppathLayer_\layer$ along the tree $\ctree$.
These properties are direct consequences of
constraint system~($\conSysI$), and are the key ingredient for the lower bound proof.

\begin{lemma}
\label{lem:constraint-system-property-one}
Let $0 \le i < \layer$ be some layers.
For every $\node \in \layerF(i)$ and variable $\var \in \vars$ with $\vExp(\var) \le \layer-i$ we have $$\sum_{\node' \in \layerF(\layer), \node' \text{ is descendent of } \node} \valueSum(\cycle(\node'))(\var) \ge 0.$$
\end{lemma}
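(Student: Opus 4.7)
The plan is to derive the inequality directly from the non-negativity constraint $\updates_\extended \counters \ge 0$ in constraint system~($\conSysI$), solved by Algorithm~\ref{alg:algorithm} during loop iteration $\layer$, and then sum over appropriate nodes. The hypothesis $\vExp(\var) \le \layer-i$ is exactly what guarantees $\layer - \vExp(\var) \ge i$, so that the extended variables $(\var,\tilde\node)$ with $\tilde\node \in \layerF(\layer-\vExp(\var))$ range over descendants of $\node$.

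First I would fix an optimal solution $\counters$ to~($\conSysI$) in iteration $\layer$ (the one used to define the cycles $\cycle(\node')$) and consider any descendant $\tilde\node \in \layerF(\layer-\vExp(\var))$ of $\node$. The constraint $\updates_\extended \counters \ge 0$ gives
\[
0 \le (\updates_\extended \counters)(\var,\tilde\node) = \sum_{\transition \in \unboundedTransitions \cap \transitions(\vassF(\tilde\node))} \updates(\var,\transition)\cdot\counters(\transition),
\]
by the definition of $\updates_\extended$. The next step is to remove decreasing transitions from this sum: for any $\transition \in \decreasingTransitions$ one has $\counters(\transition) = 0$. (This is implicit in the proof of Lemma~\ref{lem:full-decomposition}; if it needs to be spelled out, multiply constraint~($\conSysII$) by $\counters$ and~($\conSysI$) by $\rankCoeff,\offsets$ to obtain $0 \ge \counters^T(\updates_\extended^T \rankCoeff + \flowMatrix|_\unboundedTransitions^T\offsets) = \rankCoeff^T \updates_\extended \counters \ge 0$, forcing $\counters(\transition) = 0$ wherever $(\updates_\extended^T \rankCoeff + \flowMatrix|_\unboundedTransitions^T\offsets)(\transition) < 0$.) Therefore the sum above can be restricted to $\unboundedTransitions \setminus \decreasingTransitions$.

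Next I would apply Lemma~\ref{lem:full-decomposition} to rewrite $\unboundedTransitions\setminus\decreasingTransitions = \bigcup_{\node' \in \layerF(\layer)} \transitions(\vassF(\node'))$ and use the tree structure: since the sub-VASSs $\vassF(\node')$ attached to nodes in the same layer are disjoint, and transitions nest along ancestor-descendent chains, $\transitions(\vassF(\node'))\cap\transitions(\vassF(\tilde\node))$ is nonempty exactly when $\node'$ is a descendent of $\tilde\node$, in which case $\transitions(\vassF(\node')) \subseteq \transitions(\vassF(\tilde\node))$. Using the defining property of $\cycle(\node')$ (Lemma~\ref{lem:cycle-existence}) that $\valueSum(\cycle(\node'))(\var) = \sum_{\transition \in \transitions(\vassF(\node'))}\updates(\var,\transition)\cdot\counters(\transition)$, this yields
\[
(\updates_\extended \counters)(\var,\tilde\node) = \sum_{\node' \in \layerF(\layer),\ \node' \text{ desc.\ of } \tilde\node}\valueSum(\cycle(\node'))(\var) \;\ge\; 0.
\]

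Finally I would sum this inequality over all $\tilde\node \in \layerF(\layer-\vExp(\var))$ that are descendents of $\node$. Because $\layer-\vExp(\var) \ge i$ and $\node \in \layerF(i)$, every descendant $\node' \in \layerF(\layer)$ of $\node$ has a unique ancestor in $\layerF(\layer-\vExp(\var))$ which is itself a descendent of $\node$, so this regrouping of the double sum yields $\sum_{\node' \in \layerF(\layer),\ \node' \text{ desc.\ of } \node} \valueSum(\cycle(\node'))(\var) \ge 0$, as required. The edge case where no such $\node'$ exists gives an empty sum, trivially $\ge 0$, and the case $\layer-\vExp(\var)=i$ collapses to $\tilde\node=\node$. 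The main obstacle is purely bookkeeping: making sure the chains of inclusions between $\unboundedTransitions$, $\decreasingTransitions$, and the transition sets of nodes at various layers line up correctly; once $\counters(\transition)=0$ for decreasing $\transition$ is in hand, the argument reduces to expanding definitions.
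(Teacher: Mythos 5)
Your proposal is correct and follows essentially the same route as the paper's own proof: read off the non-negativity constraint $\updates_\extended\counters\ge 0$ at row $(\var,\tilde\node)$ for $\tilde\node\in\layerF(\layer-\vExp(\var))$ a descendent of $\node$, identify that sum with $\sum_{\node'}\valueSum(\cycle(\node'))(\var)$ over descendents $\node'\in\layerF(\layer)$ of $\tilde\node$, and then regroup the double sum over descendents of $\node$. The only difference is cosmetic: you make explicit the complementary-slackness argument showing $\counters(\transition)=0$ for decreasing $\transition$ (which the paper leaves implicit in its formulation of~(*)).
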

\begin{proof}
We first start with a statement that will be helpful to prove the claim.
Let $\var \in \vars$ be a variable with $\vExp(\var) \le \layer$.
Let $\counters$ be the optimal solution to constraint system~(\conSysI) during loop iteration $\layer$ of Algorithm~\ref{alg:algorithm}.
We consider some node $\node_\circ \in \layerF(\layer - \vExp(\var))$ in layer $\layer - \vExp(\var)$ of $\ctree$.
Because $\counters$ is a solution to~(\conSysI) we have that
\begin{displaymath}
       \sum_{\transition \in \transitions(\vassF(\node_\circ)) \text{ and } \transition \in \transitions(\vassF(\node)) \text{ for some } \node \in \layerF(\layer)} \updates(\var,\transition) \cdot \counters(\transition) \ge 0 \ (*).
\end{displaymath}
We recall that we have
$\valueSum(\cycle(\node)) = \sum_{\transition \in \transitions(\vassF(\node))} \updates(\transition) \cdot \counters(\transition)$ for all $\node \in \layerF(\layer)$ by the definition of the cycles $\cycle(\node)$.
With (*) we get
\begin{displaymath}
    \sum_{\node \in \layerF(\layer), \node \text{ is descendent of } \node_\circ} \valueSum(\cycle(\node))(\var) \ge 0 \ (\#).
\end{displaymath}

We are now ready to prove the claim.
We now consider some node $\node \in \layerF(i)$ and some variable $\var \in \vars$ with $\vExp(\var) \le \layer-i$.
We have
\begin{multline*}
\sum_{\node' \in \layerF(\layer), \node' \text{ is descendent of } \node} \valueSum(\cycle(\node'))(\var)
= \\
\sum_{\node_\circ \in \layerF(\layer-\vExp(\var)), \node_\circ \text{ is descendent of } \node}
\left( \sum_{\node' \in \layerF(\layer), \node' \text{ is descendent of } \node_\circ} \valueSum(\cycle(\node'))(\var) \right)\\
 \ge 0,
\end{multline*}
where we have the last inequality from (\#). \qed
\end{proof}

\begin{lemma}
\label{lem:constraint-system-property-two}
Let $\layer \ge 1$ be a layer.
Then we have
$$\sum_{\node \in \layerF(\layer)} \valueSum(\cycle(\node))(\var) \ge 1$$
for variables $\var \in \vars$ with $\vExp(\var) > \layer$.
\end{lemma}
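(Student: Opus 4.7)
The plan is to trace the defining constraint system $(\conSysI)$ during iteration $\layer$ of Algorithm~\ref{alg:algorithm} and combine the dichotomy of Lemma~\ref{lem:optimization-duality} with the algorithm's rule for setting $\vExp(\var)$. Fix some $\var \in \vars$ with $\vExp(\var) > \layer$. Since $\vExp(\var)$ is only set once and only in iterations indexed by the eventual value, during iteration $\layer$ we still had $\vExp(\var) = \nobound$, and by the convention $\layer - \nobound = 0$ the only extended variable corresponding to $\var$ that occurs in $\vars_\extended$ is the pair $(\var,\Root)$, because $\layerF(0) = \{\Root\}$.

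Next I would exploit the rule in Algorithm~\ref{alg:algorithm} that sets $\vExp(\var) := \layer$ whenever $\rankCoeff(\var,\Root) > 0$ in the optimal solution $\rankCoeff,\offsets$ of $(\conSysII)$ in iteration $\layer$. Because by assumption $\vExp(\var) \ne \layer$ at the end of iteration $\layer$, this forces $\rankCoeff(\var,\Root) = 0$. Applying the dichotomy of Lemma~\ref{lem:optimization-duality} to the extended variable $(\var,\Root)$ and the optimal solution $\counters$ of $(\conSysI)$, one of the two alternatives must hold, and we have just excluded the first; hence $(\updates_\extended \counters)(\var,\Root) \ge 1$.

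It remains to rewrite this inequality in terms of the cycles $\cycle(\node)$. By the definition of $\updates_\extended$ and since $\vassF(\Root) = \vass$, we have $\updates_\extended((\var,\Root),\transition) = \updates(\var,\transition)$ for every $\transition \in \unboundedTransitions$. Moreover, by Lemma~\ref{lem:optimization-duality} combined with Lemma~\ref{lem:full-decomposition}, the transitions $\transition$ with $\counters(\transition) \ge 1$ are exactly those in $\unboundedTransitions \setminus \decreasingTransitions = \bigcup_{\node \in \layerF(\layer)} \transitions(\vassF(\node))$, and the nodes of layer $\layer$ label disjoint sub-VASSs. Using the defining identity $\valueSum(\cycle(\node)) = \sum_{\transition \in \transitions(\vassF(\node))} \updates(\transition) \cdot \counters(\transition)$ of the chosen cycles, we can split the sum and obtain
\[
1 \;\le\; (\updates_\extended \counters)(\var,\Root) \;=\; \sum_{\node \in \layerF(\layer)}\; \sum_{\transition \in \transitions(\vassF(\node))} \updates(\var,\transition)\,\counters(\transition) \;=\; \sum_{\node \in \layerF(\layer)} \valueSum(\cycle(\node))(\var),
\]
which is the claim.

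No real obstacle is expected; the only subtle point is making sure we apply Lemma~\ref{lem:optimization-duality} to the \emph{correct} extended variable, namely $(\var,\Root)$, and that we justify why this is the only copy of $\var$ appearing in $\vars_\extended$ during iteration $\layer$. Once that bookkeeping is in place, the identification of the decreasing transitions (where $\counters$ vanishes) with the transitions absent from layer $\layer$, together with the disjointness of the $\vassF(\node)$ for $\node \in \layerF(\layer)$, is what allows the clean rewriting into a sum over layer $\layer$.
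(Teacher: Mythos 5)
Your proposal is correct and follows essentially the same route as the paper's proof: establish $\rankCoeff(\var,\Root)=0$ from the fact that the algorithm did not set $\vExp(\var):=\layer$, invoke the dichotomy of Lemma~\ref{lem:optimization-duality} to conclude $(\updates_\extended\counters)(\var,\Root)\ge 1$, and rewrite the resulting sum via the defining identity $\valueSum(\cycle(\node))=\sum_{\transition\in\transitions(\vassF(\node))}\updates(\transition)\counters(\transition)$ together with the fact that $\counters$ vanishes outside $\bigcup_{\node\in\layerF(\layer)}\transitions(\vassF(\node))$. The only cosmetic difference is that you pass through Lemma~\ref{lem:full-decomposition} to locate the support of $\counters$, while the paper simply restricts the sum to transitions present in layer~$\layer$; the underlying justification is identical.
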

\begin{proof}
Let $\var \in \vars$ be some variable with $\vExp(\var) > \layer$.
Let $\counters$ and $\rankCoeff, \offsets$ be the optimal solutions to constraint systems~(\conSysI) and~($\conSysII$) during loop iteration $\layer$ of Algorithm~\ref{alg:algorithm}.
Because $\counters$ is a solution to~(\conSysI) we have that
$$\sum_{\transition \in \transitions(\vassF(\node)) \text{ for some } \node \in \layerF(\layer)} \updates(\var,\transition) \cdot \counters(\transition) \ge 0. (*)$$
Because of $\vExp(\var) > \layer$ we must have $\vExp(\var) = \nobound$ during iteration $\layer$ of Algorithm~\ref{alg:algorithm}.
We now observe that we must have $\rankCoeff(\var,\Root) = 0$ (otherwise Algorithm~\ref{alg:algorithm} would set $\vExp(\var) := \layer$ during loop iteration $\layer$, contradicting the assumption $\vExp(\var) > \layer$).
From the dichotomy stated in Lemma~\ref{lem:optimization-duality} we then get that the inequality (*) must be strict.
Now the claim follows because of
$\valueSum(\cycle(\node)) = \sum_{\transition \in \transitions(\vassF(\node))} \updates(\transition) \cdot \counters(\transition)$ by definition of the cycles $\cycle(\node)$. \qed
\end{proof}

For the construction of the pre-paths $\ppathLayer_\layer$, we need the following convention:
For every layer $\layer \ge 0$ and every node $\node \in \layerF(\layer)$, we consider the cyclic path $\cycle(\node)$, and fix once and for all a decomposition $\cycle(\node) = \paath_0\paath_1\paath_2\cdots \paath_\NoC$ and an ordering of the children $\node_1,\ldots,\node_\NoC$ of $\node$ such that each $\paath_j$ has the same start state as $\cycle(\node_j)$ (e.g., we can order the children $\node_1,\ldots,\node_\NoC$ of $\node$ according to the first appearance of the start state of the path $\cycle(\node_j)$ in the path $\cycle(\node)$).

We will now define the pre-paths $\ppathLayer_\layer$ along the structure of the tree $\ctree$.
In order to do so, we will define pre-paths $\ppath_\layer(\node)$ for all layers $\layer \ge 1$ and nodes $\node \in \layerF(i)$ with $0 \le i \le \layer$.
We will then set $\ppathLayer_\layer = \ppath_\layer(\Root)^N$ for all layers $\layer \ge 1$.

We define the pre-paths $\ppath_\layer(\node)$ inductively, starting from $i = \layer$ downto $i = 0$.
For $\node \in \layerF(\layer)$ we set $\ppath_\layer(\node) = \cycle(\node)$.
We now consider some $\node \in \layerF(i)$ with $0 \le i < \layer$.
Let $\cycle(\node) = \paath_0\paath_1\paath_2\cdots \paath_\NoC$ be the fixed decomposition and $\node_1,\ldots,\node_\NoC$ the corresponding ordering of the children of $\node$ such that each $\paath_j$ has the same start state as $\cycle(\node_j)$.
We set $\ppath_\layer(\node) = \ppath_\layer(\node_1)^N \ppath_\layer(\node_2)^N \cdots \ppath_\layer(\node_\NoC)^N$.
We now set $\ppathLayer_\layer = \ppath_\layer(\Root)^N$ for all layers $\layer \ge 1$.

We show the following properties of the pre-paths $\ppath_\layer(\node)$:
\begin{lemma}
\label{lem:pre-paths-stages-properties}
 For all $\layer \ge 1$ and nodes $\node \in \layerF(i)$ with $0 \le i \le \layer$ we have:
\begin{enumerate}[label=\arabic*),ref=\arabic*]
  \item For every $\node' \in \layerF(\layer)$ that is a descendant of $\node$ and every transition   $\transition \in \transitions(\vassF(\node'))$ we have $\instances(\ppath_\layer(\node),\transition) \ge N^{\layer-i}$ \label{enum:prop-zero}.
  \item $\valueSum(\ppath_\layer(\node)) = N^{\layer-i} \sum_{\node' \in \layerF(\layer), \node' \text{ is descendent of } \node} \valueSum(\cycle(\node'))$ \label{enum:prop-one}.
  \item $\valueSum(\ppath_\layer(\node))(\var) \ge 0$ for every $\var \in \vars$ with $\vExp(\var) \le \layer-i$.
      \label{enum:prop-new}
  \item $\ppath_\layer(\node)$ is executable from some valuation $\val$ with
      \begin{enumerate}[label=\alph*), ref=\theenumi{}\alph*]
           \item $\val(\var) \in O(N^{\vExp(\var)})$ for $\var \in \vars$ with $\vExp(\var) \le \layer-i$, and \label{enum:prop-two-a}
           \item $\val(\var) \in O(N^{\layer-i})$ for $\var \in \vars$ with  $\vExp(\var) \ge \layer-i+1$. \label{enum:prop-two-b}
      \end{enumerate} \label{enum:prop-two}
\end{enumerate}
\end{lemma}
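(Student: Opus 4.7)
The plan is to prove the lemma by induction on $\layer - i$, starting from the base case $i = \layer$ and proceeding upward to $i = 0$. For the base case, $\ppath_\layer(\node) = \cycle(\node)$, and the first property reduces to showing $\instances(\cycle(\node),\transition) \ge 1$ for every $\transition \in \transitions(\vassF(\node))$; this follows by combining the definition of $\cycle(\node)$ with Lemma~\ref{lem:optimization-duality} (every $\transition \in \unboundedTransitions \setminus \decreasingTransitions$ in loop iteration $\layer$ satisfies $\counters(\transition) \ge 1$). The second property is immediate from the definition, the third is vacuous since $\vExp(\var) \ge 1$ for every variable (the algorithm only ever assigns $\vExp(\var) := \layer$ with $\layer \ge 1$), and the fourth holds because $\cycle(\node)$ has a length depending only on $\vass$, so it is executable from a suitable constant valuation, which is $O(1) = O(N^0)$.

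For the inductive step with $0 \le i < \layer$, write $\ppath_\layer(\node) = \ppath_\layer(\node_1)^N \cdots \ppath_\layer(\node_\NoC)^N$. Because the sub-VASSs $\vassF(\node_j)$ at the same layer are pairwise disjoint, every $\node' \in \layerF(\layer)$ that is a descendant of $\node$ is a descendant of exactly one child $\node_j$. The first and second properties then follow from the induction hypothesis together with the multiplicative $N$-fold repetition factor. The third property follows from the second combined with Lemma~\ref{lem:constraint-system-property-one}.

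The main obstacle is the fourth property, for which the plan is as follows. Let $U = \{\var \in \vars : \vExp(\var) \le \layer - (i+1)\}$ and $W = \{\var \in \vars : \vExp(\var) \ge \layer - i\}$; these partition $\vars$. By the induction hypothesis (fourth property at the child level), each $\ppath_\layer(\node_j)$ is executable from some valuation $\val_j$ with the claimed asymptotic bounds, and by the third property at the child level we have $\valueSum(\ppath_\layer(\node_j))(\var) \ge 0$ for all $\var \in U$. Applying Proposition~\ref{prop:repetition-lemma} with this $U, W$ and $d = N$ yields an initial valuation $\val_j^*$ from which $\ppath_\layer(\node_j)^N$ is executable; this $\val_j^*$ coincides with $\val_j$ on $U$ and equals $N \val_j$ on $W$, giving bounds $O(N^{\vExp(\var)})$ on $U$ and $N \cdot O(N^{\layer-(i+1)}) = O(N^{\layer-i})$ on $W$. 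Viewing concatenation as a trivial shuffle, Proposition~\ref{prop:merge-proposition} applied $\NoC - 1$ times shows that $\ppath_\layer(\node)$ is executable from $\val := \sum_{j=1}^\NoC \val_j^*$. Since $\NoC$ is bounded by a constant depending only on $\vass$, the sum preserves the asymptotic bounds; the only subtlety is the boundary case $\vExp(\var) = \layer - i$, which lies in $W$ from the induction's perspective but in the first clause of the lemma's statement, and this works out because $O(N^{\layer-i})$ equals $O(N^{\vExp(\var)})$ in exactly this case.
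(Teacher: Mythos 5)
Your proposal is correct and follows essentially the same approach as the paper's proof: the same downward induction on $i$ from $\layer$ to $0$, the same use of Lemma~\ref{lem:constraint-system-property-one} for Property 3), and the same pairing of Proposition~\ref{prop:repetition-lemma} with Proposition~\ref{prop:merge-proposition} to establish Property 4). Your version is somewhat more explicit than the paper's in naming the partition $(U,W)$ fed into Proposition~\ref{prop:repetition-lemma}, in observing that concatenation is a trivial shuffle, and in handling the boundary case $\vExp(\var) = \layer - i$, but these are expository refinements rather than a different argument.
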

\begin{proof}
We prove the properties for nodes $\node \in \layerF(i)$ by induction, starting from $i = \layer$ downto $i = 0$.

Assume $i = \layer$:
We fix some $\node \in \layerF(\layer)$.
We have that $\ppath_\layer(\node) = \cycle(\node)$ contains at least one instance of each transition $\transition \in \transitions(\vassF(\node))$.
Hence, Property~\ref{enum:prop-zero}) holds.
We have $\valueSum(\ppath_\layer(\node)) = \valueSum(\cycle(\node))$ because of $\ppath_\layer(\node) = \cycle(\node)$.
Hence, Property~\ref{enum:prop-one}) holds.
We note that $\vExp(\var) > \layer-i = 0$ for all variables $\var \in \vars$ (*).
Hence, Property~\ref{enum:prop-new}) trivially holds.
Because of (*), we only have to establish~\ref{enum:prop-two-b}).
We observe that there is a $c > 0$ such that  $\ppath_\layer(\node) = \cycle(\node)$ can be executed from valuation $c \cdot \oneVec$.
Hence, Property~\ref{enum:prop-two}) holds.

Assume $i < \layer$:
We fix some $\node \in \layerF(i)$.
Let $\cycle(\node) = \paath_0\paath_1\paath_2\cdots \paath_\NoC$ be the fixed decomposition and $\node_1,\ldots,\node_\NoC$ the corresponding ordering of the children of $\node$ such that each $\paath_j$ has the same start state as $\cycle(\node_j)$.

We show Property~\ref{enum:prop-zero}):
By induction assumption we have for every $1 \le j \le \NoC$ and every node $\node' \in \layerF(\layer)$ that  is a descendant of $\node_j$ that $\ppath_\layer(\node_j)$ contains at least $N^{\layer-i-1}$ instances of every $\transition \in \transitions(\vassF(\node'))$.
The claim then follows because $\ppath_\layer(\node)$ contains $N$ copies of each $\ppath_\layer(\node_j)$.

We show Property~\ref{enum:prop-one}):
From the induction assumption we get
$\valueSum(\ppath_\layer(\node_j)) = N^{\layer-i-1} \sum_{\node' \in \layerF(\layer), \node' \text{ is descendent of } \node_j} \cycle(\node')$
for all $1 \le j \le \NoC$.
Hence, we have
$\valueSum(\ppath_\layer(\node)) =
N \sum_{1 \le j \le \NoC}  \valueSum(\ppath_\layer(\node_j)) =$\\
$N \sum_{1 \le j \le \NoC} N^{\layer-i-1}  \sum_{\node' \in \layerF(\layer), \node' \text{ is descendent of } \node_j} \valueSum(\cycle(\node')) =$\\
$N^{\layer-i} \sum_{1 \le j \le \NoC} \sum_{\node' \in \layerF(\layer), \node' \text{ is descendent of } \node_j} \valueSum(\cycle(\node')) =$\\
$N^{\layer-i} \sum_{\node' \in \layerF(\layer), \node' \text{ is descendent of } \node} \valueSum(\cycle(\node'))$,
where the last equality holds because every $\node'$ that is a descendent of $\node$ is also a descendent of some $\node_j$.

We show Property~\ref{enum:prop-new}):
The property is a direct consequence of Property~\ref{enum:prop-one}) and Lemma~\ref{lem:constraint-system-property-one}.

We show Property~\ref{enum:prop-two}):
By induction assumption we have that each $\ppath_\layer(\node_j)$ can be executed from some valuation $\val_j$ with $\val_j(\var) \in O(N^{\vExp(\var)})$, for $\var \in \vars$ with $\vExp(\var) \le \layer-i-1$, and $\val_j(\var) \in O(N^{\layer-i-1})$, otherwise.
By Property~\ref{enum:prop-new}) we have for each $\ppath_\layer(\node_j)$ that $\valueSum(\ppath_\layer(\node_j))(\var) \ge 0$ for every $\var \in \vars$ with $\vExp(\var) \le \layer-i-1$.
With Proposition~\ref{prop:repetition-lemma} we get that each $\ppath_\layer(\node_j)^N$ can be executed from some valuation $\val_j$ with $\val_j(\var) \in O(N^{\vExp(\var)})$, for $\var \in \vars$ with $\vExp(\var) \le \layer-i-1$, and $\val_j(\var) \in O(N^{\layer-i})$, otherwise.
With Proposition~\ref{prop:merge-proposition} we get that $\ppath_\layer(\node) = \ppath_\layer(\node_1)^N \ppath_\layer(\node_2)^N \cdots \ppath_\layer(\node_\NoC)^N$
can be executed from some valuation $\val$ with $\val(\var) \in O(N^{\vExp(\var)})$, for $\var \in \vars$ with $\vExp(\var)\le \layer-i-1$, and $\val(\var) \in O(N^{\layer-i})$, otherwise.
We note that the last statement implies  Property~\ref{enum:prop-two}).
\qed
\end{proof}

We show the following properties of the pre-paths $\ppathLayer_\layer$:
\begin{lemma}
\label{lem:pre-paths-complete-properties}
For all $\layer \ge 1$ we have:
\begin{enumerate}[label=\arabic*),ref=\arabic*]
  \item $\instances(\ppathLayer_\layer,\transition) \ge N^{\layer+1}$ for all transitions $\transition \in \bigcup_{\node \in \layerF(\layer)} \transitions(\vassF(\node))$. \label{enum:prop-complete-one}
  \item $\valueSum(\ppathLayer_\layer) = N^{\layer+1} \sum_{\node \in \layerF(\layer)} \valueSum(\cycle(\node'))$. \label{enum:prop-complete-two}
  \item $\valueSum(\ppathLayer_\layer)(\var) \ge 0$ for every $\var \in \vars$ with $\vExp(\var) \le \layer$.
      \label{enum:prop-complete-three}
  \item $\valueSum(\ppathLayer_\layer)(\var) \ge N^{\layer+1}$ for every $\var \in \vars$ with $\vExp(\var) \ge \layer+1$. \label{enum:prop-complete-four}
  \item $\ppathLayer_\layer$ is executable from some valuation $\val$ with
      \begin{enumerate}[label=\alph*), ref=\theenumi{}\alph*]
           \item $\val(\var) \in O(N^{\vExp(\var)})$ for $\var \in \vars$ with $\vExp(\var) \le \layer$, and \label{enum:prop-complete-five-a}
           \item $\val(\var) \in O(N^\layer)$ for $\var \in \vars$ with  $\vExp(\var) \ge \layer+1$. \label{enum:prop-complete-five-b}
      \end{enumerate} \label{enum:prop-complete-five}
\end{enumerate}
\end{lemma}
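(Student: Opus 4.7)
\medskip

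\noindent\textbf{Proof proposal for Lemma~\ref{lem:pre-paths-complete-properties}.}
The plan is to reduce everything to an application of Lemma~\ref{lem:pre-paths-stages-properties} at $\node=\Root\in\layerF(0)$ together with the two constraint-system consequences (Lemmata~\ref{lem:constraint-system-property-one} and~\ref{lem:constraint-system-property-two}) and the repetition principle (Proposition~\ref{prop:repetition-lemma}), exploiting the identity $\ppathLayer_\layer=\ppath_\layer(\Root)^N$. Concretely, first I would record the four statements obtained from Lemma~\ref{lem:pre-paths-stages-properties} specialized to $i=0$, $\node=\Root$: (i) $\instances(\ppath_\layer(\Root),\transition)\ge N^{\layer}$ for every $\transition\in\transitions(\vassF(\node'))$ with $\node'\in\layerF(\layer)$, (ii) $\valueSum(\ppath_\layer(\Root))=N^{\layer}\sum_{\node'\in\layerF(\layer)}\valueSum(\cycle(\node'))$, (iii) $\valueSum(\ppath_\layer(\Root))(\var)\ge 0$ whenever $\vExp(\var)\le \layer$, and (iv) $\ppath_\layer(\Root)$ is executable from some valuation $\val$ that satisfies the size bounds required by~\ref{enum:prop-complete-five-a} and~\ref{enum:prop-complete-five-b}.

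Properties~\ref{enum:prop-complete-one}--\ref{enum:prop-complete-four} then follow by routine arithmetic on $\ppath_\layer(\Root)^N$: multiplying the count in~(i) by $N$ yields~\ref{enum:prop-complete-one}; multiplying~(ii) by $N$ yields~\ref{enum:prop-complete-two}; combining~\ref{enum:prop-complete-two} with Lemma~\ref{lem:constraint-system-property-one} applied at $i=0$, $\node=\Root$ (so $\sum_{\node'\in\layerF(\layer)}\valueSum(\cycle(\node'))(\var)\ge 0$ for $\vExp(\var)\le\layer$) yields~\ref{enum:prop-complete-three}; and combining~\ref{enum:prop-complete-two} with Lemma~\ref{lem:constraint-system-property-two} (so $\sum_{\node'\in\layerF(\layer)}\valueSum(\cycle(\node'))(\var)\ge 1$ for $\vExp(\var)\ge\layer+1$) yields~\ref{enum:prop-complete-four}.

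The only step that requires care is~\ref{enum:prop-complete-five}, and this is where I expect the main obstacle to lie: a naive application of Proposition~\ref{prop:repetition-lemma} with $W=\{\var\mid\vExp(\var)\ge\layer+1\}$ would lose a factor of $N$ and force $\val_N(\var)\in O(N^{\layer+1})$, which is too weak. The key observation is that one can instead apply Proposition~\ref{prop:repetition-lemma} with $U=\vars$ and $W=\emptyset$, because $\valueSum(\ppath_\layer(\Root))(\var)\ge 0$ holds for \emph{every} variable $\var$, not only those with $\vExp(\var)\le\layer$. Indeed, for $\vExp(\var)\le\layer$ this is~(iii) above, while for $\vExp(\var)\ge\layer+1$ identity~(ii) combined with Lemma~\ref{lem:constraint-system-property-two} gives $\valueSum(\ppath_\layer(\Root))(\var)\ge N^{\layer}\ge 0$. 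With $W=\emptyset$, Proposition~\ref{prop:repetition-lemma} produces a starting valuation $\val_N=\val$ for $\ppath_\layer(\Root)^N=\ppathLayer_\layer$ that inherits the bounds of~(iv) verbatim, establishing~\ref{enum:prop-complete-five-a} and~\ref{enum:prop-complete-five-b}. This completes the proof.
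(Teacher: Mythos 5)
Your proof is correct and follows essentially the same route as the paper: you specialize Lemma~\ref{lem:pre-paths-stages-properties} to $i=0$, $\node=\Root$, multiply by $N$ for Properties~\ref{enum:prop-complete-one})--\ref{enum:prop-complete-two}), invoke Lemma~\ref{lem:constraint-system-property-two} for Property~\ref{enum:prop-complete-four}), and crucially observe that $\valueSum(\ppath_\layer(\Root))(\var)\ge 0$ for \emph{every} variable so that Proposition~\ref{prop:repetition-lemma} can be applied with $W=\emptyset$, avoiding the loss of a factor of $N$ in Property~\ref{enum:prop-complete-five}). This is exactly the pivot the paper makes when it records its observation~(+).
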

\begin{proof}
Let $\layer \ge 1$ be some layer.
We consider the pre-path $\ppath_\layer(\Root)$.
Using the fact that every node in layer $\layer$ is a descendant of the root node $\Root$,
we get from Property~\ref{enum:prop-complete-two}) in Lemma~\ref{lem:pre-paths-stages-properties} that
$\valueSum(\ppath_\layer(\Root)) = N^\layer \sum_{\node \in \layerF(\layer)} \valueSum(\cycle(\node))$ (*).
From Lemma~\ref{lem:constraint-system-property-two} we now get that $\valueSum(\ppath_\layer(\Root))(\var) \ge N^\layer$ for every $\var \in \vars$ with $\vExp(\var) \ge \layer+1$ (\#).
From Property~\ref{enum:prop-complete-three}) in Lemma~\ref{lem:pre-paths-stages-properties} and (\#) we get in particular that $\valueSum(\ppath_\layer(\Root))(\var) \ge 0$ for all $\var \in \vars$ (+).

We recall that we defined $\ppathLayer_\layer = \ppath_\layer(\Root)^N$, i.e., that $\ppathLayer_\layer$ consists of $N$ copies of $\ppath_\layer(\Root)$.
Then, Properties~\ref{enum:prop-complete-one}) and~\ref{enum:prop-complete-three}) directly follow from the corresponding properties in Lemma~\ref{lem:pre-paths-stages-properties}.
Properties~\ref{enum:prop-complete-two}) and~\ref{enum:prop-complete-four}) follow from (*) and (\#).
Property~\ref{enum:prop-complete-five})
follows from Property~\ref{enum:prop-two} in Lemma~\ref{lem:pre-paths-stages-properties} using (+) and
Proposition~\ref{prop:repetition-lemma}.
\qed
\end{proof}

\subsection{Step II}

By Property~\ref{enum:prop-complete-five}) of Lemma~\ref{lem:pre-paths-complete-properties} we can choose a sufficiently large $k > 0$ such that every pre-path $\ppathLayer_\layer$, for $\layer \ge 1$, is executable from valuation $\val_\layer$ with
\begin{enumerate}[label=\alph*),ref=\alph*]
   \item $\val_\layer(\var) = k N^{\vExp(\var)}$ for $\var \in \vars$ with $\vExp(\var) \le \layer$, and
   \item $\val_\layer(\var) = k N^\layer$ for $\var \in \vars$ with $\vExp(\var) \ge \layer+1$.
\end{enumerate}
For every $\layer \ge 0$, we now define the pre-path
$\ppathComplete_\layer = \ppathLayer_0^k \ppathLayer_1^k \ppathLayer_2^k \cdots \ppathLayer_\layer^k$ (the concatenation of $k$ copies of the pre-paths $\ppathLayer_i$ for all $0 \le i \le \layer$),
where we set $\ppathLayer_0 = \cycle(\Root)^N$.

The following lemma shows that the pre-path $\ppathComplete_{\layer_{\max}}$, where $\layer_{\max}$ is the maximal layer of $\ctree$, would be sufficient to conclude the lower bound proof except that we will need to extend the proof from pre-paths to proper paths.

\begin{lemma}
\label{lem:ppathComplete-meets-lower-bounds}
For every $\layer \ge 0$, the pre-path $\ppathComplete_\layer$ can be executed from a valuation $\val$ and reaches a valuation $\val'$ with
\begin{enumerate}[label=\arabic*), ref=\arabic*]
  \item $\norm{\val} \in O(N)$,
  \item $\val'(\var) \ge k N^{\vExp(\var)}$ for all $\var \in \vars$ with $\vExp(\var) \le \layer$, and
  \item $\val'(\var) \ge k N^{\layer+1}$ for all $\var \in \vars$ with $\vExp(\var) \ge \layer+1$.
\end{enumerate}
\end{lemma}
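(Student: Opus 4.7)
The plan is to prove the lemma by induction on $\layer \ge 0$.

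\emph{Base case} ($\layer = 0$): I would handle $\ppathComplete_0 = \ppathLayer_0^k = \cycle(\Root)^{Nk}$ directly. Since $\cycle(\Root)$ is a fixed cycle of $\vass$, it is executable from some constant valuation, and hence there is a constant $c > 0$ depending only on $\cycle(\Root)$ and $k$ such that $\cycle(\Root)^{Nk}$ is executable from $\val := cN \cdot \oneVec$ and reaches $\val' = \val + Nk \cdot \valueSum(\cycle(\Root))$ with $\val'(\var) \ge kN$ for every $\var$. (One picks $c$ large enough to absorb both the worst-case prefix dip during a single cycle execution and the possibly negative term $Nk \cdot \valueSum(\cycle(\Root))(\var)$.) Because Algorithm~\ref{alg:algorithm} did not return exponential complexity we have $\vExp(\var) \ge 1$ for every $\var \in \vars$, so Condition~2 is vacuous and Condition~3 is exactly the bound just established; Condition~1 holds because $\norm{\val} = cN \in O(N)$.

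\emph{Inductive step}: I would decompose $\ppathComplete_\layer = \ppathComplete_{\layer-1} \cdot \ppathLayer_\layer^k$ and use the inductive hypothesis to execute $\ppathComplete_{\layer-1}$ from some $\val$ with $\norm{\val} \in O(N)$, reaching $\val'$ with $\val'(\var) \ge kN^{\vExp(\var)}$ for $\vExp(\var) \le \layer - 1$ and $\val'(\var) \ge kN^\layer$ for $\vExp(\var) \ge \layer$. By the choice of $k$ in Step~II, $\ppathLayer_\layer$ is executable from the valuation $\val_\layer$ defined there, and a case split on $\vExp(\var)$ shows $\val' \ge \val_\layer$ componentwise, so $\ppathLayer_\layer$ is also executable from $\val'$ (by shifting the execution from $\val_\layer$ by the non-negative offset $\val' - \val_\layer$). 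Properties~\ref{enum:prop-complete-three}) and~\ref{enum:prop-complete-four}) of Lemma~\ref{lem:pre-paths-complete-properties} together yield $\valueSum(\ppathLayer_\layer) \ge 0$ componentwise on $\vars$, so Proposition~\ref{prop:repetition-lemma} applied with $U = \vars$ (and $W = \emptyset$) gives that $\ppathLayer_\layer^k$ is executable from the same $\val'$, reaching $\val'' := \val' + k \cdot \valueSum(\ppathLayer_\layer)$. The required bounds for $\val''$ then follow by inspection: for $\vExp(\var) \le \layer$ the added vector is non-negative and $\val'(\var)$ already meets the bound by the inductive hypothesis, while for $\vExp(\var) \ge \layer+1$ Property~\ref{enum:prop-complete-four}) gives $\val''(\var) \ge 0 + k \cdot N^{\layer+1}$.

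The main obstacle is the correct application of Proposition~\ref{prop:repetition-lemma}. A naive partition $U = \{\var : \vExp(\var) \le \layer\}$ and $W = \{\var : \vExp(\var) \ge \layer+1\}$ would demand starting value $k \cdot kN^\layer = k^2 N^\layer$ on $W$, a factor $k$ more than the inductive hypothesis delivers on $\val'$. The way out is to exploit the stronger fact that $\valueSum(\ppathLayer_\layer) \ge 0$ holds \emph{on all of} $\vars$, a consequence of the strictly positive lower bound $N^{\layer+1}$ on the variables in $W$, which permits taking $U = \vars$ and thereby closes the induction with the very same constant $k$ throughout.
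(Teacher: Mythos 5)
Your proof is correct and follows essentially the same structure as the paper's: induction on $\layer$, base case handled by executing $\cycle(\Root)^{kN}$ from $cN\cdot\oneVec$ for a suitably large constant $c$, and inductive step that prepends $\ppathLayer_\layer^k$ to the hypothesis-provided valuation via Proposition~\ref{prop:repetition-lemma}. The ``main obstacle'' you flag --- that one must take $U = \vars$, $W = \emptyset$ in Proposition~\ref{prop:repetition-lemma}, relying on $\valueSum(\ppathLayer_\layer) \ge 0$ on all of $\vars$ rather than only on the low-exponent variables, so that the starting valuation does not get multiplied by $k$ --- is exactly the implicit choice the paper makes by invoking Proposition~\ref{prop:repetition-lemma} ``and (*)''; making it explicit is a welcome clarification but not a different argument.
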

\begin{proof}
The proof proceeds by induction on $\layer$.
We consider $\layer = 0$.
We have $\ppathComplete_0 = \ppathLayer_0^k = \cycle(\Root)^{kN}$.
Clearly, there is a $c > 0$ such that $\cycle(\Root)^{kN}$ can be executed from valuation $cN \cdot \oneVec$ such that $cN \cdot \oneVec + kN \cdot \valueSum(\cycle(\Root)) \ge kN \cdot \oneVec$.
This establishes the base case.

We consider some $\layer \ge 1$ and assume the induction assumption for $\ppathComplete_{\layer-1}$.
We observe that $\ppathComplete_\layer = \ppathComplete_{\layer-1} \ppathLayer_\layer^k$.
By Property~\ref{enum:prop-complete-two}) of Lemma~\ref{lem:pre-paths-complete-properties} we have $\valueSum(\ppathLayer_\layer) = N^{\layer+1} \sum_{\node \in \layerF(\layer)} \valueSum(\cycle(\node))$.
By Properties~\ref{enum:prop-complete-three}) and~\ref{enum:prop-complete-four}) of Lemma~\ref{lem:pre-paths-complete-properties} we have
$\ppathLayer_\layer(\var) \ge 0$, for every $\var \in \vars$ with $\vExp(\var) \le \layer$, and $\valueSum(\ppathLayer_\layer)(\var) \ge N^{\layer+1}$ for every $\var \in \vars$ with $\vExp(\var) \ge \layer+1$ (*).
Hence, we have
$\valueSum(\ppathLayer_\layer^k)(\var) \ge 0$, for variables $\var \in \vars$ with $\vExp(\var) \le \layer$, and $\valueSum(\ppathLayer_\layer^k)(\var) \ge k N^{\layer + 1}$, for $\var \in \vars$ with $\vExp(\var) \ge \layer+1$.
By the definition of $k$ we have that $\ppathLayer_\layer$ can be executed from valuation $\val(\var) = k N^{\vExp(\var)}$, for $\var \in \vars$ with $\vExp(\var) \le \layer$, and $\val(\var) = k N^\layer$, for $\var \in \vars$ with $\vExp(\var) \ge \layer+1$.
By Proposition~\ref{prop:repetition-lemma} and (*), $\ppathLayer_\layer^k$ can be executed from valuation $\val(\var) = k N^{\vExp(\var)}$, for $\var \in \vars$ with $\vExp(\var) \le \layer$, and $\val(\var) = k N^\layer$, for $\var \in \vars$ with $\vExp(\var) \ge \layer+1$.
Because of $\ppathComplete_\layer = \ppathComplete_{\layer-1} \ppathLayer_\layer^k$, the claim now follows from the induction assumption. \qed
\end{proof}

\subsection{Step III}

In order to extend the proof from pre-paths to paths we make use of the concept of shuffling.
For all $\layer \ge 0$, we will define paths $\witComplete_\layer$ that can be obtained by shuffling the pre-paths $\ppathComplete_0, \ppathComplete_1, \ldots, \ppathComplete_\layer$.

We will first define paths $\witLayer_\layer$ analogously to the pre-paths $\ppathLayer_\layer$.
We will then set $\witComplete_\layer = \witLayer_0^k\witLayer_1^k\cdots\witLayer_\layer^k$ for all $\layer \ge 0$, where $k$ is the constant from Step II.

We will define the paths $\witLayer_\layer$ along the structure of $\ctree$.
In order to do so, we will define paths $\wit_\layer(\node)$ for all layers $\layer \ge 0$ and nodes $\node \in \layerF(i)$ with $0 \le i \le \layer$.
We will then set $\witLayer_\layer = \wit_\layer(\Root)^N$ for all $\layer \ge 0$.

We define the paths $\wit_\layer(\node)$ inductively, starting from $i = \layer$ downto $i = 0$.
For $\node \in \layerF(\layer)$ we set $\wit_\layer(\node) = \cycle(\node)$.
We consider some $\node \in \layerF(i)$ with $0 \le i < \layer$.
Let $\cycle(\node) = \paath_0\paath_1\paath_2\cdots \paath_\NoC$ be the fixed decomposition and $\node_1,\ldots,\node_\NoC$ the corresponding ordering of the children of $\node$ such that each $\paath_j$ has the same start state as $\cycle(\node_j)$.
We set $\wit_\layer(\node) = \paath_0\wit_\layer(\node_1)^N\paath_1 \wit_\layer(\node_2)^N \paath_2 \cdots \wit_\layer(\node_\NoC)^N\paath_\NoC$.
We finally set $\witLayer_\layer = \wit_\layer(\Root)^N$ for all $\layer \ge 0$.

\begin{lemma}
\label{lem:merge-per-layer}
The pre-path $\ppath_\layer(\node)$ can be shuffled with the path $\wit_{\layer-1}(\node)$ to obtain the path $\wit_\layer(\node)$ for every $\node \in \layerF(i)$ with $0 \le i < \layer$.
Further, the pre-path $\ppathLayer_\layer$ can be shuffled with the path $\witLayer_{\layer-1}$ to obtain the path $\witLayer_\layer$ for every $\layer \ge 1$.
\end{lemma}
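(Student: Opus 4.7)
The plan is to prove the first assertion by a downward induction on $i$, starting at $i=\layer-1$ and ending at $i=0$. The second assertion will then follow by a simple $N$-fold repetition of the shuffle at the root.

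For the base case $i = \layer-1$, let $\node \in \layerF(\layer-1)$ with children $\node_1,\ldots,\node_\NoC \in \layerF(\layer)$ and fixed decomposition $\cycle(\node) = \paath_0\paath_1\cdots\paath_\NoC$. By definition $\ppath_\layer(\node_j) = \wit_\layer(\node_j) = \cycle(\node_j)$, so
\[
\ppath_\layer(\node) = \cycle(\node_1)^N\cdots \cycle(\node_\NoC)^N, \quad \wit_{\layer-1}(\node) = \cycle(\node) = \paath_0\paath_1\cdots\paath_\NoC,
\]
and $\wit_\layer(\node) = \paath_0\cycle(\node_1)^N\paath_1\cdots\cycle(\node_\NoC)^N\paath_\NoC$. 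This last expression is visibly the interleaving that inserts the blocks $\cycle(\node_j)^N$ of $\ppath_\layer(\node)$ into the gaps of $\wit_{\layer-1}(\node)$, so the base case holds.

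For the inductive step, fix $\node \in \layerF(i)$ with $0 \le i < \layer-1$ and let $\cycle(\node) = \paath_0\paath_1\cdots\paath_\NoC$ with associated children $\node_1,\ldots,\node_\NoC \in \layerF(i+1)$. Unfolding the definitions gives
\[
\ppath_\layer(\node) = \ppath_\layer(\node_1)^N\cdots\ppath_\layer(\node_\NoC)^N,
\]
\[
\wit_{\layer-1}(\node) = \paath_0\wit_{\layer-1}(\node_1)^N\paath_1\cdots\wit_{\layer-1}(\node_\NoC)^N\paath_\NoC,
\]
\[
\wit_\layer(\node) = \paath_0\wit_\layer(\node_1)^N\paath_1\cdots\wit_\layer(\node_\NoC)^N\paath_\NoC.
\]
By the induction hypothesis, for each child $\node_j$ the pre-path $\ppath_\layer(\node_j)$ shuffles with $\wit_{\layer-1}(\node_j)$ to yield $\wit_\layer(\node_j)$. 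Since shuffling is closed under concatenation of identical patterns, the same shuffle repeated $N$ times shows that $\ppath_\layer(\node_j)^N$ shuffles with $\wit_{\layer-1}(\node_j)^N$ to yield $\wit_\layer(\node_j)^N$. Gluing these $\NoC$ shuffles together while leaving the connector paths $\paath_0,\paath_1,\ldots,\paath_\NoC$ of $\wit_{\layer-1}(\node)$ in place produces exactly the interleaving $\wit_\layer(\node)$ out of $\ppath_\layer(\node)$ and $\wit_{\layer-1}(\node)$, completing the induction.

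The second assertion follows immediately: applying the first assertion at $\node = \Root$ (with $i=0$) gives that $\ppath_\layer(\Root)$ shuffles with $\wit_{\layer-1}(\Root)$ to produce $\wit_\layer(\Root)$; repeating this shuffle pattern $N$ times shows that $\ppathLayer_\layer = \ppath_\layer(\Root)^N$ shuffles with $\witLayer_{\layer-1} = \wit_{\layer-1}(\Root)^N$ to produce $\witLayer_\layer = \wit_\layer(\Root)^N$. I do not foresee a genuine obstacle here; the only slightly delicate point is bookkeeping in the inductive step to verify that concatenating the $\NoC$ child-shuffles (with the fixed $\paath_j$'s interposed) is indeed an order-preserving interleaving of $\ppath_\layer(\node)$ with $\wit_{\layer-1}(\node)$ in the formal sense of the definition, which amounts to checking that the children-blocks of $\ppath_\layer(\node)$ appear in the same left-to-right order as the gaps of $\wit_{\layer-1}(\node)$ in which they are placed.
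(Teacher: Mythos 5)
Your proposal is correct and takes essentially the same approach as the paper: a downward induction on $i$ from $\layer-1$ to $0$, with the identical base case and the identical gluing of child-shuffles interposed with the fixed connector paths $\paath_j$, followed by taking the $N$-th power of the root-level shuffle to derive the second assertion.
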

\begin{proof}
For the first claim, we consider some $\node \in \layerF(i)$ with $0 \le i < \layer$.
Let $\cycle(\node) = \paath_0\paath_1\paath_2\cdots \paath_\NoC$ be the fixed decomposition and $\node_1,\ldots,\node_\NoC$ the corresponding ordering of the children of $\node$ such that each $\paath_j$ has the same start state as $\cycle(\node_j)$.
We proceed by induction, starting from $i = \layer -1$ downto $i = 0$.

Assume $i = \layer - 1$:
By definition, we have
$\ppath_\layer(\node) = \cycle(\node_1)^N \cycle(\node_2)^N \cdots \cycle(\node_\NoC)^N$, $\wit_{\layer-1}(\node) = \cycle(\node) = \paath_0\paath_1\paath_2\cdots \paath_\NoC$ and
$\wit_\layer(\node) = \paath_0\cycle(\node_1)^N\paath_1 \cycle(\node_2)^N \paath_2 \cdots \cycle(\node_\NoC)^N \paath_\NoC$.
Clearly, we can shuffle $\cycle(\node_1)^N \cycle(\node_2)^N \cdots \cycle(\node_\NoC)^N$ with $\paath_0\paath_1\paath_2\cdots \paath_\NoC$ in order to obtain $\paath_0\cycle(\node_1)^N\paath_1 \cycle(\node_2)^N \paath_2 \cdots \cycle(\node_\NoC)^N\paath_\NoC$.
Hence, the claim holds.

Assume $i < \layer - 1$:
By definition, we have
$\ppath_\layer(\node) = \ppath_\layer(\node_1)^N \ppath_\layer(\node_2)^N \cdots \ppath_\layer(\node_\NoC)^N$,
$\wit_{\layer-1}(\node) = \paath_0\wit_{\layer-1}(\node_1)^N\paath_1 \wit_{\layer-1}(\node_2)^N \paath_2 \cdots \wit_{\layer-1}(\node_\NoC)^N \paath_\NoC$ and
$\wit_\layer(\node) =$\\
$\paath_0\wit_\layer(\node_1)^N\paath_1 \wit_\layer(\node_2)^N \paath_2 \cdots \wit_\layer(\node_\NoC)^N \paath_\NoC$.
By induction assumption, we have that each $\ppath_\layer(\node_j)$ can be shuffled with $\wit_{\layer-1}(\node_j)$ to obtain $\wit_\layer(\node_j)$.
Clearly, then also $\ppath_\layer(\node_j)^N$ can be shuffled with $\wit_{\layer-1}(\node_j)^N$ to obtain $\wit_\layer(\node_j)^N$.
We then get that $\ppath_\layer(\node)$ can be shuffled with $\wit_{\layer-1}(\node)$ to obtain the path $\wit_\layer(\node)$.

For the second claim we consider some $\layer \ge 1$.
By the above, we have that $\ppath_\layer(\Root)$ can be shuffled with $\wit_{\layer-1}(\Root)$ to obtain the path $\wit_\layer(\Root)$.
Clearly, then also $\ppathLayer_\layer = \ppath_\layer(\Root)^N$ can be shuffled with $\witLayer_{\layer-1}(\Root)^N$ to obtain $\witLayer_\layer = \wit_\layer(\Root)^N$.
\qed
\end{proof}

We now set $\witComplete_\layer = \witLayer_0^k\witLayer_1^k\cdots\witLayer_\layer^k$ for all $\layer \ge 0$, where $k$ is the constant from Step II.
The main property of the paths $\witComplete_\layer$ is stated in the next lemma:

\begin{lemma}
For every $\layer \ge 0$, $\witComplete_\layer$ can be obtained by shuffling the pre-paths $\ppathComplete_0,\ppathComplete_1,\ldots,\ppathComplete_\layer$.
\end{lemma}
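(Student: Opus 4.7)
The plan is to first show by induction on $\layer$ that $\witLayer_\layer$ is a shuffle of the pre-paths $\ppathLayer_0, \ppathLayer_1, \ldots, \ppathLayer_\layer$ (recall $\ppathLayer_0 = \cycle(\Root)^N$). The base case $\layer = 0$ holds because $\witLayer_0 = \wit_0(\Root)^N = \cycle(\Root)^N = \ppathLayer_0$. For the inductive step, Lemma~\ref{lem:merge-per-layer} gives that $\witLayer_\layer$ is a shuffle of $\ppathLayer_\layer$ with $\witLayer_{\layer-1}$, and by induction $\witLayer_{\layer-1}$ is a shuffle of $\ppathLayer_0, \ldots, \ppathLayer_{\layer-1}$, so composing the two shuffle structures shows $\witLayer_\layer$ is a shuffle of $\ppathLayer_0, \ldots, \ppathLayer_\layer$. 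Applying the same shuffle inside each of the $k$ copies of $\witLayer_j$ yields that $\witLayer_j^k$ is a shuffle of $\ppathLayer_0^k, \ppathLayer_1^k, \ldots, \ppathLayer_j^k$ for every $0 \le j \le \layer$.

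The main claim then follows by redistributing these sub-components across the $\ppathComplete_h$'s. Concretely, for $0 \le i \le j \le \layer$, I attribute the copy of $\ppathLayer_i^k$ sitting inside $\witLayer_j^k$ (via the shuffle structure just established) to $\ppathComplete_{\alpha(i,j)}$, where $\alpha(i,j) = \layer + i - j$. A straightforward count verifies that $\ppathComplete_h$ (for $0 \le h \le \layer$) receives exactly one copy of $\ppathLayer_i^k$ for each $0 \le i \le h$, matching its definition $\ppathComplete_h = \ppathLayer_0^k \ppathLayer_1^k \cdots \ppathLayer_h^k$, and that every transition position of $\witComplete_\layer$ is attributed exactly once.

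The heart of the argument is checking that this attribution is order-preserving. For a fixed $h$, the component $\ppathLayer_i^k$ of $\ppathComplete_h$ is attributed to $\witLayer_{\layer + i - h}^k$, and since $\layer + i - h$ is strictly increasing in $i$, these sub-blocks occur in $\witComplete_\layer = \witLayer_0^k \cdots \witLayer_\layer^k$ in precisely the order demanded by $\ppathLayer_0^k \ppathLayer_1^k \cdots \ppathLayer_h^k$; inside each block the underlying shuffle guarantees that the attributed sub-sequence is $\ppathLayer_i^k$ itself in order. The main obstacle I anticipate is resisting the natural but wrong attribution that would place all of $\ppathComplete_j$ inside the single block $\witLayer_j^k$: this fails because $\witLayer_j^k$ interleaves the $\ppathLayer_i^k$'s rather than concatenating them in the order demanded by $\ppathComplete_j$. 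The shift $\alpha(i,j) = \layer + i - j$ sidesteps this exactly by spreading each $\ppathComplete_h$ across the $h+1$ consecutive blocks $\witLayer_{\layer-h}^k, \ldots, \witLayer_\layer^k$, taking exactly one $\ppathLayer_i^k$-component from each.
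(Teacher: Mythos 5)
Your proof is correct, and it takes a genuinely different route from the paper's. The paper argues by a single outer induction on $\layer$: assuming $\witComplete_{\layer-1}$ is a shuffle of $\ppathComplete_0,\ldots,\ppathComplete_{\layer-1}$, it shows that $\ppathComplete_\layer = \ppathLayer_0^k\cdots\ppathLayer_\layer^k$ can itself be shuffled into $\witComplete_{\layer-1}=\witLayer_0^k\cdots\witLayer_{\layer-1}^k$ to yield $\witComplete_\layer$, by pairing $\ppathLayer_i^k$ with $\witLayer_{i-1}^k$ (using Lemma~\ref{lem:merge-per-layer}) and noting $\ppathLayer_0 = \witLayer_0$. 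You instead first establish, by a separate induction, the intermediate fact that each single block $\witLayer_j$ is a shuffle of $\ppathLayer_0,\ldots,\ppathLayer_j$ (and hence $\witLayer_j^k$ of $\ppathLayer_0^k,\ldots,\ppathLayer_j^k$), and then give a non-inductive, explicit attribution $\alpha(i,j)=\layer+i-j$ that routes the $\ppathLayer_i^k$-component of block $\witLayer_j^k$ to $\ppathComplete_{\alpha(i,j)}$, with the counting and monotonicity checks you carry out. If one unfolds the paper's induction, the resulting position map coincides with your $\alpha$, so the underlying combinatorics are the same; what you gain is that the shuffle structure of the final object $\witComplete_\layer$ is visible in closed form rather than hidden in nested recursive compositions, and your observation that the naive attribution (routing all of $\ppathComplete_j$ into the single block $\witLayer_j^k$) must fail is exactly the subtle point that makes the direct presentation worth having. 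The price is the extra intermediate lemma about $\witLayer_j$ and the slightly heavier bookkeeping; the paper's version trades that explicitness for a shorter inductive step. Both are valid and rest on the same single nontrivial ingredient, Lemma~\ref{lem:merge-per-layer}.
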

\begin{proof}
The proof is by induction on $\layer \ge 0$.
We first consider $\layer = 0$.
Then, we have $\ppathComplete_0 = \witComplete_0 = \cycle(\node)^{kN}$ and the claim trivially holds.

We assume $\layer \ge 1$.
By induction assumption
we can obtain $\witComplete_{\layer-1}$ by shuffling
$\ppathComplete_0,\ppathComplete_1,\ldots,\ppathComplete_{\layer-1}$. We will now argue that we can obtain
$\witComplete_\layer = \witLayer_0^k\witLayer_1^k\cdots\witLayer_\layer^k$
by shuffling $\ppathComplete_\layer$ with $\witComplete_{\layer-1} = \witLayer_0^k\witLayer_1^k\cdots\witLayer_{\layer-1}^k$
By definition we have $\ppathComplete_\layer = \ppathLayer_0^k \ppathLayer_1^k \cdots \ppathLayer_\layer^k$.
By Lemma~\ref{lem:merge-per-layer}, $\ppathLayer_i$ can be shuffled with $\witLayer_{i-1}$ to obtain $\witLayer_i$ for every $0 < i \le \layer$.
But then we can also shuffle $\ppathLayer_i^k$ with $\witLayer_{i-1}^k$ to obtain $\witLayer_i^k$ for every $0 < i \le \layer$.
Because $\ppathLayer_i^k$ and $\witLayer_{i-1}^k$ appear in the same order in $\ppathLayer_1^k \ppathLayer_2^k \cdots \ppathLayer_\layer^k$ and $\witLayer_0^k\witLayer_1^k\cdots\witLayer_{\layer-1}^k$, we can shuffle these two (pre-)paths and obtain $\witLayer_1^k\cdots\witLayer_\layer^k$.
The claim then follows because of $\witLayer_0 = \ppathLayer_0 = \cycle(\node)^N$. \qed
\end{proof}

We finally state the proof of Theorem~\ref{thm:lower-bound}:
\begin{proof}
Let $\layer_{\max}$ be the maximal layer of $\ctree$.
We will now argue that the path $\witComplete_{\layer_{\max}}$ gives rise to traces with the desired properties.
By Lemma~\ref{lem:ppathComplete-meets-lower-bounds}, each pre-path $\ppathComplete_i$, for $0 \le i \le \layer_{\max}$, can be executed from a valuation $\val$ with $\norm{\val} \in O(N)$, and $\ppathComplete_{\layer_{\max}}$ reaches a valuation $\val'$ with $\val'(\var) \ge N^{\vExp(\var)}$ for all $\var \in \vars$.
From Proposition~\ref{prop:merge-proposition} we get that the path $\witComplete_{\layer_{\max}}$,
which can be obtained by shuffling $\ppathComplete_0,\ppathComplete_1,\ldots,\ppathComplete_{\layer_{\max}}$,
can be executed from a valuation $\val$
with $\norm{\val} \in O(N)$ and reaches a valuation $\val'$ with $\val'(\var) \ge N^{\vExp(\var)}$ for all $\var \in \vars$.
We consider a transition $\transition$ with $\tExp(\transition) = i$ for some $1 \le i \le {\layer_{\max}}$.
There is a node $\node \in \layerF(i-1)$ such that $\transition \in \transitions(\vassF(\node))$.
By Property~\ref{enum:prop-zero}) of Lemma~\ref{lem:pre-paths-stages-properties} we have that $\ppathLayer_{i-1}$ contains at least $N^i$ instances of $\transition$.
Hence, $\ppathComplete_{i-1} = \ppathLayer_0^k \ppathLayer_1^k \ppathLayer_2^k \cdots \ppathLayer_{i-1}^k$ also contains at least $N^i$ instances of $\transition$.
Thus, $\witComplete_{\layer_{\max}}$,
which has been obtained from shuffling $\ppathComplete_0,\ppathComplete_1,\ldots,\ppathComplete_{\layer_{\max}}$,
also contains at least $N^i$ instances of $\transition$.
\qed
\end{proof}

\section{Proof of Proposition~\ref{prop:pot-estimation}}

\begin{proof}
  Let $L$ be a lower triangular matrix with $\pot(Z) = \sum_{\text{column } \rankCoeff \text{ of } L} \pot(\rankCoeff)$.
  Because $L$ is a lower triangular matrix, we have that for every variable $\var \in \vars$ there is at most one column vector $\rankCoeff$
  such that $\rankCoeff(\var) \neq 0$ and $\rankCoeff(\var') = 0$ for all variables $\var'$    such the row associated to variable $\var'$ is above the row associated to variable $\var$.
  Now the claim follows because we have $\pot(\rankCoeff) = \vExp(\var)$ for every column vector $\rankCoeff$ of $L$ and every variable $\var$ such that $\rankCoeff(\var) \neq 0$ and $\rankCoeff(\var') = 0$ for all variables $\var'$ such the row associated to variable $\var'$ is above the row associated to variable $\var$.
  \qed
\end{proof}

\section{Proof of Lemma~\ref{lem:entering-property}}

\begin{proof}
The proof proceeds by induction on the layer $\layer$ of $\node \in \layerF(\layer)$.
Clearly, the claim holds for the root $\Root \in \layerF(0)$.

We now assume that the claim holds for $\layer \ge 0$ and prove the claim for $\layer+1$.
We fix some node $\node \in \layerF(\layer+1)$.
Let $\node_p \in \layerF(\layer)$ be the parent of $\node$.
We will apply the Bound Proof Principle from Section~\ref{sec:upper-bound} in order to show that
every trace $\trace$ with $\initBound(\trace) \le N$ enters $\vassF(\node)$ at most $O(N^{\pot(\node)})$ times.

We first need to prepare for the definition of the witness function.
By Property~\ref{enum:inv-prop-one}) of Proposition~\ref{prop:inv-set-properties} we have
$\rankCoeff_{\node_p}^T\update + \offsets_\layer(\state_2) - \offsets_\layer(\state_1) \le 0$ for all $\state_1 \xrightarrow{\update} \state_2 \in \transitions(\vassF(\node_p))$ (*).
Further, by Property~\ref{enum:inv-prop-two}) of Proposition~\ref{prop:inv-set-properties} we have for every strict ancestor $\node' \in \layerF(i)$ of $\node_p$ that
$\rankCoeff_{\node'}^T\update + \offsets_i(\state_2) - \offsets_i(\state_1) = 0$ for all $\state_1 \xrightarrow{\update} \state_2 \in \transitions(\vassF(\node_p))$ (\#).
Now we can choose coefficients $\lambda_{\node'} \in \mathbb{Z}$, for every strict ancestor $\node'$ of $\node_p$, and a coefficient $\lambda_{\node_p} \in \mathbb{Z}$ with $\lambda_{\node_p} > 0$ such that the vector $\rankCoeff^\circ = \lambda_{\node_p} \rankCoeff_{\node_p} + \sum_{\text{strict ancestor } \node' \text{ of } \node_p} \lambda_{\node'} \rankCoeff_{\node'}$ satisfies the equation $\pot(\rankCoeff^\circ) + \pot(\node_p) = \pot(\node)$
(the coefficients $\lambda_{\node'}$ and $\lambda_{\node_p}$ can be chosen according to the elementary column operations in the definition of $\pot(\node)$).
Because of Property~\ref{enum:inv-prop-six} of Proposition~\ref{prop:inv-set-properties} we can in fact choose the coefficients $\lambda_{\node'}$ such that $\rankCoeff^\circ = \lambda_{\node_p} \rankCoeff_{\node_p} + \sum_{\text{strict ancestor } \node' \text{ of } \node_p} \lambda_{\node'} \rankCoeff_{\node'} \ge 0$.
We now consider the vector $\offsets^\circ = \lambda_{\node_p} \offsets_{\node_p} + \sum_{0 \le i < \layer} \lambda_{\node'} \offsets_i$.
By (*) and (\#) we get that
${\rankCoeff^\circ}^T\update + \offsets^\circ(\state_2) - \offsets^\circ(\state_1) \le 0$ for all $\state_1 \xrightarrow{\update} \state_2 \in \transitions(\vassF(\node_p))$ (+);
we note that this inequality is strict for all transitions $\transition$ for which inequality (*) is strict, i.e., for all $\transition \in \transitions(\vassF(\node_p))$ with  $\tExp(\transition) = \layer+1$.
Further, the inequality (+) remains valid, if we add the vector $c \cdot \oneVec$ to $\offsets^\circ$ for any $c \in \mathbb{Z}$; hence, we can assume $\offsets^\circ \ge 0$.

We now define the witness function $\coeff: \configs(\vass) \rightarrow \nats$ by setting $\coeff(\state,\val) = {\rankCoeff^\circ}^T \val + \offsets^\circ(\state)$, for $\state \in \states(\vassF(\node_p))$, and $\coeff(\state,\val) = 0$, otherwise.
We note that $\coeff$ is well-defined,
i.e., $\coeff(\state,\val) \ge 0$ for all $(\state,\val) \in \configs(\vass)$, because we have shown above that $\rankCoeff^\circ \ge 0$ and $\offsets^\circ \ge 0$.
By (+) we have already established condition 1) of the Bound Proof Principle.
It remains to establish condition 2).

We observe that only transitions that enter $\vassF(\node_p)$ can increase the value of $\coeff$.
Let $\transition = \state_1 \xrightarrow{\update} \state_2$ be such a transition and let $(\state_1,\val_1) \xrightarrow{\update} (\state_2,\val_2)$ be some step in a trace $\trace$ of $\vass$ with $\initBound(\trace) \le N$.
Then, $p(\state_2,\val_2) - p(\state_1,\val_1) = p(\state_2,\val_2) = {\rankCoeff^\circ}^T \val_2 + \offsets^\circ(\state) \in O(N^{\pot(\rankCoeff^\circ)})$.
Hence, there is an increase certificate $\increase_\transition$ with $\increase_\transition(N) \in O(N^{\pot(\rankCoeff^\circ)})$.
We further note that by induction assumption we have that $\trace$ enters $\vassF(\node_p)$ at most $O(N^{\pot(\node_p)})$ times.
Hence, we have $\tbound_\transition(N) \cdot \increase_\transition(N) = O(N^{\pot(\node_p)}) \cdot O(N^{\pot(\rankCoeff^\circ)}) =  O(N^{\pot(\node_p)+\pot(\rankCoeff^\circ)}) = O(N^{\pot(\node)})$.

We are now ready to apply the Bound Proof Principle from Proposition~\ref{prop:bound-proof-principle}.
We observe that $\max_{(\state,\val) \in \configs(\vass), \norm{\val} \le N} \coeff(\state,\val) \in O(N)$ because $\coeff(\state,\val)$ is a linear expression for all $\state \in \states(\vass)$, and we consider valuations $\val$ with $\norm{\val} \le N$.
From the Bound Proof Principle we now get that $\tbound_\transition(N) \in O(N^{\pot(\node)})$ for all transitions $\transition \in \transitions(\vassF(\node_p))$ with $\tExp(\transition) = \layer+1$.
Finally, we observe that every transition $\transition$ that enters $\vassF(\node)$ either also enters $\vassF(\node_p)$ or belongs to $\vassF(\node_p)$ and we have $\tExp(\transition) = \layer+1$.
This concludes the proof.
\qed
\end{proof}

\section{Proof of Lemma~\ref{lem:var-bound-relation}}

\begin{proof}
We observe that $\vExp(\var) = 1$ resp. $\tExp(\transition) = 1$ for the bounds discovered in the first iteration and $\vExp(\var) > 1$ resp. $\tExp(\transition) > 1$ for variable and transition bounds discovered in later iterations of Algorithm~\ref{alg:algorithm}.
We prove the claim by induction on $\layer$.
The claim holds for $\layer = 1$ because we have $\{ \vExp(\var) \mid \var \in \vars, \vExp(\var) < 1\} = \emptyset$ and thus $\varsum(1) = 1$.

We consider some $\layer \ge 1$ and prove the claim for $\layer +1$.
We will apply the Bound Proof Principle from Section~\ref{sec:upper-bound} in order to show that $\vExp(\var) = \layer+1$ resp. $\tExp(\transition) = \layer+1$ implies
$\vExp(\var) \le \varsum(\layer+1)$ resp. $\tExp(\transition) \le \varsum(\layer+1)$.
Concretely, we will show $\vbound_\var(N) \in O(N^{\varsum(\layer+1)})$  resp. $\tbound_\transition(N) \in O(N^{\varsum(\layer+1)})$ for all variables $\var$ with $\vExp(\var) = \layer+1$ and transitions $\transition$ with $\tExp(\transition) = \layer+1$,
and then get $\vExp(\var) \le \varsum(\layer+1)$ and $\tExp(\transition) \le \varsum(\layer+1)$ from $\vbound_N(\var) \in \Omega(N^{\vExp(\var)})$ and $\tbound_N(\transition) \in \Omega(N^{\tExp(\transition)})$ (Corollary~\ref{cor:lower-bound}) .

We first need to prepare for the definition of the witness function.
We define a vector $q \in \mathbb{Z}^\vars$ by setting
$q(\var) = 0$ for variables $\var$ with $\vExp(\var) \neq \layer+1$, and $q(\var) = \rankCoeff_\node(\var)$ for variables $\var$ with $\vExp(\var) = \layer+1$,
where $\node \in \layerF(\layer)$ is arbitrarily chosen.
We now argue that for every $\node \in \layerF(\layer)$ there are vectors $\rankCoeff_\node^\circ \in \mathbb{Z}^\vars$, $\offsets_\node^\circ \in \mathbb{Z}^{\states(\vass)}$ and a coefficient $\lambda_\node > 0$ such that $\pot(\rankCoeff_\node^\circ) + \pot(\node) \le \varsum(\layer+1)$,
$\rankCoeff_\node^\circ \ge 0$, $\offsets_\node^\circ \ge 0$,
and $(\rankCoeff_\node^\circ + \lambda_\node q)^T\update + \offsets_\node^\circ(\state_2) - \offsets_\node^\circ(\state_1) \le 0$ for all $\state_1 \xrightarrow{\update} \state_2 \in \transitions(\vassF(\node))$ (+).
This is done as follows:
We fix some $\node \in \layerF(\layer)$.
By Property~\ref{enum:inv-prop-one}) of Proposition~\ref{prop:inv-set-properties} we have
$\rankCoeff_\node^T\update + \offsets_\layer(\state_2) - \offsets_\layer(\state_1) \le 0$ for all $\state_1 \xrightarrow{\update} \state_2 \in \transitions(\vassF(\node))$ (*);
we note that this inequality is strict for all transitions $\transition \in \transitions(\vassF(\node))$ with $\tExp(\transition) = \layer+1$.
Further, by Property~\ref{enum:inv-prop-two}) of Proposition~\ref{prop:inv-set-properties} we have for every strict ancestor $\node' \in \layerF(i)$ of $\node_p$ that
$\rankCoeff_{\node'}^T\update + \offsets_i(\state_2) - \offsets_i(\state_1) = 0$ for all $\state_1 \xrightarrow{\update} \state_2 \in \transitions(\vassF(\node))$ (\#).
Now we can choose coefficients $\lambda_{\node'} \in \mathbb{Z}$, for every strict ancestor $\node'$ of $\node_p$, and a coefficient $\lambda_\node \in \mathbb{Z}$ with $\lambda_\node > 0$ such that the vector $\rankCoeff_\node^\circ = \lambda_\node (\rankCoeff_\node - q) + \sum_{\text{strict ancestor } \node' \text{ of } \node_p} \lambda_{\node'} \rankCoeff_{\node'}$ satisfies the equation $\pot(\rankCoeff^\circ) + \pot(\node) =
\pot( \{ \rankCoeff_{\node'} \mid \node' \text{ is a strict ancestor } \text{ of } \node \} \cup \{ \rankCoeff_\node - q\} )$.
By Property~\ref{enum:inv-prop-four} of Proposition~\ref{prop:inv-set-properties} we have that the vector $\rankCoeff_\node - q$ has non-zero coefficients only for variables $\var$ with $\vExp(\var) \le \layer$.
Thus, we obtain $\pot(\rankCoeff^\circ) + \pot(\node) =
\pot( \{ \rankCoeff_{\node'} \mid \node' \text{ is a strict ancestor } \text{ of } \node \} \cup \{ \rankCoeff_\node - q\} ) \le \varsum(\layer+1)$ from Proposition~\ref{prop:pot-estimation}.
Because of Property~\ref{enum:inv-prop-six} of Proposition~\ref{prop:inv-set-properties} we can in fact choose the coefficients $\lambda_{\node'}$ such that $\rankCoeff_\node^\circ = \lambda_\node (\rankCoeff_\node - q) + \sum_{\text{strict ancestor } \node' \text{ of } \node_p} \lambda_{\node'} \rankCoeff_{\node'} \ge 0$.
We now consider the vector $\offsets_\node^\circ = \lambda_\node \offsets_\node + \sum_{0 \le i < \layer} \lambda_{\node'} \offsets_i$.
By (*) and (\#) we get that
$(\rankCoeff_\node^\circ + \lambda_\node q)^T\update + \offsets_\node^\circ(\state_2) - \offsets_\node^\circ(\state_1) \le 0$ for all $\state_1 \xrightarrow{\update} \state_2 \in \transitions(\vassF(\node))$;
we note that this inequality is strict for all transitions $\transition$ for which inequality (*) is strict, i.e., for all $\transition \in \transitions(\vassF(\node))$ with $\tExp(\transition) = \layer+1$.
Further, the inequality remains valid, if we add the vector $c \cdot \oneVec$ to $\offsets_\node^\circ$ for any $c \in \mathbb{Z}$; hence, we can assume $\offsets_\node^\circ \ge 0$.

We are now ready to define the witness function $\coeff: \configs(\vass) \rightarrow \nats$.
Let $\lambda > 0$ be the least common multiple of the coefficients $\lambda_\node$ for $\node \in \layerF(\layer)$.
We define $\coeff$ by setting $\coeff(\state,\val) = (\frac{\lambda}{\lambda_\node}\rankCoeff_\node^\circ+\lambda q)^T \val + \frac{\lambda}{\lambda_\node}\offsets_\node^\circ(\state)$, if there is a node $\node \in \layerF(\layer)$ with $\state \in \states(\vassF(\node))$, and $\coeff(\state,\val) = \lambda q^T\val$, otherwise.
We note that $\coeff$ is well-defined,
i.e.,  $\coeff(\state,\val) \ge 0$ for all $(\state,\val) \in \configs(\vass)$, because of $\lambda > 0$, $q \ge 0$ and because of $\rankCoeff_\node^\circ \ge 0$, $ \offsets_\node^\circ \ge 0$ for all $\node \in \layerF(\layer)$.
Let $\unboundedTransitions = \bigcup_{\node \in \layerF(\layer)} \transitions(\vassF(\node))$ be the transitions associated to the nodes in layer $\layer$ of the tree.
By (+) we have already established condition 1) of the Bound Proof Principle for all $\transition \in \unboundedTransitions$.
It remains to establish condition 2).
We will argue that for every $\transition \in \transitions(\vass) \setminus \unboundedTransitions$ we can define increase certificates $\increase_\transition(N)$ such that $\tbound_\transition(N) \cdot \increase_\transition(N) \le O(N^{\varsum(\layer+1)})$.

Let $\state_1 \xrightarrow{\update} \state_2 = \transition \in \transitions(\vass) \setminus \unboundedTransitions$ be a transition and let $(\state_1,\val_1) \xrightarrow{\update} (\state_2,\val_2)$ be a step in a trace $\trace$ of $\vass$ with $\initBound(\trace) \le N$.
We note that $\val_2 = \val_1 + \update$.
We proceed by a case distinction on whether $\transition$ enters some $\vassF(\node)$ for $\node \in \layerF(\layer)$.

In case $\transition$ does not enter $\vassF(\node)$ for any $\node \in \layerF(\layer)$, we have that there is no node $\node \in \layerF(\layer)$ with $\state_2 \in \states(\vassF(\node))$.
Hence, we have $\coeff(\state_2,\val_2) = q^T \val_2$.
We get that $\coeff(\state_2,\val_2) - \coeff(\state_1,\val_1) \le \lambda q^T \val_2 - \lambda q^T \val_1 = \lambda q^T\update \in O(1)$ and we can set $\increase_\transition(N)$ to a constant function.
Because of $\transition \in \transitions(\vass) \setminus \unboundedTransitions$, we have $\tExp(\transition) = \layer'$
for some layer $\layer' \le \layer$.
By induction assumption we have
$\tExp(\transition) \le \varsum(\layer')$.
Because $\varsum$ is a monotone function, we have $\varsum(\layer') \le \varsum(\layer) \le \varsum(\layer+1)$.
Hence, we get $\tExp(\transition) \le \varsum(\layer+1)$.
With $\tbound_\transition(N) \in O(N^{\tExp(\transition)})$ (by Theorem~\ref{thm:upper-bound}) we now get that $\tbound_\transition(N) \cdot \increase_\transition(N) \le O(N^{\tExp(\transition)}) \cdot O(1) = O(N^{\varsum(\layer+1)})$.

In case $\transition$ does enter some $\vassF(\node)$ with $\node \in \layerF(\layer)$, we have
$\coeff(\state_2,\val_2) - \coeff(\state_1,\val_1) \le
(\frac{\lambda}{\lambda_\node} \rankCoeff_\node^\circ+ \lambda q)^T\val_2 + \frac{\lambda}{\lambda_\node} \offsets_\node^\circ(\state_2) - \lambda q^T\val_1 = \frac{\lambda}{\lambda_\node} {\rankCoeff_\node^\circ}^T\val_2 + \lambda q^T\update + \frac{\lambda}{\lambda_\node} \offsets_\node^\circ(\state_2) \in O(N^{\pot(\rankCoeff_\node^\circ)})$.
With $\tbound_\transition(N) \le O(N^{\pot(\node)})$ (by Lemma~\ref{lem:entering-property}) we get $\tbound_\transition(N) \cdot \increase_\transition(N) \le O(N^{\pot(\node)}) \cdot O(N^{\pot(\rankCoeff_\node^\circ)}) = O(N^{\pot(\node) + \pot(\rankCoeff_\node^\circ)}) \le O(N^{\varsum(\layer+1)})$.

We are now ready to apply the Bound Proof Principle from Proposition~\ref{prop:bound-proof-principle}.
We observe that $\max_{(\state,\val) \in \configs(\vass), \norm{\val} \le N} \coeff(\state,\val) \in O(N)$ because $\coeff(\state,\val)$ is a linear expression for all $\state \in \states(\vass)$, and we consider valuations $\val$ with $\norm{\val} \le N$.
From the Bound Proof Principle we now get that $\tbound_\transition(N) \in O(N^{\varsum(\layer+1)})$ for all transitions $\transition \in \transitions(\vassF(\node))$ with $\tExp(\transition) = \layer+1$.
Next, we argue that we can also deduce the desired variable bounds.
We consider a variable $\var \in \vars$ with $\vExp(\var) = \layer+1$.
For all $(\state,\val) \in \configs(\vass)$ we have $\coeff(\state,\val) \ge q^T\val \ge q(\var) \cdot \val(\var) \ge \val(\var)$ .
Hence, we get from the Bound Proof Principle that
$\vbound_N(\var) \in O(N^{\varsum(\layer+1)})$ for all variables $\var$ with $\vExp(\var) = \layer+1$.
\qed
\end{proof}

\section{Proof of Lemma~\ref{lem:vars-bounded}}

\begin{proof}
We prove the claim by induction on $\layer$.

For $\layer = 0$, we have $\{ \vExp(\var) \mid \var \in \vars, \vExp(\var) < 0\} = \emptyset$ and $\varsum(\layer) = 1 = 2^0$.
Thus, the claim holds.

We assume the claim holds for some $\layer \ge 0$ and show the claim for some $\layer+1$.
Let $m$ be the number of variables $\var \in \vars$ such that $\vExp(\var) < \layer$.
By induction assumption we have $\varsum(\layer) \le 2^m$.

We now consider a variable $\var$ with $\vExp(\var) = \layer$.
From Lemma~\ref{lem:var-bound-relation} we have that $\vExp(\var) \le \varsum(\layer) \le 2^m$.

There are $(k-m)$ variables $\var$ such that $\vExp(\var) = \layer$.
Hence, we have\\
$\varsum(\layer+1) = \sum_{\var \in \vars, \vExp(\var) < \layer+1}  \vExp(\var) =$\\
$\sum_{\var \in \vars, \vExp(\var) < \layer} \vExp(\var) + \sum_{\var \in \vars, \vExp(\var) = \layer} \vExp(\var) =$\\
$\varsum(\layer) + (k-m) \vExp(\var) \le$\\
$2^m + (k-m)2^m = (k-m+1)2^m \le 2^{k-m}2^m= 2^k$.
\qed
\end{proof}

\section{Proof of Theorem~\ref{thm:exponential}}

\begin{proof}
  By assumption there are no $\transition \in \transitions(\vass)$, $\var \in \vars$ with $\layer < \tExp(\transition) + \vExp(\var) < \nobound$  (*).

  We now will argue that the cycles $\cycle(\node)$ and sets $U = \{ \var \in \vars \mid \vExp(\var) \le \layer\}$, $W = \{ \var \in \vars \mid \vExp(\var) > \layer\}$ satisfy the requirements of Lemma~\ref{lem:exponential}.
  The claim then directly follows from the application of Lemma~\ref{lem:exponential}.

  By Lemma~\ref{lem:constraint-system-property-two} we have
  $\sum_{\node \in \layerF(\layer)} \valueSum(\cycle(\node))(\var) \ge 1$ for variables $\var \in \vars$ with $\vExp(\var) > \layer$.
  It remains to show $\valueSum(\cycle(\node))(\var) \ge 0$ for variables $\var \in \vars$ with $\vExp(\var) \le \layer$ and nodes $\node \in \layerF(\layer)$.

  Let $\node \in \layerF(\layer)$ be a node in layer $\layer$ and let $\var \in \vars$ be a variable with $\vExp(\var) \le  \layer$.
  Let $\node' \in \layerF(\layer - \vExp(\var))$ be the unique ancestor of $\node$ in layer $\layer - \vExp(\var)$.
  We show that $\vassF(\node) = \vassF(\node')$.
  Assume otherwise.
  Then there is a transition $\transition \in \transitions(\vassF(\node')) \setminus \transitions(\vassF(\node))$ with $\layer - \vExp(\var) < \tExp(\transition)$.
  However, this contradicts the assumption (*).
  Hence we get that $\vassF(\node) = \vassF(\node')$.
  Thus, $\node$ is the sole descendant of $\node'$ in layer $\layer$.
  We can then deduce $\valueSum(\cycle(\node))(\var) \ge 0$ from Lemma~\ref{lem:constraint-system-property-one}.
  \qed
\end{proof}

\end{document}